\documentclass[12pt,draftcls,onecolumn]{IEEEtran}
\usepackage{amsmath,amsfonts,amssymb}
\usepackage{graphicx}
\usepackage{subfigure}
\usepackage{cite}

\newtheorem{theorem}{Theorem}
\newtheorem{definition}{Definition}
\newtheorem{lemma}{Lemma}
\newtheorem{corollary}{Corollary}
\newtheorem{proposition}{Proposition}
\newtheorem{example}{Example}

\newtheorem{implication}{Implication}

\newcommand{\set}[1]{\mathcal{#1}}
\newcommand{\defined}{\triangleq}
\newcommand{\Real}{{\mathbb{R}}}

\newcommand{\field}{{\mathbb F}_q}
\newcommand{\graph}{\set{G}}
\newcommand{\nodes}{\set{P}}
\newcommand{\edges}{\set{E}}
\newcommand{\sessions}{\set{S}}
\newcommand{\sessionsedges}{\set{F}} 

\newcommand{\tail}[1]{\mathrm{tail}(#1)}
\newcommand{\head}[1]{\mathrm{head}(#1)}
\newcommand{\networkCoding}{{\Phi}}
\newcommand{\networkcoding}{{\phi}}
\newcommand{\sourceLocation}{O}
\newcommand{\destinationLocation}{\set{D}} 
\newcommand{\multicastRequirement}{M}
\newcommand{\inputRate}{{\lambda}}
\newcommand{\inputRV}{{U}}
\newcommand{\edgeRate}{{\omega}}
\newcommand{\edgeRV}{{U}}
\newcommand{\normal}{r}
\newcommand{\eht}[2]{#1 \to #2}

\newcommand{\supp}{\Omega}
\DeclareMathOperator{\con}{\overline{con}}
\DeclareMathOperator{\multicastProblem}{{\mathsf{T}}}
\DeclareMathOperator{\indep}{\perp}

\DeclareMathOperator{\kernel}{ker}

\newcommand{\nle}[1]{\stackrel{#1}{\le}}
\newcommand{\nge}[1]{\stackrel{#1}{\ge}}
\newcommand{\nequal}[1]{\stackrel{#1}{=}}

\newcommand{\groundset}{\set{L}}
\newcommand{\espace}{{\cal H}}
\def\myalpha{{{\cal A}}}
\def\mybeta{{\cal B}}
\def\sRV{{S}}
\def\edgeAlphabet{{\cal U}}   
\def\inputRate{{\lambda}}

\newcommand{\N}{\set{N}} 

\newcommand{\where}{:} 
\newcommand{\codeclass}{\set{C}}
\newcommand{\abelian}{\mathrm{ab}}


\title{Dualities Between Entropy Functions and Network Codes}
\author{
\authorblockN{Terence Chan$^{1}$ and Alex Grant}\\
\authorblockA{Institute for Telecommunications Research\\
University of South Australia, Australia \\
{\tt \{terence.chan, alex.grant\}@unisa.edu.au }}
}

\begin{document}
\maketitle


\begin{abstract}
  Characterization of the set of entropy functions $\Gamma^*$ is an
  important open problem in information theory. The region $\Gamma^*$
  is central to the theory of information inequalities, and as such
  could be regarded as a key to the basic laws of information theory.
  Characterization of $\Gamma^*$ has several important
  consequences. In probability theory, it would provide a solution for
  the implication problem of conditional independence. In
  communications networks, the capacity region of multi-source network
  coding is given in terms of $\Gamma^*$. More broadly, determination
  of $\Gamma^*$ would have an impact on converse theorems for
  multi-terminal problems in information theory.  This paper provides
  several new dualities between entropy functions and network
  codes. Given a function $g\geq 0$ defined on all proper subsets of
  $N$ random variables, we provide a construction for a network
  multicast problem which is ''solvable'' if and only if $g$ is the
  entropy function of a set of quasi-uniform random variables. The
  underlying network topology is fixed and the multicast problem
  depends on $g$ only through link capacities and source rates. A
  corresponding duality is developed for linear networks codes, where
  the constructed multicast problem is linearly solvable if and only
  if $g$ is linear group characterizable. Relaxing the requirement
  that the domain of $g$ be subsets of random variables, we obtain a
  similar duality between polymatroids and the linear programming
  bound.  These duality results provide an alternative proof of the
  insufficiency of linear (and abelian) network codes, and demonstrate
  the utility of non-Shannon inequalities to tighten outer bounds on
  network coding capacity regions.
 \end{abstract}
\footnotetext[1]{Terence Chan is also with the Department of Computer
  Science, University of Regina.}

\newpage

\section{Introduction}

Information inequalities are one of the central tools of information
theory. An information inequality is a relation between information
measures such as entropy and mutual information that holds regardless
of the specific choice of joint probability distribution on the
underlying random variables, see \cite[Chapters
12--14]{Yeung02first}. Converse proofs involving chains of information
inequalities are ubiquitous in the literature, extending back to
Shannon. It is somewhat frustrating therefore, that a characterization
of the complete set of information inequalities is lacking. Until the
appearance of the Zhang-Yeung
inequality~\cite{Zhang.Yeung98characterization}, the only known
inequalities were the so-called Shannon, or basic inequalities, being
consequences of the non-negativity of conditional mutual information
(which is a special case of non-negativity of information divergence). Starting with
\cite{Zhang.Yeung97nonshannon}, large classes of conditional
non-Shannon inequalities (e.g. contingent on imposition of certain
Markov constraints) have been
found~\cite{Yeung.Zhang01class,Yeung.Zhang01classisit,Sason02identification,Matus06piecewise}.
A countably infinite class of unconstrained inequalities was reported
in \cite{Makarychev02new}, indexed by the number of random variables
$N$ involved (one inequality for each $N$).  More recently, additional
unconstrained non-Shannon inequalities have been
found~\cite{Dougherty.Freiling.ea06six}. Another countably infinite
class of unconditional inequalities was recently found in
\cite{Matus07infinitely}. This class differs from
\cite{Makarychev02new}, in that a countably infinite number of
inequalities were found for any fixed number of $N\geq 4$ random
variables. As we shall see later, this result has profound
implications.

An intimately related concept is the set of entropy functions
$\Gamma^*$. Let $\espace[\groundset]$ be a subset of a $2^N$
dimensional euclidean space. Each coordinate of this space will be
indexed by a subset of a set $\groundset$ with $N$ elements. Points
$h\in\espace[\groundset]$ can be regarded as functions, mapping from
the set of all subsets of $\groundset$ onto $\Real$ with
$h(\emptyset)=0$.  Points in $\espace[\groundset]$ belong to
$\Gamma^*$ if they correspond to a consistent choice of joint
entropies for a set $\groundset=\{X_1,X_2,\dots,X_N\}$ of $N$ random
variables.  Members of $\Gamma^*$ are called \emph{entropic}, and
members of the closure of $\Gamma^*$, denoted by $\bar{\Gamma}^*$, are
called \emph{almost entropic}.

Characterization of $\bar{\Gamma}^*$ is equivalent to determination of
the set of all possible information inequalities~\cite[Section
12.3]{Yeung02first}. This characterization is lacking for $N>3$. In
contrast, we do know the set $\Gamma\supset\Gamma^*$ corresponding to
the basic inequalities. This set contains some functions that obey the
basic inequalities, but are not entropy functions and do not
correspond to any joint distribution on $N$ random variables. The
basic inequalities are equivalent to the polymatroid axioms, and hence
$\Gamma$ is simply the set of polymatroids, implying a polyhedral
structure.

Characterization of $\Gamma^*$ is an important open problem. It gives bounds for
source coding problems \cite{Yeung.Zhang99distributed}. As shown
in~\cite{Yeung02first}, it would resolve the implication problem of
conditional independence (determination of all additional conditional
independence relations implied by a given set of conditional
independence relationships). In other fields, information inequalities
are also closely linked to group theory~\cite{Chan.Yeung02relation}
and the theory of Kolmogorov
complexity~\cite{Hammer00inequalities,Romashchenko00combinatorial}.
The focus in this paper is however on the link between entropy
functions and the capacity region of multi-source network coding.

The prevailing approach to data transport in communications networks
is based on routing, in which intermediate nodes duplicate and forward
packets towards their final destination. Although such a
store-and-forward scheme is simple to implement, it does not guarantee
efficient utilization of available transmission capacity.  The network
coding approach introduced
in~\cite{Ahlswede.Cai.ea00network,Li.Yeung.ea03linear} generalizes
routing by allowing intermediate nodes to forward packets that are
coded combinations of all received data packets. This seemingly simple
change in approach yields many benefits. Not only can network coding
increase throughput in multicast scenarios, it can also provide
robustness to link failure \cite{Dana.Gowaikar.ea05capacity}, wiretap
security~\cite{Cai.Yeung02secure}, and minimal transmission
cost~\cite{Lun.Ratanakar.ea05minimum-cost}. Naturally, these
advantages are obtained at the expense of increased node complexity.

One fundamental problem in network coding is to understand the
capacity region and the classes of codes that achieve capacity. In the
single session multicast scenario, the problem is well understood. In
particular, the capacity region is characterized by max-flow/min-cut
bounds and linear network codes are sufficient to achieve maximal
throughput
\cite{Li.Yeung.ea03linear,Dougherty.Freiling.ea05insufficiency}.

Significant practical and theoretical complications arise in more
general multicast scenarios, involving more than one session.  It was
recently proved that linear network codes are not sufficient for the
multi-source problem~\cite{Dougherty.Freiling.ea05insufficiency}.
Furthermore, the network coding capacity region is unknown. In fact,
there are only a few tools in the literature for study the capacity
region.

One powerful theoretical tool bounds the capacity region by the
intersection of a set of hyperplanes (specified by the network
topology and connection requirement) and the set of entropy functions
$\Gamma^*$ (inner bound), or its closure $\bar{\Gamma}^*$ (outer
bound)~\cite{Song.Yeung.ea03zero-error,Yeung02first,Yeung.Li.ea06network}.
Recently, these bounds have been tightened to obtain an exact
expression for the capacity region, again in terms of
$\Gamma^*$~\cite{Yan.Yeung.Zhang07capacity}.  Unfortunately, the
capacity region, or even the bounds cannot be computed in practice,
due to the lack of an explicit characterization of the set of entropy
functions for more than three random variables. One way to resolve
this difficulty is via relaxation of the bound, replacing the set of
entropy functions with the set of polymatroids $\Gamma$. The resulting
``linear programming'' bound can be quite loose. Recent
work~\cite{Dougherty.Freiling.ea07matroids} based on matroid theory
showed that application of the Zhang-Yeung
inequality~\cite{Zhang.Yeung98characterization} yields a
tighter bound for the capacity region (by obtaining a better outer
bound for the set of entropy functions).

The main results of this paper are new dualities between non-negative
functions $g\in\espace[\groundset]$ and network codes. These duality
results are based on the construction of a special network multicast
problem from functions $g$.  The underlying network topology is fixed
and the multicast problem depends on $g$ only through the assignment
of link capacities and source rates.

Three main kinds of duality are considered, corresponding to different
restrictions on $g$ and different kinds of network codes.  First, we
show in Theorem~\ref{thm:firstDuality} that the constructed multicast
problem is solvable (i.e. the constructed source rates and link
capacities are in the capacity region) if and only if $g$ is the
entropy function of a set of quasi-uniform random variables. This
duality is extended in Theorem~ \ref{thm:FirstDualityExtension} to
show that the multicast problem is asymptotically solvable with
$\epsilon$ error if and only if $h$ is almost entropic.

The second duality restricts attention to linear network codes. We
show that the multicast problem is linearly solvable if and only if
$g$ is linear group characterizable (i.e.  $g$ is an entropy function
for random variables generated by vector spaces). A corresponding
limiting form of this duality is also provided.

Finally, by relaxing the requirement that the domain of $g$ be subsets
of random variables, we obtain a duality between polymatroids and the
linear programming bound.

These duality results yield several immediate implications. In
particular, we provide an alternative proof to
\cite{Dougherty.Freiling.ea05insufficiency,Dougherty.Freiling.ea07matroids}
for the insufficiency of linear (and abelian) network codes, and
demonstrate the utility of non-Shannon inequalities to tighten outer
bounds on network coding capacity regions.

The paper is organized in the following way.  Section
\ref{sec:networks} introduces some fundamentals of network
coding. Section \ref{sec:algebraic} focuses on network codes with
algebraic structure, and random variables generated by groups with a
variety of algebraic structures. We establish a relation between
linear network codes and random variables generated by vector spaces
and generalize this idea to define the concept of a group network
code. A central theme of the paper is the trade-off between source
rate and link capacity using network coding, i.e. determination of the
network coding capacity region.  Section \ref{sec:tradeoff} introduces
the definitions for admissibility and achievability in the network
coding context. Section \ref{sec:bounds} introduces the concept of
pseudo-variables, which generalize random variables in such a way that
allows a notational unification of the linear programming bound with
that of~\cite{Song.Yeung.ea03zero-error}.

Section \ref{sec:result} proves the duality results, Theorems
\ref{thm:firstDuality} -- \ref{thm:LPbd}. These results rely on the
construction in Section \ref{sec:construction} of a special network
and multicast problem from a function $g$.
Section~\ref{sec:firstDuality} gives the duality between entropic
functions and solvable multicast problems. Section
\ref{sec:secondDuality} provides the corresponding duality for
linearly solvable multicast problems.  These duality results are
extended in Section \ref{sec:thirdDuality} to give a similar link between
polymatroids and the linear programming bound, i.e. a function $g$ is
a polymatroid if and only if the constructed source rates and link
capacities satisfy the bound. This result relies heavily on the notion
of pseudo-variables introduced in Section \ref{sec:bounds}, and in
particular on extension and adhesion of sets of pseudo-variables,
discussed in Appendix \ref{app:LPbd}. Finally, in Section
\ref{sec:fourthDuality} we give a one-way relation between the LP
bound for linear codes, and polymatroids which also satisfy the
Ingleton inequality.

Section \ref{sec:applicationA} explores the implications of our
results, which include the insufficiency of linear or even (abelian)
group network codes, and the necessity for non-Shannon inequalities
for determination of the network coding capacity region.

\emph{Notation:} For a set $\set{A}$, the power set
$2^\set{A}=\{\set{B}\where\set{B}\subseteq\set{A}\}$ denotes the set
of all subsets of $\set{A}$. Given a set of $|\set{A}|$ variables
$\{X_a, a\in\set{A}\}$, and a subset $\set{C}\subseteq\set{A}$, the
subscript $X_{\set{C}}$ shall mean $\{X_c:c\in\set{C}\}$. In contrast,
the notation $Y_{[\set{B}]}$ will be used to index a single variable
out of a set of $2^{|\set{A}|}$ variables $\{Y_{[\set{B}]} :
\set{B}\in2^{\set{A}}\}$. Other notation will be introduced as
necessary throughout the paper.

\section{Networks, Codes and Capacity}\label{sec:networks}

A directed acyclic graph $\graph = (\nodes, \edges)$ is commonly used
as a simplified model of a communication network. The nodes
$u\in\nodes$ and directed edges $e=(\tail{e},\head{e})\in\edges$
respectively model communication nodes and directed, error-free
point-to-point communication links. The terms graph and network will
be used interchangeably. For edges $e,f\in\edges$, write $\eht{f}{e}$
as shorthand for $\head{f}=\tail{e}$.  Similarly, for an edge
$f\in\edges$ and a node $u\in\nodes$, the notations $\eht{f}{u}$ and
$u\rightarrow f$ respectively denote $\head{f}=u$ and $\tail{f}=u$.
So far we have only specified the basic network topology. The
communication problem is specified via imposition of a connection
requirement.

\begin{definition}[Connection Requirement]
  For any network $\graph$, a \emph{connection requirement}
  $\multicastRequirement=(\sessions, \sourceLocation,
  \destinationLocation)$ is specified by three components representing
  the sessions, originating nodes and destination nodes as follows.
  $\sessions$ is an index set of independent multicast sessions, each
  of which is a collection, or stream of data packets to be multicast
  to a prescribed set of destination nodes.
  $\sourceLocation:\sessions\mapsto\nodes$ is a source-location
  mapping, where $\sourceLocation(s)$ is the originating node for
  multicast session $s$.  $\destinationLocation:\sessions\mapsto
  2^{\nodes}$ is a receiver-location mapping, where
  $\destinationLocation(s)\subseteq\nodes$ is the set of nodes
  requiring the data of session $s$.
\end{definition}

It should be noted that there is \emph{no specified rate
  requirement}. The connection requirement differs from the usual
concept of multicast requirement in that it only specifies
\emph{which} nodes require data from which other nodes, and not any
particular desired information rate.

Given a connection requirement $\multicastRequirement$, the goal of a
network code is to efficiently multicast data for session $s$
originating at node $\sourceLocation(s)$ to all receivers in the set
$\destinationLocation(s)$. Nodes are assumed to have sufficient
computing power to implement any desired network coding scheme.

Let $\sessionsedges=\sessions\cup\edges$.  For a network $\graph$ and
connection requirement $\multicastRequirement$, a network code is
specified by a set of source and edge alphabets
$ \left\{\edgeAlphabet_f, f\in\sessionsedges\right\}$ and
a set of local coding functions
\begin{equation*}
\networkCoding \triangleq \left\{
  \networkcoding_e:\prod_{f\in\sessionsedges:f\rightarrow e}
  \edgeAlphabet_f\mapsto
  \edgeAlphabet_e \where e \in \edges \right\}
\end{equation*}
where for ease of notation, $\eht{s}{e} $ indicates
$\sourceLocation(s) \rightarrow e$, and
$f\in\sessionsedges:f\rightarrow e$ means any source or edge incident
to edge $e$.

Data transmission takes place as follows.  Session $s\in\sessions$
generates a source symbol $\inputRV_s$, which is assumed to be
independent of other sessions and uniformly distributed over
$\edgeAlphabet_s$.  The link symbol transmitted along $e\in\edges$ is
$\edgeRV_e = \networkcoding_e(\edgeRV_f \where f\in\sessionsedges,
\eht{f}{e})$. In other words, the symbol transmitted along an outgoing
link of a node is a function of the available sources and incident
link symbols.

We will refer to a network code by $\networkCoding$, with the set of
alphabets $\left\{\edgeAlphabet_f, f\in\sessionsedges\right\}$
implicitly defined. Since the input and link symbols are random
variables, we can also refer to the code by the set of random
variables $\edgeRV_\sessionsedges$, where their joint distribution is
implied by $\networkCoding$.  Clearly,
\begin{align*}
H(\inputRV_\sessions) &= \sum_{s\in\sessions} H(\inputRV_s)=
\sum_{s\in\sessions} \log |\edgeAlphabet_s| \quad\text{and}\\
H(\edgeRV_e) &\le
\log |\edgeAlphabet_e|.
\end{align*}

For a given network code $\networkCoding$ designed for a network
$\graph$ with connection requirement $\multicastRequirement$, the
error probability $P_e(\networkCoding)$ is defined as the probability
that at least one receiver $d\in\bigcup_{s\in\sessions} \destinationLocation(s)$ fails to
correctly reconstruct one or more of its requested source messages
$\{\inputRV_s \where \destinationLocation(s)=d\}$. A \emph{zero-error} network code is one for
which $P_e(\networkCoding)=0$, implying that the source symbols $\edgeRV_s$
are deterministic functions of the corresponding receiver-incident
edge symbols.

\subsection{Algebraic network codes}\label{sec:algebraic}
The above formulation imposes no restriction on the choice of
alphabets and local coding functions. However, in practice, it may be
preferable to impose algebraic structure to reduce the complexity of
encoding and decoding. The overwhelming majority of codes studied for
the point-to-point channel are in fact linear, and linear codes are
also of particular interest in the network coding context.

\begin{definition}[Linear Network Code]
  A network code $\networkCoding$ is \emph{linear} over a finite field
  $\field$ if all source and link alphabets $\edgeAlphabet_f$ are
  vector spaces over some finite field $\field$, and all the local
  encoding functions $\networkcoding_e$ are linear.
\end{definition}
Clearly, for a linear network code, each source alphabet is a vector
subspace and the symbol transmitted along link $e\in\edges$ is a
linear function of the inputs $\edgeRV_\sessions$.  As will be stated
in Proposition~\ref{prop:linearCodeAndLinearChar}, the set of all the
kernels of these linear functions associated with all the links can be
used to ``construct'' the set of source and link random variables
defining the network code.  To understand this relationship, we first
review the construction of random variables from a finite group and
its groups~\cite{Chan.Yeung02relation}.

\begin{definition}[Construction of random variables from subgroups]
  Suppose that $U$ is a random variable uniformly distributed over a
  group $G$. For any subgroup $G_i$, the set of left cosets of $G_i$
  forms a partition in $G$. Let $\set{U}_i$ be an index set of the
  cosets of $G_i$ in $G$. We can define a random variable $U_i$ as a
  function of $U$ such that $U_i$ is the index of the coset of $G_i$
  that contains $U$, or simply that $U_i$ is the coset of $G_i$ that
  contains $U$. The resulting random variable is said to be
  \emph{constructed} from $G$ and $G_i$.
\end{definition}

\begin{definition}[Group characterizable random variables]
\label{df:groupCharacterizable}
A set of random variables $\{U_1, \dots, U_N\}$ (and its induced
entropy function) is called \emph{group characterizable} if it is
equivalent\footnote{Two sets of random variables $\{U_1,\cdots,U_N\}$
  and $\{V_1,\cdots,V_N\}$ with probability distributions $P_U$ and
  $P_V$ respectively are ``equivalent'' if for each $i=1,\cdots, N$,
  there is a one-to-one mapping $\tau_i$ from the support of $U_i$ to
  the support of $V_i$ such that $P_U(U_1,\cdots, U_N) =
  P_V(\tau_1(U_1),\cdots, \tau_N(U_N))$. In this paper, two sets of
  equivalent random variables will be regarded as identical.}  to a
set of random variables constructed from a finite group $G$ and its
subgroups $G_1 ,\cdots, G_N$.

If $G$ is abelian, then $\{U_1 , \cdots, U_N\}$ (and the entropy
function) is called \emph{abelian group characterizable}. If in
addition $G$ and $G_1 ,\cdots, G_N$ are all vector spaces, then the
set of random variables (and the entropy function) is called
\emph{linear group characterizable}.
\end{definition}

Denote the set of group characterizable entropy functions by
$\Gamma^*_{G}\subset\Gamma^*$, the set of abelian group
characterizable functions by $\Gamma^*_\abelian$ and the set of linear
(with respect to a finite field $\field$) group characterizable
functions by $\Gamma^*_{L(q)}$. Then, it is clear that
$\Gamma^*_{L(q)}\subset\Gamma^*_\abelian\subset\Gamma^*_{G}\subset\Gamma^*$.

Random variables constructed from subgroups have been shown to have
many interesting properties. For example, suppose $\{U_1 , \cdots,
U_N\}$ is constructed from a finite group $G$ and its subgroups $G_1
,\cdots, G_N$. Then $ H\left(U_\alpha\right) = \log
|G|/|\bigcap_{i\in\alpha} G_i|$ for any non-empty subset
$\alpha\subseteq\N\defined\{1, 2,\dots,
N\}$\cite{Chan.Yeung02relation}. It was also proved in
\cite{Chan.Yeung02relation} that a linear information inequality is
valid if and only it is satisfied by all group characterizable random
variables. Thus group characterizable random variables have an
interesting role to play in the proof of information inequalities,

Before describing some additional properties of group characterizable
random variables, we will need the concept of quasi-uniform random
variables.
\begin{definition}[Quasi-uniform random variable]
  A discrete finite random variable $U$ defined on a sample space
  $\cal U$ is called \emph{quasi-uniform} if and only if it is
  uniformly distributed over its support $\supp(U)$. In other words,
  the probability distribution of $U$ has the following form:
  \begin{equation*}
    \Pr(U=u) =
    \begin{cases}
      1/| \supp(U)| & \quad \mbox{if } u\in\supp(U) \\
      0 & \quad \mbox{otherwise }
    \end{cases}
  \end{equation*}
  Hence, $H(U) = \log | \supp(U)|$.

  Similarly, a set of random variables $U_1, U_2, \dots, U_N$ (and its
  induced entropy function) is called quasi-uniform if and only if
  every subset of random variables $U_\alpha,
  \alpha\subseteq\{1,2,\dots,N\}$ is quasi-uniform, i.e. $H(U_\alpha)
  = \log |\supp(U_\alpha)|$.
\end{definition}

\begin{lemma}[\cite{Chan.Yeung02relation}, \cite{Chan01combinatorial}]
  Random variables induced by groups and subgroups are
  quasi-uniform. Hence
  $$\Gamma^*_{L(q)}\subset\Gamma^*_\abelian\subset\Gamma^*_{G}\subset\Gamma^*_{Q}\subset\Gamma^*$$
  where $\Gamma^*_{Q}$ is the set of all quasi-uniform entropy
  functions.
\end{lemma}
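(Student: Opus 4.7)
The plan is to verify the quasi-uniformity claim directly from the coset construction, and then assemble the chain of inclusions from this fact together with the definitions already given.

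First I would fix any non-empty subset $\alpha\subseteq\N$ and describe the joint law of $U_\alpha=(U_i:i\in\alpha)$ explicitly. Each $U_i$ takes values in the set of left cosets of $G_i$ in $G$, and the event $\{U_\alpha=(g_iG_i:i\in\alpha)\}$ is precisely the event $\{U\in \bigcap_{i\in\alpha} g_iG_i\}$. The key group-theoretic observation I would invoke is that any intersection of left cosets $\bigcap_{i\in\alpha} g_iG_i$ is either empty or is itself a left coset of the subgroup $\bigcap_{i\in\alpha}G_i$; in the non-empty case its cardinality is therefore exactly $|\bigcap_{i\in\alpha}G_i|$, independent of which tuple of cosets we chose. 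Since $U$ is uniform on $G$, it follows that every atom of $U_\alpha$ that lies in the support has probability $|\bigcap_{i\in\alpha}G_i|/|G|$, while all other atoms have probability zero. Hence $U_\alpha$ is uniform on its support, and
\begin{equation*}
H(U_\alpha)=\log|\supp(U_\alpha)|=\log\frac{|G|}{\bigl|\bigcap_{i\in\alpha}G_i\bigr|},
\end{equation*}
which recovers the formula from \cite{Chan.Yeung02relation} and confirms that the random variables are quasi-uniform in the sense of the preceding definition.

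Having shown that any group-characterizable random variables are quasi-uniform, the inclusion $\Gamma^*_{G}\subset\Gamma^*_{Q}$ is immediate. The inclusions $\Gamma^*_{L(q)}\subset\Gamma^*_\abelian\subset\Gamma^*_{G}$ were already recorded in the text right after Definition~\ref{df:groupCharacterizable}: they are built into the definitions (linear group characterizable requires vector spaces, which are abelian groups, and abelian group characterizable is a specialization of group characterizable). Finally, $\Gamma^*_{Q}\subset\Gamma^*$ is immediate since by definition a quasi-uniform entropy function is the entropy function of an honest set of random variables.

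I do not expect any real obstacle here; the only non-trivial ingredient is the elementary fact that intersections of left cosets are either empty or cosets of the intersection subgroup, and this is standard. The one place that deserves a careful word is to note that quasi-uniformity is required \emph{jointly} for every subset $\alpha$, not just for each individual $U_i$ — but the argument above is uniform in $\alpha$, so this comes for free.
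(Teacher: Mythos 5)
Your proof is correct. The paper states this lemma without proof, citing \cite{Chan.Yeung02relation} and \cite{Chan01combinatorial}, and your argument --- that a non-empty intersection $\bigcap_{i\in\alpha} g_iG_i$ of left cosets is itself a left coset of $\bigcap_{i\in\alpha}G_i$, so every support atom of $U_\alpha$ has probability $\bigl|\bigcap_{i\in\alpha}G_i\bigr|/|G|$ and hence $H(U_\alpha)=\log\bigl(|G|/\bigl|\bigcap_{i\in\alpha}G_i\bigr|\bigr)$ --- is precisely the standard argument of those references, with the remaining inclusions following from the definitions exactly as you note.
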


\begin{figure}[htbp]
  \begin{center}
    \includegraphics*[scale=0.8]{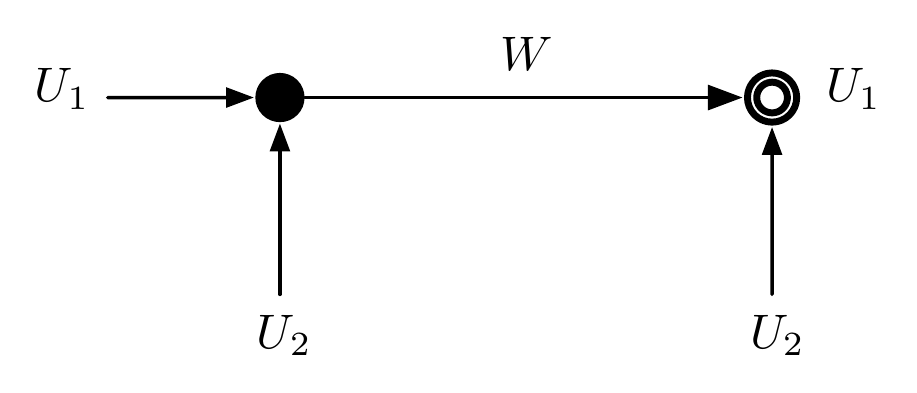}
  \end{center}
  \caption{The side-information network.}\label{fig:sideinfo}
\end{figure}
\begin{lemma}\label{lemm:qucoding}
  With reference to Figure \ref{fig:sideinfo}, consider a simple
  coding problem in which there is a transmitter (indicated by an open
  circle) and a receiver (indicated by a double circle) connected by a
  noiseless point-to-point link. A source $U_1$ is available at
  the transmitter, while correlated side-information $U_2$ is
  available at both transmitter and receiver. The coding problem is to
  encode $U_1,U_2$ into a symbol $W$ defined on the sample space
  $\set{W}$ such that $U_1$ can be constructed perfectly at receiver
  from $W$ and $U_2$.

  Suppose that $\{U_1, U_2\} $ is quasi-uniform. Then one can have a
  zero-error code with rate $\log | \supp(U_1,U_2)|/ | \supp(U_2)| =
  H(U_1|U_2)$, where the code rate is defined as $\log |\set{W}|$.
\end{lemma}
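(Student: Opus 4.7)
The plan is to exploit the quasi-uniform structure to show that, conditioned on any particular value $u_2$ of the side-information, the set of possible values of $U_1$ has exactly the same cardinality, which we can then bijectively index. First I would observe that quasi-uniformity of $U_2$ forces $\Pr(U_2=u_2) = 1/|\supp(U_2)|$ for every $u_2\in\supp(U_2)$, while quasi-uniformity of $(U_1,U_2)$ forces each $(u_1,u_2)\in\supp(U_1,U_2)$ to have probability $1/|\supp(U_1,U_2)|$. Summing over $u_1$ then yields
\begin{equation*}
|\{u_1 \where (u_1,u_2)\in\supp(U_1,U_2)\}| \;=\; \frac{|\supp(U_1,U_2)|}{|\supp(U_2)|} \;\defined\; k,
\end{equation*}
independently of the chosen $u_2\in\supp(U_2)$. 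This uniformity of the conditional support sizes is the only structural fact we need.

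Next I would construct the code. Let $\set{W}=\{1,2,\dots,k\}$, so $\log|\set{W}| = \log k = H(U_1|U_2)$ as required. For each $u_2\in\supp(U_2)$, fix an arbitrary bijection
\begin{equation*}
\pi_{u_2}\where \{u_1 \where (u_1,u_2)\in\supp(U_1,U_2)\} \to \set{W}.
\end{equation*}
The encoder, on input $(u_1,u_2)$, transmits $W = \pi_{u_2}(u_1)$; the decoder, knowing $u_2$ from its side-information and receiving $W$, reconstructs $u_1 = \pi_{u_2}^{-1}(W)$. Since $\pi_{u_2}$ is a bijection on the relevant conditional support, this reconstruction is exact on every $(u_1,u_2)\in\supp(U_1,U_2)$, so the scheme is zero-error.

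There is essentially no hard step: the entire content of the lemma is the observation that quasi-uniformity makes the conditional supports equicardinal, after which the ``code'' is just a family of lookup tables. The only point requiring mild care is checking that $k$, being a ratio of support sizes, is indeed a positive integer (which follows because the conditional supports are nonempty finite sets of common size) and that the resulting $\set{W}$ can therefore be taken to be a finite alphabet of exactly that size, making $\log|\set{W}|=H(U_1|U_2)$ an exact equality rather than merely an inequality.
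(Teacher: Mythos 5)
Your proposal is correct and follows essentially the same route as the paper: quasi-uniformity forces every conditional support $\{u_1 \where (u_1,u_2)\in\supp(U_1,U_2)\}$ to have the common cardinality $|\supp(U_1,U_2)|/|\supp(U_2)|$, and the code simply transmits the index of $u_1$ within that set. Your version merely spells out the probability-summing step that the paper leaves implicit.
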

\begin{proof}
  Since $U_2$ is available to both transmitter and receiver, $U_1$ can
  be reconstructed perfectly if the transmitter only sends the index
  of $u_1$ in the set $\{u_1 : (u_1, u_2) \in \supp(U_1,U_2) \}$ for
  any given $u_2 \in \supp(U_2)$.  By the quasi-uniformity of $\{U_1,
  U_2\} $, the cardinality of the set $\{u_1 : (u_1, u_2) \in
  \supp(U_1,U_2) \}$ is $|\supp(U_1,U_2)|/|\supp(U_2)|$ for any $u_2
  \in \supp(U_2)$. Hence, one can easily construct a zero-error code
  at a rate of $\log | \supp(U_1,U_2)|/ | \supp(U_2)| = H(U_1|U_2)$
  that solves the coding problem.
\end{proof}

If the group and subgroups in question possess additional algebraic
properties, the induced random variables may also satisfy certain
additional properties. One interesting example, proved in
\cite{Chan1998,Chan07group} is given as follows.
\begin{proposition}[Ingleton's inequality]\label{prop:Ingleton}
  Suppose that the set of random variables $\{U_1,\dots, U_N\}$ is
  abelian group characterizable. Let $\{V_1,V_2,V_3,V_4\}\subseteq
  \{U_1,\dots, U_N\}$. Then
\begin{equation}\label{eq:ingleton}
  g (1,2)+g(1,3) + g(1,4)
  + g(2,3) + g(2,4) \geq  g(1) + g (2) +g(3,4)
  + g (1,2,3)+g(1,2,4)
\end{equation}
where $g(\alpha)\defined H(V_\alpha)$.
\end{proposition}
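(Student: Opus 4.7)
The plan is to translate Ingleton's inequality into a multiplicative statement about subgroup orders, and then exploit the fact that in an abelian group the sum of two subgroups is itself a subgroup (so that the product formula $|A+B|\cdot|A\cap B|=|A|\cdot|B|$ is available). Under abelian group characterizability, there exist subgroups $G_1,\dots,G_4$ of a finite abelian group $G$ with $g(\alpha)=\log|G|/|K_\alpha|$, where $K_\alpha\defined\bigcap_{i\in\alpha}G_i$. Substituting into \eqref{eq:ingleton}, the five copies of $\log|G|$ on each side cancel and the inequality reduces to
\[
|K_{12}|\cdot|K_{13}|\cdot|K_{14}|\cdot|K_{23}|\cdot|K_{24}| \;\le\; |K_{1}|\cdot|K_{2}|\cdot|K_{34}|\cdot|K_{123}|\cdot|K_{124}|.
\]

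Next I would apply the product formula to the pairs $(K_{13},K_{23})$ and $(K_{14},K_{24})$, whose pairwise intersections are $K_{123}$ and $K_{124}$ respectively. This rewrites the left side so that the factors $|K_{123}|$ and $|K_{124}|$ can be cancelled from both sides, leaving only
\[
|K_{12}|\cdot|K_{13}+K_{23}|\cdot|K_{14}+K_{24}| \;\le\; |K_1|\cdot|K_2|\cdot|K_{34}|.
\]

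The final step, which I expect to be the main conceptual hurdle, is to bound the two subgroup sums above. Two inclusions should suffice: first, rearranging $K_{13}+K_{14}\subseteq G_1$ and $K_{23}+K_{24}\subseteq G_2$ gives $(K_{13}+K_{23})+(K_{14}+K_{24})\subseteq G_1+G_2$; second, since $K_{13}+K_{23}\subseteq G_3$ and $K_{14}+K_{24}\subseteq G_4$, their intersection lies in $K_{34}$. Applying the product formula a third time to split $|K_{13}+K_{23}|\cdot|K_{14}+K_{24}|$ into the sum and intersection of these sumsets, and once more to rewrite $|G_1+G_2|=|K_1|\cdot|K_2|/|K_{12}|$, will then match the right-hand side exactly. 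The central difficulty, and the reason the abelian hypothesis is essential, is that each step above fails for general groups (the sum $A+B$ is no longer a subgroup, and the product formula degenerates into an inequality whose direction is wrong for this argument); this is consistent with the known failure of Ingleton's inequality for non-abelian group characterizable random variables.
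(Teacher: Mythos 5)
Your proof is correct and complete. The paper itself does not prove Proposition~\ref{prop:Ingleton}; it only cites \cite{Chan1998,Chan07group}, so there is no in-paper argument to compare against, but your route is the standard one and almost certainly the one behind those references: use the paper's quoted fact that $H(U_\alpha)=\log |G|/|\bigcap_{i\in\alpha}G_i|$ to cancel the five $\log|G|$ terms and reduce \eqref{eq:ingleton} to the subgroup-order inequality, then run the abelian analogue of the subspace-dimension proof of Ingleton for representable matroids. Every step checks out: $K_{13}\cap K_{23}=K_{123}$ and $K_{14}\cap K_{24}=K_{124}$, so the product formula lets you cancel $|K_{123}|$ and $|K_{124}|$; the sumsets $A=K_{13}+K_{23}\subseteq G_3$ and $B=K_{14}+K_{24}\subseteq G_4$ are genuine subgroups (abelian hypothesis), giving $A\cap B\subseteq K_{34}$ and $A+B\subseteq G_1+G_2$; and one more application of $|A+B|\,|A\cap B|=|A|\,|B|$ together with $|G_1+G_2|=|K_1|\,|K_2|/|K_{12}|$ yields exactly $|K_{12}|\,|A|\,|B|\le |K_1|\,|K_2|\,|K_{34}|$, which is what remained. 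One small correction to your closing remark: for subgroups $A,B$ of an arbitrary finite group the counting identity $|AB|=|A|\,|B|/|A\cap B|$ still holds exactly; what breaks in the non-abelian case is that $AB$ (e.g.\ $K_{13}K_{23}$) need not be a subgroup, so the formula cannot be iterated on these sumsets and the containment arguments collapse --- consistent with the paper's $PGL(2,13)$ example violating \eqref{eq:ingleton}.
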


\begin{proposition}\label{prop:linearCodeAndLinearChar}
  Suppose that a set of random variables $\{\edgeRV_f,
  f\in\sessionsedges\}$ defines a zero-error linear network
  code.  Then $\{\edgeRV_f, f\in\sessionsedges\}$ is
  linear group characterizable.
\end{proposition}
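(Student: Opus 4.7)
The plan is to exhibit $\{\edgeRV_f\}$ explicitly as a family constructed from a vector space and its subspaces over $\field$. I would take $G \defined \bigoplus_{s \in \sessions} \edgeAlphabet_s$, the direct sum of the source alphabets regarded as a single $\field$-vector space, and let $U$ be uniformly distributed on $G$. Under the natural identification, $U$ is precisely the tuple of independent uniform source symbols $\edgeRV_\sessions$, so this is the ``global'' source object to work with.

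Next, I would argue by induction along a topological ordering of the DAG $\graph$ that every symbol $\edgeRV_f$, for $f \in \sessionsedges$, can be written as $\pi_f(U)$ for some $\field$-linear map $\pi_f: G \to \edgeAlphabet_f$. The base case handles sources $s \in \sessions$ via the coordinate projection. For an edge $e$, the local encoder $\networkcoding_e$ is linear by the definition of linear network code, and each argument of $\networkcoding_e$ is already a linear image of $U$ by the inductive hypothesis, so the composition is again $\field$-linear; acyclicity of $\graph$ ensures the induction terminates. I then set $G_f \defined \kernel(\pi_f)$, which is an $\field$-subspace of $G$.

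To finish, I would appeal to the first isomorphism theorem to put the cosets of $G_f$ in natural bijection with $\image(\pi_f)$. Since $U$ is uniform on $G$, the coset of $G_f$ containing $U$ is uniform on $G/G_f$, and it is mapped by this bijection to $\pi_f(U) = \edgeRV_f$. Applying these bijections jointly across all $f \in \sessionsedges$ produces the one-to-one correspondence required by the ``equivalence'' footnote of Definition~\ref{df:groupCharacterizable}, so $\{\edgeRV_f : f \in \sessionsedges\}$ is equivalent to the random variables constructed from $G$ and the subspaces $\{G_f\}$. Since $G$ and every $G_f$ are $\field$-vector spaces, this is precisely linear group characterizability.

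The only real obstacle is making the inductive claim rigorous: showing that the symbol carried on an interior edge is a linear function of the \emph{original} source tuple $U$, not merely a linear function of its local inputs. This ultimately reduces to closure of $\field$-linear maps under composition and causes no difficulty as long as edges are processed in topological order. It is worth noting that the zero-error hypothesis is not used in this characterization---it governs decodability at the receivers, not the joint law of $\{\edgeRV_f\}$---so the statement in fact holds for every linear network code, zero-error or not.
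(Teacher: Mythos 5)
Your proposal is correct and follows essentially the same route as the paper's proof sketch: form $G$ as the product of the source alphabets, realize each $\edgeRV_f$ as a linear image $\psi_f$ of the global source tuple (which your topological-order induction makes explicit), and take $G_f = \kernel(\psi_f)$ so that cosets of $G_f$ correspond bijectively to the values of $\edgeRV_f$. Your closing observation that the zero-error hypothesis plays no role in this construction is also accurate.
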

\begin{proof}[Proof Sketch]
  Suppose that $\networkCoding=\{\networkcoding_e, e\in\edges\}$ is a
  zero-error linear network code with inputs
  $\edgeRV_s\in\edgeAlphabet_s$ for $s\in\sessions$ and link symbols
  $\edgeRV_e\in\edgeAlphabet_e$ for $e\in\edges$.  We will now
  construct a linear group characterization for the set of source/link
  random variables induced by $\networkCoding$. Let
  \begin{enumerate}
  \item $G$ be the vector space formed by the Cartesian product of
    $\prod_{s\in\sessions}\edgeAlphabet_s$;
  \item $\psi_s: G \mapsto\edgeAlphabet_s$ be a linear
    function such that $\psi_s( \edgeRV_s:s\in\sessions)=
    \edgeRV_s$;
  \item $\psi_e: G \mapsto\edgeAlphabet_e$ be a linear function such
    that $\edgeRV_e = \psi_e(\edgeRV_s:s\in\sessions)$; (\emph{This is
      possible as all local coding functions $\networkcoding_e$ are
      linear})
  \item $G_f$ is the kernel of $\psi_f$, denoted by $\kernel(\psi_f)$, for
    $f\in\sessions\cup\edges$. Hence, $G_f$ is a subspace of $G$.
  \end{enumerate}
  Then it is straightforward to show that for any
  $(\edgeRV_s:s\in\sessions)$ and $f\in\sessionsedges$, the value of
  $\psi_f(\edgeRV_s:s\in\sessions)$ can be uniquely
  determined from the index of the coset of $G_f$ that contains
  $(\edgeRV_s:s\in\sessions)$ and vice versa. In other words, the link
  random variable $\edgeRV_f$ is equivalent to the one induced by the
  subspace $G_f$.
\end{proof}

A natural interpretation of Proposition
\ref{prop:linearCodeAndLinearChar} is that linear network codes are
those codes whose induced source and link random variables can be
characterized by a vector space and its subspaces. Developing this
line of thought more generally, we make the following definition.
\begin{definition}[Group network code]
  A \emph{group network code} is a network code $\{\edgeRV_f,
  f\in\sessionsedges\}$ whose source and link random variables are
  induced by a finite group $G$ with subgroups $G_f,
  f\in\sessionsedges$. Furthermore, a group network code is called
  abelian if $G$ is abelian.
\end{definition}
For a group network code $\networkCoding = \{\edgeRV_f,
f\in\sessionsedges\}$, encoding at intermediate nodes works as
follows. Suppose that the source and link random variables
$\{\edgeRV_f, f\in\sessionsedges\}$ are characterized by a finite
group and its subgroups $G_f$ for $ f\in\sessionsedges$.  For
any $f\in\sessionsedges$, let $\edgeAlphabet_f$ be the index set
for the set of left cosets of $G_f$ in $G$. Each edge $e$ receives
symbols $\{\edgeRV_f \where \eht{f}{e}\}$, which are indexes of cosets $G_f$
in $G$. The symbol $\edgeRV_e$ to be transmitted along edge $e$ is the
index of the left coset $G_e$ that contains the intersection of the
cosets of $G_f$ indexed by $\{\edgeRV_f \where \eht{f}{e}\}$.

In fact, in the special case when the group and all its subgroups are
vector spaces, we can index the coset of $G_e$ as elements in a vector
space such that $\edgeRV_e$ is indeed a linear function of
$\{\edgeRV_f \where \eht{f}{e}\}$.

\begin{example}
  An $R$-module generalizes the concept of vector space,
  where the scalars are a members of a ring $R$, instead of a field. It
  consists of an abelian group $K$, and an operation of left
  multiplication by each element in $R$. In particular, for all $r,s \in
  R$ and $g,h \in K$,
  \begin{align*}
    rg &\in K \\
    (rs)g &= r(sg) \\
    (r+s)g &= rg + sg \\
    r(g+h) &= rg + rh \\
    0g &= 0.
  \end{align*}
  $R-module$ codes have been proposed as generalizations of linear
  network codes \cite{Dougherty.Freiling.ea05insufficiency}.  Messages
  to be transmitted along edges are elements in $K$.  The only
  difference is that local encoding functions must be of the form
 \begin{equation*}
   \edgeRV_e = \sum_{f\in\sessionsedges: f \to e} r_{fe} \edgeRV_f
 \end{equation*}
 where $r_{fe} \in R$. As such, there exists elements $M_{es}\in R$
 such that
 \begin{equation*}
   \edgeRV_e = \sum_{s\in \sessions} M_{es} \edgeRV_s.
 \end{equation*}
 Let $G$ be the $|\sessions|$-fold Cartesian product of $K$. For all
 $e\in\edges$ and $s\in\sessions$, let
 \begin{align*}
   G_e &= \left\{(\edgeRV_s \in K :s\in\sessions) : \sum_{s\in
       \sessions} M_{es} \edgeRV_s = 0 \right\}\\
   G_s &=\left\{(\edgeRV_s\in K :s\in\sessions)
   : \edgeRV_s = 0 \right\}.
 \end{align*}

 Then it is straightforward to show that $G_f$ is an abelian subgroup
 of $G$ for $f\in\sessionsedges$ and that the source and link random
 variables induced by the $R-module$ code is characterized by the
 subgroup $G$ and its subgroups $G_f$, $f\in\sessionsedges$.
\end{example}

\subsection{The source rate-link capacity tradeoff}
\label{sec:tradeoff}
So far, we have only considered networks, and codes designed to meet
particular connection requirements. Typically however, each link has
limited capacity, and a fundamental design consideration is the
tradeoff between supportable network throughput and link
capacities. Of primary interest is determination of the minimal link
capacities $\edgeRate\defined(\edgeRate_e: e \in \edges)$ required to
transmit sources over a network at given rates
$\inputRate\defined(\inputRate_s : s \in \sessions)$ such that all
receivers can reconstruct their desired messages with no, or
arbitrarily small probability of error.

\begin{definition}[Admissible rate-capacity tuple]\label{df:admissible}
  Given a network $\graph=(\nodes,\edges)$ and a connection
  requirement $\multicastRequirement$, a rate-capacity tuple
  $(\inputRate , \edgeRate)$ is \emph{admissible} if there exists a
  zero-error network code $\networkCoding = \{\edgeRV_f,
  f\in\sessions\cup \edges\}$, such that
  \begin{align*}
   H(\edgeRV_e) \le \log |\edgeAlphabet_e| &\le \edgeRate_e,
   \quad\forall e \in \edges,\\
   H(\inputRV_s)= \log |\edgeAlphabet_s| &\ge \inputRate_s,
   \quad\forall s\in \sessions,
  \end{align*}
  where $\edgeRV_e$ is the message symbol transmitted along link $e$
  and $\inputRV_s$ is the input symbol generated at source $s$.
\end{definition}

Coding over long block of symbols often improves the rate of
point-to-point codes. Similarly, increased efficiency may be expected
for network codes operating over a long block of source
symbols. Therefore, we also consider the asymptotic tradeoff between
source rates and link capacities.

\begin{definition}[Asymptotically admissible]
  A rate-capacity tuple $(\inputRate , \edgeRate)$ is
  \emph{asymptotically admissible} if there exists a sequence of
  zero-error network codes
  $\networkCoding^{(n)} = \{\edgeRV_f^{(n)},
  f\in\sessions\cup\edges\}$ and positive normalizing constants
  $\normal(n)$ such that
  \begin{align*}
    \lim_{n\to\infty} \frac{1}{\normal(n)} H\left(\edgeRV_e^{(n)}\right) \le
    \lim_{n\to\infty} \frac{1}{\normal(n)}\log |\edgeAlphabet_e^{(n)}|
    &\le   \edgeRate_e,  \quad\forall e \in \edges,\\
    \lim_{n\to\infty} \frac{1}{\normal(n)} H\left(\inputRV_s^{(n)}\right) =
    \lim_{n\to\infty} \frac{1}{\normal(n)} \log
    |\edgeAlphabet_s^{(n)}| &\ge \inputRate_s, \quad\forall s\in \sessions.
\end{align*}
\end{definition}

The above two definitions consider zero-error network codes. Relaxing
the requirement to allow arbitrarily small error probability prompts
the following definition.
\begin{definition}[Achievable rate-capacity tuple]\label{df:achievable}
  A rate-capacity tuple $(\inputRate , \edgeRate)$ is
  \emph{achievable} if there exists a sequence of network codes
  $\networkCoding^{(n)} \triangleq \{\edgeRV_f^{(n)},
  f\in\sessions\cup\edges\}$ and positive normalizing constants
  $\normal(n)$ such that
  \begin{align*}
    \lim_{n\to\infty} \frac{1}{\normal(n)} H\left(\edgeRV_e^{(n)}\right) \le
    \lim_{n\to\infty} \frac{1}{\normal(n)}\log |\edgeAlphabet_e^{(n)}|
    &\le
    \edgeRate_e,  \quad\forall e \in \edges,\\
    \lim_{n\to\infty} \frac{1}{\normal(n)}H\left(\inputRV_s^{(n)}\right) =
    \lim_{n\to\infty} \frac{1}{\normal(n)} \log
    |\edgeAlphabet_s^{(n)}| &\ge \inputRate_s, \quad\forall s\in \sessions, \\
    \lim_{n\to\infty} P_e\left(\networkCoding^{(n)}\right) =0.
\end{align*}
\end{definition}

Assuming that the underlying network and connection requirement are
known implicitly, the set of admissible, asymptotically admissible and
achievable rate-capacity tuples will be denoted $\Upsilon^0 ,
\Upsilon^\infty $ and $\Upsilon^\epsilon$ respectively.

The preceding definitions place no restriction on the class of network
codes under consideration. However, if a rate-capacity tuple is
admissible/asymptotically admissible/achievable using a network code
in a specific class $\codeclass$ (e.g. the class of linear network codes),
then that rate-capacity tuple is said to be admissible/asymptotically
admissible/achievable by network codes in $\codeclass$, and the
corresponding sets are denoted $\Upsilon^0_\codeclass  ,
\Upsilon^\infty_\codeclass$ and $\Upsilon^\epsilon_\codeclass$.

In this paper, we are interested in two special classes of network
codes, (i) linear network codes (with respect to an underlying
finite field $\field$) and (ii) abelian group network codes. The sets
of admissible/asymptotically admissible/achievable rate-capacity
tuples by linear network codes are respectively denoted by
$\Upsilon^0_{L(q)} , \Upsilon^\infty_{L(q)}$ and
$\Upsilon^\epsilon_{L(q)}$. Similarly, the set of
admissible/asymptotically admissible/achievable rate-capacity tuples
by abelian group network codes are respectively denoted by
$\Upsilon^0_\abelian , \Upsilon^\infty_\abelian$ and
$\Upsilon^\epsilon_\abelian$.

Discovering the hidden structure of these sets of rate-capacity tuples
is the key to understanding the tradeoff between source rates and edge
capacities. In the following, we list some basic structural properties of
$\Upsilon^0_\codeclass  , \Upsilon^\infty_\codeclass$ and
$\Upsilon^\epsilon_\codeclass$ when ${\codeclass}$ is either the class of
all network codes, linear network codes or abelian group network codes.

{\renewcommand{\theenumi}{P\arabic{enumi}}
\begin{enumerate}
\item\label{structure1} The sets $\Upsilon^0_\codeclass,
  \Upsilon^\infty_\codeclass$ and $\Upsilon^\epsilon_\codeclass$
  are closed under addition. In other words, if tuples $(\inputRate ,
  \edgeRate )$ and $(\inputRate^\prime , \edgeRate^\prime )$ are in
  $\Upsilon^0_\codeclass$ (or respectively in $\Upsilon^\infty
  _\codeclass$ and $\Upsilon^\epsilon _\codeclass$), then the
  element-wise addition of the two tuples will still be in the same
  set.
\item\label{structure2} $\Upsilon^\infty_\codeclass$ and
  $\Upsilon^\epsilon_\codeclass$ are closed convex cones, and
  $\con(\Upsilon^0_\codeclass) = \Upsilon^\infty_\codeclass$
  where $\con(\Upsilon^0_\codeclass)$ is the minimal closed convex
  cone containing $\Upsilon^0_\codeclass$.
\item\label{structure3} Admissibility implies asymptotic admissibility
  which further implies achievability, $\Upsilon^0_\codeclass \subseteq
  \Upsilon^\infty_\codeclass \subseteq \Upsilon^\epsilon_\codeclass$.
\end{enumerate}}

\section{Pseudo-variables and bounds}
\label{sec:bounds}
The sets of admissible/achievable rate-capacity tuples are difficult
to characterize explicitly. In fact, we will show later that finding
these sets is at least as hard as determining the set of entropy
functions $\Gamma^*$. Due to the difficulty of the problem, results on
characterizing the set of achievable rate-capacity tuples are quite
limited
\cite{Chan07capacity,Chan05capacity,Song.Yeung.ea03zero-error,Dougherty.Freiling.ea07matroids}. While
inner bounds and outer bounds constructed with entropic/almost
entropic functions exist \cite{Yeung02first}, these bounds are not
computable and hence are of limited practical use. The only known
computable outer bound is the Linear Programming (LP) bound, which is
constructed using polymatroids \cite{Yeung02first}. The remainder of
this section provides a brief review of these bounds. We use the
opportunity to introduce notation (differing slightly from the
original manuscripts), facilitating later discussion.

Let $\groundset$ be a nonempty finite set. Recall that
$\espace[\groundset]$ (or simply $\espace$) is a real euclidean space
which has $2^{|\groundset|}$ dimensions and coordinates indexed by the
set of all subsets of $\groundset$ and that $g(\emptyset)=0$ for all
$g\in \espace[\groundset]$.  Specifically, if $g\in\espace$, then its
coordinates will be denoted by $(g(\myalpha): \myalpha \subseteq
\groundset)$. We call $\groundset$ a ground set. Each $g\in\espace$
can also be viewed as a real-valued function $g:2^{\groundset}\mapsto
\Real$ defined on each subset of $\groundset$.

\begin{definition}[Polymatroid]\label{def:polymat}
  A function $g \in\espace[\groundset]$ is a \emph{polymatroid} if it satisfies
  \begin{align}
    g(\emptyset) &= 0 \label{poly:zero} \\
    g(\myalpha) &\ge g(\mybeta),\quad\text{if}\ \mybeta\subseteq \myalpha &&\text{non-decreasing} \label{poly:nondec}\\
    g(\myalpha) + g(\mybeta) &\ge g({\myalpha\cup\mybeta}) +
  g({\myalpha\cap\mybeta}) && \text{submodular} \label{poly:submod}
  \end{align}
\end{definition}
Note \eqref{poly:zero} and \eqref{poly:nondec} imply non-negativity of
a polymatroid.  Let $\groundset$ be a set of discrete random variables
with finite entropies. Note that $\groundset$ contains random
variables rather than indexes for a set of random variables.  This
induces a function $g\in\espace$ where $g(\myalpha)$ is the joint
entropy of the set of random variables
$\emptyset\neq\myalpha\subseteq\groundset$. Functions so-defined will
be called \emph{entropy functions}.

It is well-known that entropy functions are polymatroids over the
ground set $\groundset$. In fact, in the context of entropy functions,
the polymatroid axioms are completely equivalent to the basic
information inequalities (i.e. non-negativity of conditional mutual
information) \cite[p.  297]{Yeung02first}.  It is by now well-known
however that there are other information inequalities that are not
implied by the polymatroid axioms. The set of entropy functions is
denoted $\Gamma^*$, while the set of polymatroids is $\Gamma$.

While an entropy function takes a subset of random variables as
argument, a polymatroid $g$ more generally takes a subset of the
ground set $\groundset$ as argument, where the elements of
$\groundset$ may or may not be random variables. For simplicity, we
shall call the elements of the ground set of a polymatroid
\emph{pseudo-variables}. They differ from random variables in that
they do not necessarily take values, and there may be no associated
joint probability distribution function.

It must be emphasized that pseudo-variables are only defined in the
context of a polymatroid $g$ defined on the ground set $\groundset$. The
elements of $\groundset$ are not pseudo-variables by themselves in the
absence of an associated polymatroid.

Carrying these ideas further, we will call $g(\myalpha)$ the
\emph{pseudo-entropy} of the set of pseudo-variables $\myalpha$, and
$g$ is a \emph{pseudo-entropy function}. Treating pseudo-variables as
a set of basic objects associated with a polymatroid yields notational
simplification. For example, random variables are simply
pseudo-variables possessing a probability distribution such that their
pseudo-entropy function is the same as the entropy function. As such,
we extend the use of $H(\myalpha)$ to refer to the pseudo-entropy of a
set of pseudo-variables $\myalpha$.

\begin{definition}[Entropic function]
  A set of pseudo-variables (and its associated pseudo-entropy
  function) is called \emph{entropic} if its pseudo-entropy
  function is the same as an entropy function of a set of random
  variables.

  Similarly, a set of pseudo-variables (and their pseudo-entropy
  function) is called \emph{linear group characterizable} if its
  pseudo-entropy function is the same as an entropy function of a set
  of linear group characterizable random variables.
\end{definition}

The following two definitions generalize concepts of functional
dependence and independence to pseudo-variables.
\begin{definition}[Functional dependence]\label{def:function}
  Let $\groundset$ be a set of pseudo-variables. A pseudo-variable
  $X\in\groundset$ is said to be a \emph{function} of a set of
  pseudo-variables $\myalpha\subseteq\groundset$ if $H\left(\{X\} \cup
    \myalpha\right) = H\left(\myalpha\right)$. This relation will be
  denoted by $H(X|\myalpha) = 0$.
\end{definition}
\begin{definition}[Independence]
  Two subsets of pseudo-variables $\myalpha$ and $\mybeta$ are called
  \emph{independent} if $H(\myalpha \cup\mybeta) = H(\myalpha ) +
  H(\mybeta)$, and this relationship will be denoted by $\myalpha \indep
  \mybeta$.  Similarly, if $H(\bigcup_{j\in{\cal J}} \myalpha_j ) =
  \sum_{j\in{\cal J}} H(\myalpha_j)$, write $\indep_{j\in{\cal J}} \myalpha_j$.
\end{definition}

Clearly, these definitions are consistent with the usual ones used for
random variables.  The following bound re-states the linear
programming bound \cite[Section 15.6]{Yeung02first} in terms of
pseudo-variables.
\begin{definition}[LP bound]\label{def:LP}
  Given a network $\graph$ and a connection requirement
  $\multicastRequirement$, the LP bound is the set of rate-capacity
  tuples $(\inputRate , \edgeRate ) $ such that there exists a set of
  pseudo-variables $\{\inputRV_s : s\in\sessions, \edgeRV_e:
  e\in\edges \}$ satisfying the following ``connection constraint'':
  \begin{equation}\label{eqn:outerbd}
\begin{split}
   H\left(\edgeRV_e\mid \edgeRV_f : {\eht{f}{e}}\right) & = 0, \quad  e\in\edges\\
   H\left(\inputRV_s\mid \edgeRV_f : {\eht{f}{u}}\right) & = 0, \quad
   u\in\destinationLocation(s)\\
  \indep_{s\in\sessions} & \; \inputRV_s  \\
   H(\inputRV_s) & \ge \inputRate_s, \quad s\in\sessions \\
   H(\edgeRV_e) & \le \edgeRate_e, \quad e\in\edges.
\end{split}
\end{equation}
\end{definition}

Denote the set of rate-capacity tuples that satisfy the LP bound by
$\Upsilon_{LP}$. From \cite{Yeung02first} it is known that
$\Upsilon_{LP}\supseteq\Upsilon^\epsilon$.  It is interesting to
notice that the use of pseudo-variables gives a notational unification
of an inner bound and an outer bound given in \cite{Yeung02first} as
follows:
\begin{proposition}[Inner and Outer bounds]
  Given a network $\graph$ and a connection requirement
  $\multicastRequirement$, let $\Upsilon_{\text{in}}$ resp.
  $\Upsilon_{\text{out}}$ be the set of rate-capacity tuples
  $(\inputRate , \edgeRate ) $ such that there exists a set of
  \emph{entropic} resp. \emph{almost entropic} pseudo-variables
  $\{\inputRV_s : s\in\sessions, \edgeRV_e: e\in\edges \}$ satisfying~\eqref{eqn:outerbd}.
Then $\Upsilon_{\text{in}} \subseteq \Upsilon^\epsilon \subseteq
\Upsilon_{\text{out}} \subseteq \Upsilon_{LP}$.
\end{proposition}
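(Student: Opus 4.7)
The plan is to prove the three containments separately, as each has a distinct character. The last, $\Upsilon_{\text{out}}\subseteq \Upsilon_{LP}$, is essentially immediate: entropy functions satisfy the basic inequalities, so $\Gamma^*\subseteq \Gamma$, and because $\Gamma$ is a closed polyhedron we have $\bar{\Gamma}^*\subseteq \Gamma$. The constraint system \eqref{eqn:outerbd} is phrased identically for all three notions of pseudo-variable, so any almost entropic witness for $\Upsilon_{\text{out}}$ is automatically a polymatroidal witness for $\Upsilon_{LP}$.

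For $\Upsilon_{\text{in}}\subseteq \Upsilon^\epsilon$ I would start from an entropic family $\{\inputRV_s, \edgeRV_e\}$ satisfying \eqref{eqn:outerbd} and construct a sequence of network codes by joint-typicality methods on $n$ i.i.d.\ copies. The encoding equality $H(\edgeRV_e\mid \edgeRV_f:\eht{f}{e})=0$ makes each $\edgeRV_e$ a deterministic function of its incident variables, and $H(\inputRV_s\mid \edgeRV_f:\eht{f}{u})=0$ lets each sink reconstruct $\inputRV_s$ from its incident variables. Each source $s$ would draw its message from the typical set of $\inputRV_s^{\otimes n}$, of size $\approx 2^{n H(\inputRV_s)}$; each edge $e$ would transmit a bin index for the induced typical sequence, of size $\approx 2^{n H(\edgeRV_e)}$; each sink $u\in\destinationLocation(s)$ would decode by joint typicality. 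Processing the DAG in topological order preserves joint typicality from sources through internal edges to sinks, and the AEP together with a union bound over the finite set of edges and receivers yields vanishing error with rates arbitrarily close to $(\inputRate,\edgeRate)$. This is the Song--Yeung--Cai style achievability argument.

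For $\Upsilon^\epsilon\subseteq \Upsilon_{\text{out}}$ I would take a sequence $\networkCoding^{(n)}$ with normalizers $\normal(n)$ witnessing $(\inputRate,\edgeRate)\in\Upsilon^\epsilon$ and form $g^{(n)}(\cdot)=H(\cdot)/\normal(n)$ on the induced random variables; each $g^{(n)}\in \Gamma^*$. The encoding equalities hold exactly because the $\networkcoding_e$ are deterministic, the source-independence constraint holds by construction, and the rate/capacity inequalities hold in the limit. Fano's inequality applied at each receiver $u\in\destinationLocation(s)$ bounds $H(\inputRV_s^{(n)}\mid \edgeRV_f^{(n)}:\eht{f}{u})$ by a term that vanishes after dividing by $\normal(n)$, since $P_e(\networkCoding^{(n)})\to 0$ and $\log|\edgeAlphabet_s^{(n)}|/\normal(n)$ is bounded. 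All coordinates of $g^{(n)}$ are uniformly bounded by source rates and link capacities, so a convergent subsequence exists; its limit lies in $\bar{\Gamma}^*$ and satisfies \eqref{eqn:outerbd} exactly, placing $(\inputRate,\edgeRate)\in\Upsilon_{\text{out}}$.

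The main obstacle is the achievability step: realizing an abstract entropic witness as an actual sequence of codes requires carefully propagating joint typicality through the DAG and uniformly controlling the error contributed at each edge. The remaining two containments are bookkeeping and a standard compactness/continuity argument. Since the proposition essentially restates the known inner, outer, and LP bounds in the pseudo-variable language, I would verify each step by tracking back to the corresponding arguments in Yeung's textbook and in Song--Yeung--Cai rather than redoing them from scratch.
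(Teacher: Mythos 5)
Your proposal is correct and follows essentially the same route as the paper: the paper's proof is simply the observation that the proposition restates the known inner bound, outer bound and LP bound of Yeung and Song--Yeung--Cai in pseudo-variable language, citing \cite{Yeung02first}, which is exactly the reduction you make (your three containments are sketches of those standard achievability, Fano/compactness, and $\bar{\Gamma}^*\subseteq\Gamma$ arguments). The extra detail you supply is consistent with, rather than different from, the paper's one-line appeal to those results.
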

\begin{proof}
The proof is straightforward by rewriting the bounds obtained in \cite{Yeung02first}.
\end{proof}

Similar to the LP bound, we define the following bound for abelian
group network codes (including linear network codes) as follows.
\begin{definition}[LP-Ingleton bound]\label{def:LPIngleton}
  Given a network $\graph$ and a connection requirement
  $\multicastRequirement$, the LP-Ingleton bound is the set of
  rate-capacity tuples $(\inputRate , \edgeRate ) $ such that there
  exists a set of pseudo-variables $\{\inputRV_s : s\in\sessions,
  \edgeRV_e: e\in\edges \}$ satisfying the Ingleton inequalities
  (\ref{eq:ingleton}) and the connection constraint
  (\ref{eqn:outerbd}).
\end{definition}

\begin{proposition}
  Denote the set of rate-capacity tuples that satisfy the LP-Ingleton bound by
  $\Upsilon_{LP,I}$. Then $\Upsilon_{LP,I}$ contains
  $\Upsilon^\epsilon_\abelian$.
\end{proposition}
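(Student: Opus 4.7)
The plan is to take any $(\inputRate,\edgeRate)\in\Upsilon^\epsilon_\abelian$, pick a defining sequence $\{\networkCoding^{(n)}\}$ of abelian group network codes with normalizing constants $\normal(n)$, form the normalized entropy functions of their induced random variables, and extract a limit that serves as the desired set of pseudo-variables in the LP-Ingleton bound. The two facts that make this work are: (i) by definition of abelian group network code, the induced random variables $\{\edgeRV_f^{(n)}:f\in\sessionsedges\}$ are abelian group characterizable for every $n$, so by Proposition~\ref{prop:Ingleton} their entropy function satisfies the Ingleton inequality~\eqref{eq:ingleton}; (ii) the Ingleton inequality, independence, and functional-dependence constraints that make up the LP-Ingleton bound are all closed linear relations on entropy coordinates, hence are preserved under scaling and taking pointwise limits.

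Concretely, for each $n$ let $g_n\in\espace[\sessionsedges]$ be defined by $g_n(\myalpha)=\normal(n)^{-1}H(\edgeRV_\myalpha^{(n)})$. Because $H(\edgeRV_e^{(n)})\le\log|\edgeAlphabet_e^{(n)}|$ and the link capacities $\edgeRate_e$ and source rates $\inputRate_s$ are finite, the sequence $\{g_n\}$ lies in a compact subset of $\espace[\sessionsedges]$ (each coordinate is bounded by a sum of the $\edgeRate_e$'s asymptotically). By Bolzano--Weierstrass, pass to a convergent subsequence with limit $g^*$, and let $\{\inputRV_s,\edgeRV_e\}$ denote a set of pseudo-variables realizing $g^*$ as its pseudo-entropy function. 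I would then verify each LP-Ingleton clause on $g^*$: Ingleton~\eqref{eq:ingleton} holds for $g_n$ and passes to the limit; source independence $\indep_{s\in\sessions}\inputRV_s$ holds exactly at each $n$ since the sources are independent and uniform, hence survives in $g^*$; the edge constraint $H(\edgeRV_e\mid\edgeRV_f:\eht{f}{e})=0$ holds exactly at each $n$ because $\edgeRV_e^{(n)}$ is a deterministic function of its incident inputs; and the rate/capacity inequalities $g^*(\{\inputRV_s\})\ge\inputRate_s$, $g^*(\{\edgeRV_e\})\le\edgeRate_e$ follow from the defining inequalities in the achievability definition.

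The one clause that requires real work is the receiver-side functional dependence $H(\inputRV_s\mid\edgeRV_f:\eht{f}{u})=0$ for $u\in\destinationLocation(s)$, since for achievable (as opposed to admissible) codes this only holds approximately. Here I would invoke Fano's inequality: for each $n$ and each $u\in\destinationLocation(s)$,
\begin{equation*}
  H\bigl(\inputRV_s^{(n)}\,\bigm|\,\edgeRV_f^{(n)}:\eht{f}{u}\bigr)
  \le 1+P_e(\networkCoding^{(n)})\log|\edgeAlphabet_s^{(n)}|.
\end{equation*}
Dividing by $\normal(n)$, the first term vanishes and the second tends to $0\cdot\inputRate_s=0$ because $\normal(n)^{-1}\log|\edgeAlphabet_s^{(n)}|\to\inputRate_s$ while $P_e(\networkCoding^{(n)})\to0$. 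Thus the conditional entropy term of $g^*$ is zero, as required by the connection constraint~\eqref{eqn:outerbd}.

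The main obstacle is just this Fano step combined with the compactness argument: one must ensure that the same subsequence simultaneously makes every coordinate of $g_n$ converge and drives every receiver's Fano bound to zero, which is handled by first applying Fano termwise (there are only finitely many receiver/source pairs) and then extracting a single convergent subsequence. Everything else is a routine limit of closed linear constraints, and the conclusion $(\inputRate,\edgeRate)\in\Upsilon_{LP,I}$ follows.
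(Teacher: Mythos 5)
Your proposal is correct and follows essentially the same route as the paper, which simply notes that abelian group network codes satisfy the Ingleton inequality (Proposition~\ref{prop:Ingleton}) and then invokes the standard argument from \cite{Yeung02first} showing $\Upsilon_{LP}\supseteq\Upsilon^\epsilon$; your write-up is just that argument made explicit (normalized entropy functions, Bolzano--Weierstrass, Fano, and closedness of the linear constraints under limits). No gaps.
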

\begin{proof}
  First notice that all source and link random variables of an abelian
  group network code must satisfy the Ingleton inequalities. The
  proposition then follows by using a similar argument as in
  \cite{Yeung02first} that proves
  $\Upsilon_{LP}\supseteq\Upsilon^\epsilon$.
\end{proof}

Since the LP and LP-Ingleton bounds are defined by intersections of
several linear half-spaces and hyperplanes, these bounds are
polyhedral. Together with the following duality results, this implies
that LP bounds are not generally tight (this is proved Section
\ref{sec:applicationA}).


\section{Entropy functions, network codes and duality}\label{sec:result}
Given a network, a connection requirement and a rate-capacity tuple,
the \emph{multicast problem} is to determine whether or not the
rate-capacity tuple is admissible or achievable (perhaps even
restricted to codes in a particular class). In this section, we
construct multicast problems from non-negative functions.  This
construction yields several dualities between properties of the
generating function and the solubility of the multicast problem. We
establish three main dualities.  The first duality relates entropy
functions and network codes. It can be paraphrased as follows.
\begin{quote}
  \emph{A function is quasi-uniform if and only if its induced
    rate-capacity tuple is admissible.}
\end{quote}
This is shown in Theorem \ref{thm:firstDuality}.  Theorem
\ref{thm:FirstDualityExtension} provides an extension which implies
\begin{quote}
  \emph{A function is almost entropic if and only if its induced
  rate-capacity tuple is achievable.}
\end{quote}
The second duality proves similar results for linear network codes.
\begin{quote}
\emph{An entropy function is linear group characterizable if and only if
  its induced rate-capacity tuple is admissible by linear network codes.}
\end{quote}
This is Theorem \ref{thm:SecondDuality}. Again, Theorem
\ref{thm:SecondDualityExtension} extends the result, relating almost
linear group characterizable functions and achievable rate-capacity
tuples with linear network codes.

The third duality, Theorem \ref{thm:LPbd}
relates polymatroids and the linear programming bound.
\begin{quote}
  \emph{A function is a polymatroid if and only if its induced
    rate-capacity tuple satisfies the LP bound.}
\end{quote}
We also give a partial result  for an extension to polymatroids that
also satisfy the Ingleton inequality.

Despite their apparent simplicity, these results leads to many
interesting corollaries: linear network codes (or more generally,
abelian group network codes) are suboptimal, the LP bound is not
tight, and in general the network coding capacity region is not a
polytope. These consequences will be described in more detail in
Section \ref{sec:applicationA}.

\subsection{Constructing multicast problems}
\label{sec:construction}
Let $h\in \espace[\N]$, be a given non-negative function over the
ground set $\N=\{1,2,\dots,N\}$. The proof for the main result relies
on the construction of a special network $\graph^\dagger$, a
connection requirement $\multicastRequirement^\dagger$ and a
rate-capacity tuple $\multicastProblem(h) \triangleq (\inputRate(h) ,
\edgeRate(h))$.

Figure \ref{fig:thenetwork} defines the network topology, connection
requirement and edge capacities. For convenience, the network is
divided into several subnetworks. To differentiate the roles of
network nodes, source nodes are indicated by open circles, destination
nodes are double circles, and intermediate nodes are solid circles. By
construction, each node takes only one role. The label beside a source
node is the input message available to that source node (this defines
the source location mapping $\sourceLocation$). The label beside a
receiver node indicates the desired source message to be reconstructed
at that destination node (this defines the destination location
mapping $\destinationLocation$). To simplify notation, each
capacitated edge is labeled with a pair of symbols denoting the edge
message (and corresponding random variable), and the edge
capacity. Unlabelled edges are assumed to be uncapacitated, or to have
a finite but sufficiently large capacity (such as $\sum_{\alpha}
h(\alpha)$) to losslessly forward all received messages.

The first part of the network, shown in Figure \ref{fig:part1},
contains the sources. There are $2^N-1$ independent sessions,
$\sessions = \left\{ \sRV_{[\alpha]} : \emptyset\neq\alpha\subseteq
  2^\N \right\}$\footnote{For simplicity, we use the same symbol to
  denote the index of a multicast session and the associated source
  random variable.}. The desired source rate associated with session
$\alpha$ is $h(\alpha)$.  Singletons $\{i\}\in 2^\N$ will be denoted
without brackets, e.g. $h(i)$ and $\sRV_{[i]}$.  There are $N$
specific edge messages that are of particular interest. Rather than
naming all edge variables $\edgeRV_e, e\in\edges$, we label these $N$
particular edge variables $V_j$, $j=1,\dots,N$. Remaining edge
variables will be labelled with generic symbols $W, W^\prime, W'', W^*$ and
$W^{**}$. Source $\sRV_{[\N]}$ generates the network coded messages $V_1,
V_2, \dots, V_N$ which are duplicated as required and forwarded to the
rest of the network. The remaining part of the network is divided into
subnetworks of three types, shown in Figures \ref{fig:type0},
\ref{fig:lowerbd} and \ref{fig:butterfly}.

\def\scalefactor{0.75}
\begin{figure}[htbp]
  \begin{center}
    \subfigure[The sources.\label{fig:part1}]
    {\includegraphics*[scale=\scalefactor]{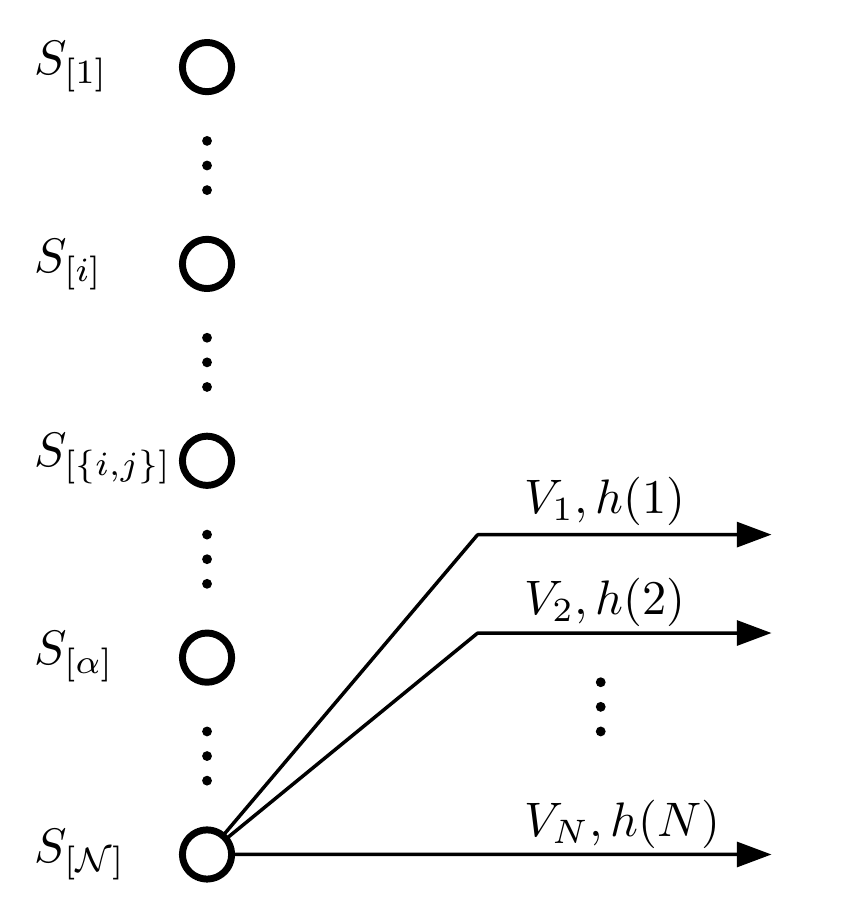}}
   \subfigure[Type 0 subnetworks\label{fig:type0}]
    {\includegraphics*[scale=\scalefactor]{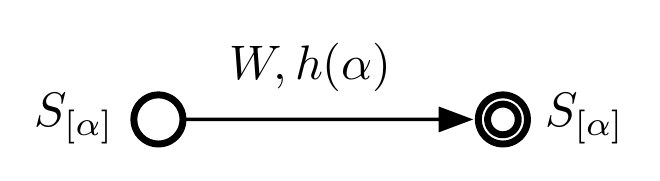}}
    \hspace{2mm}
    \subfigure[{Type 1 subnetworks}\label{fig:lowerbd}]
    {\includegraphics*[scale=\scalefactor]{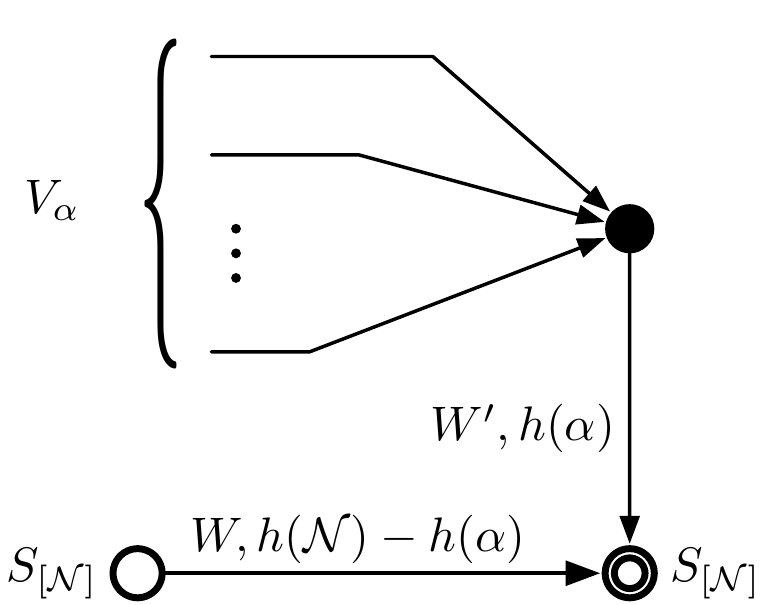}}

    \subfigure[{Type 2 subnetworks}\label{fig:butterfly}]
    {\includegraphics*[scale=\scalefactor]{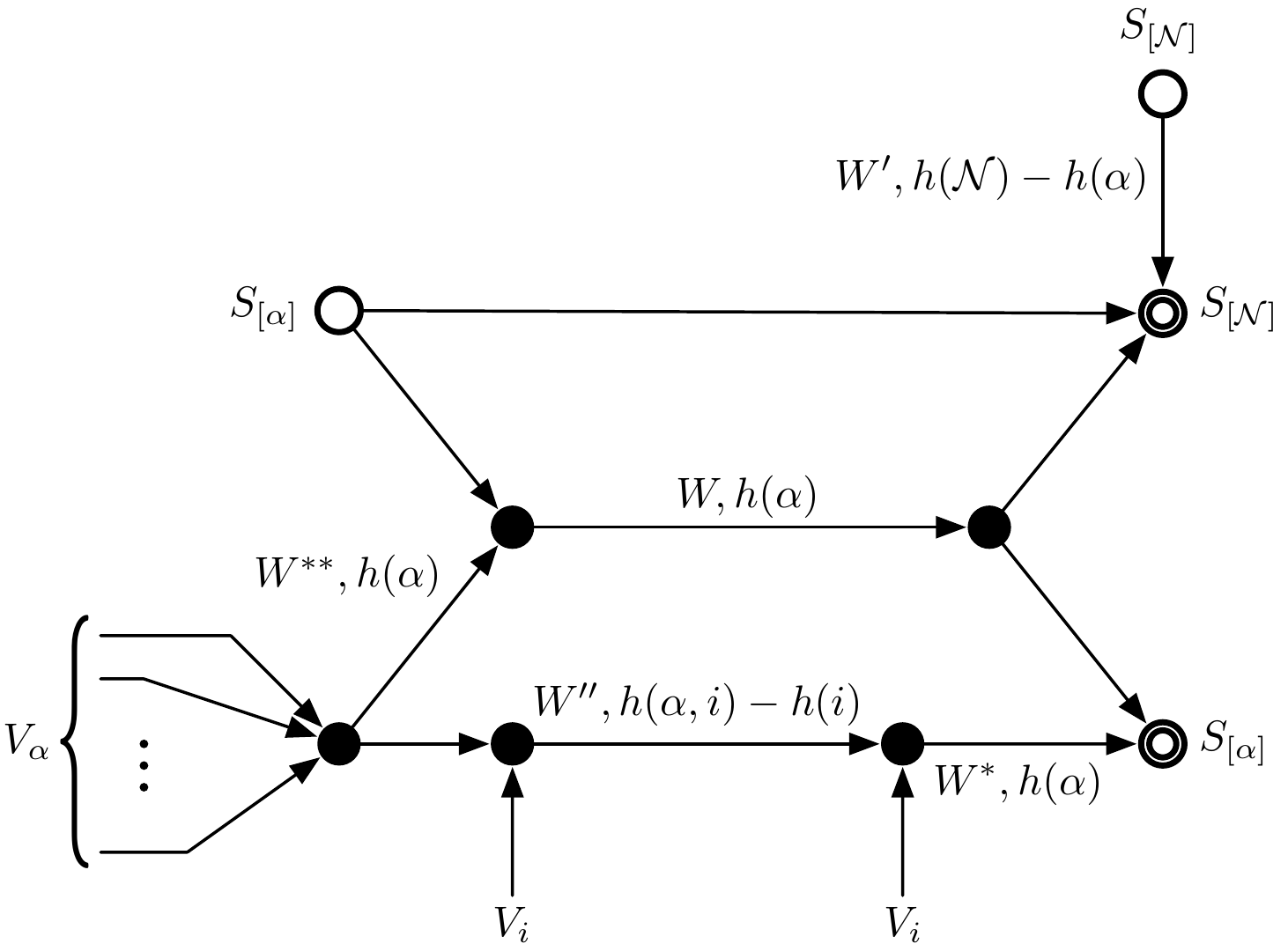}}
  \end{center}
  \caption{The network $\graph^\dagger$.}
  \label{fig:thenetwork}
\end{figure}

With reference to Figure \ref{fig:type0}, type 0 subnetworks connect a
single source to one receiver. There are $2^N-1$ type 0 subnetworks,
indexed by the choice of $\emptyset\neq\alpha\in2^\N$.

Referring to Figure \ref{fig:lowerbd}, there are $2^N-1$ type 1
subnetworks, one for each nonempty $\alpha\in 2^{\N}$. These
subnetworks introduce an edge of capacity $h(\N)-h(\alpha)$ between
source $\sRV_{[\N]}$ and a sink requiring $\sRV_{[\N]}$. There is an
intermediate node which has another $|\alpha|$ incident edges (from
Figure \ref{fig:part1}), carrying the messages
$V_\alpha=\{V_j,j\in\alpha\}$.  The intermediate node then has an edge
of capacity $h(\alpha)$ to the sink.

Finally, Figure \ref{fig:butterfly} shows the structure of the type 2
subnetworks. Type 2 subnetworks are indexed by a set $\alpha$, where
$\emptyset\neq\alpha\subset\N$ and an element $i\in\alpha,
i\not\in\N$. Each type 2 subnetwork connects two sources
$\sRV_{[\alpha]}$ and $\sRV_{[\N]}$ and two receivers respectively
requiring $\sRV_{[\alpha]}$ and $\sRV_{[\N]}$. In addition, there are
$|\alpha|+2$ other incident edges from Part 1 of the network, carrying
$V_\alpha$ and two copies of $V_i$. For notational simplicity, we have
written $h\left(\alpha\cup\{i\}\right)\defined h(\alpha,i)$.

So far, we have described a network $\graph^\dagger$, a connection
requirement $\multicastRequirement^\dagger$ and have assigned rates to
sources and capacities to links.  Clearly
$\multicastRequirement^\dagger$ depends only on $N$, and not in any
other way on $h$. Similarly, the topology of the network
$\graph^\dagger$ depends only on $N$. The choice of $h$ affects only
the source rates and edge capacities, which are collected into the
rate-capacity tuple $\multicastProblem(h)$. Also, we can assume without loss
of generality that  $\multicastProblem(h)$ is a linear function of $h$.

\begin{example}
  Figure \ref{fig:thenetworkNistwo} shows the topology of the network
  $\graph^\dagger$ when $N=2$. Edge labels are omitted for clarity.
\end{example}
\begin{figure}[htbp]
  \begin{center}
  \includegraphics*[scale=\scalefactor]{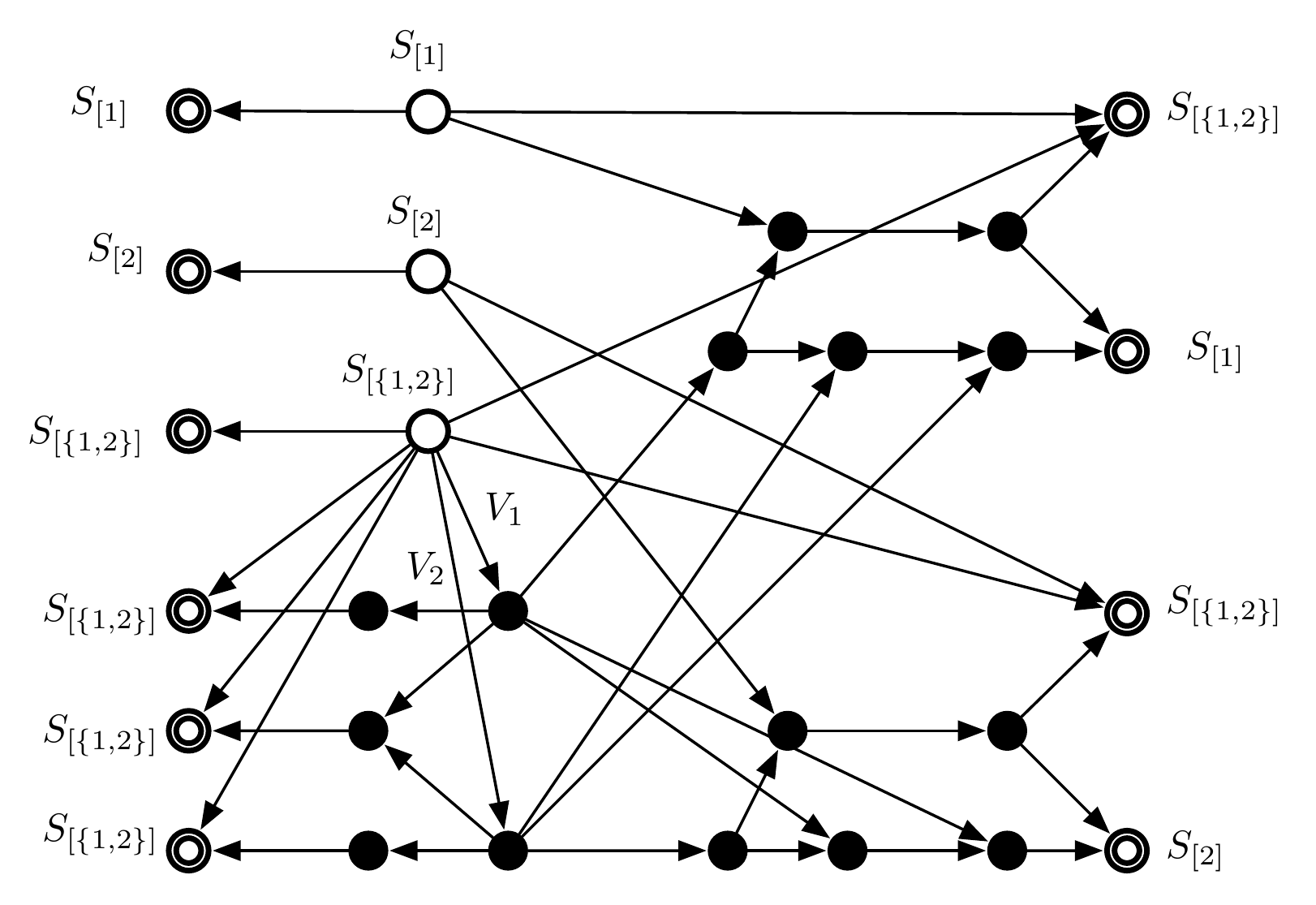}
  \end{center}
  \caption{The network $\graph^\dagger$ when $N=2$.}
  \label{fig:thenetworkNistwo}
\end{figure}

\subsection{First Duality: Entropy functions and network codes}
\label{sec:firstDuality}
\begin{theorem}\label{thm:firstDuality}
  Let $h$ be in $\espace[\N]$ for  $\N=\{1,2,\dots, N\}$. The
  induced rate-capacity tuple $\multicastProblem(h)$ is admissible on
  the network $\graph^\dagger$ and connection requirement
  $\multicastRequirement^\dagger$, if and only if $h$ is quasi-uniform, i.e.,
  \begin{equation*}
    h\in\Gamma^*_Q \iff \multicastProblem(h) \in \Upsilon^0.
  \end{equation*}
\end{theorem}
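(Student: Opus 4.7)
The theorem is an ``if and only if'' and I would prove the two directions separately, in both cases using Lemma~\ref{lemm:qucoding} as the bridge between a joint entropy difference $h(\beta)-h(\alpha)$ and an admissible zero-error coding rate. The common strategy is that each of the three gadget types (type~0, type~1, type~2) is tuned so that a particular conditional entropy exactly equals a specified edge capacity, and Lemma~\ref{lemm:qucoding} says that under quasi-uniformity this rate is attainable with zero error.

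\emph{If direction.} Assume $h\in\Gamma^*_Q$ and pick quasi-uniform random variables $X_1,\dots,X_N$ with $H(X_\alpha)=h(\alpha)$ for every $\alpha\subseteq\N$. I would choose the source alphabet $\edgeAlphabet_{\sRV_{[\N]}}\defined\supp(X_1,\dots,X_N)$, so that a uniformly drawn $\sRV_{[\N]}$ has the joint distribution of $(X_1,\dots,X_N)$, and let the source node emit $V_j=X_j$ along each of its outgoing edges. For every proper $\alpha\subsetneq\N$ the source $\sRV_{[\alpha]}$ is drawn uniformly on an alphabet of size $|\supp(X_\alpha)|$, independently of everything else. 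The type~0 gadget for $\alpha$ just forwards $\sRV_{[\alpha]}$ along the edge of capacity $h(\alpha)=\log|\edgeAlphabet_{\sRV_{[\alpha]}}|$. On a type~1 gadget the intermediate node passes $V_\alpha$ on its outgoing capacity-$h(\alpha)$ edge, while the direct edge from $\sRV_{[\N]}$ of capacity $h(\N)-h(\alpha)$ carries a zero-error description of $\sRV_{[\N]}$ given side information $V_\alpha$ at the sink; existence of this description at exactly rate $H(X_\N|X_\alpha)=h(\N)-h(\alpha)$ is granted by Lemma~\ref{lemm:qucoding}. On a type~2 (butterfly) gadget two bottleneck edges of capacities $h(\alpha,i)-h(i)$ and $h(\alpha,i)-h(\alpha)$ carry conditional codes produced by two further applications of Lemma~\ref{lemm:qucoding}, combined with $\sRV_{[\alpha]}$ in a linear butterfly fashion so that both downstream sinks reconstruct $\sRV_{[\alpha]}$ and $\sRV_{[\N]}$ respectively.

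\emph{Only if direction.} Conversely, fix any zero-error code $\networkCoding$ realizing $\multicastProblem(h)$. Each type~0 gadget immediately yields $H(\sRV_{[\alpha]})=h(\alpha)$, and independence of sessions is already built into the code definition. On each type~1 gadget the sink must decode $\sRV_{[\N]}$ from two input streams of total capacity $h(\N)$, so both streams must saturate their capacities and be mutually independent; this forces $H(V_\alpha)=h(\alpha)$. An analogous cut-plus-decodability argument on each type~2 gadget upgrades the submodular inequality to an equality $H(V_\alpha,V_i)=h(\alpha,i)$, and iterating these equalities over $\alpha\subseteq\N$ gives $H(V_\beta)=h(\beta)$ for every $\beta\subseteq\N$. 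Finally, because $\sRV_{[\N]}$ is uniform on its alphabet and every $V_\beta$ is a deterministic function of $\sRV_{[\N]}$, the saturated edge-capacity constraints yield $|\supp(V_\beta)|\le 2^{h(\beta)}$; together with the reverse bound $H(V_\beta)\le\log|\supp(V_\beta)|$ and the equality $H(V_\beta)=h(\beta)$ this forces $|\supp(V_\beta)|=2^{h(\beta)}$ and uniformity of $V_\beta$ on its support, so $(V_1,\dots,V_N)$ is quasi-uniform with entropy function $h$.

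The step I expect to be most delicate is the last one in the reverse direction: turning entropy saturation into honest quasi-uniformity of \emph{every} marginal $V_\beta$. Obtaining the individual entropy identities $H(V_\beta)=h(\beta)$ is essentially min-cut bookkeeping once one believes the gadgets, but arguing simultaneously for every $\beta\subseteq\N$ that the support of $V_\beta$ has exactly $2^{h(\beta)}$ equal-weight atoms hinges on the capacity assignments being exactly $h(\cdot)$ rather than loose upper bounds. A secondary delicate point is showing that each type~2 gadget really forces the submodular equality $H(V_\alpha,V_i)=h(\alpha,i)$ and not just the weaker cut inequality; this is the place where the butterfly topology does work beyond simple routing.
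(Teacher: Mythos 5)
Your overall strategy coincides with the paper's: achievability via Lemma \ref{lemm:qucoding} applied gadget-by-gadget, and the converse via cut arguments plus induction on $|\alpha|$. The ``if'' direction is essentially the paper's construction (the ``linear butterfly fashion'' you gesture at is realized concretely in the paper as the one-time pad $W = V_\alpha + \sRV_{[\alpha]} \bmod 2^{h(\alpha)}$). The converse, however, has a genuine gap.

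First, a local error: the type~1 gadget does \emph{not} force $H(V_\alpha)=h(\alpha)$. The cut into its sink gives only $H(V_\alpha)+H(W)\ge H(\sRV_{[\N]})\ge h(\N)$ with $H(W)\le h(\N)-h(\alpha)$, i.e.\ the \emph{lower} bound $H(V_\alpha)\ge h(\alpha)$. The matching upper bound is immediate only for singletons (from the capacity constraints $H(V_i)\le h(i)$ on the part-1 edges); for $|\alpha|\ge 2$ the edges feeding $V_\alpha$ into the intermediate node have total capacity $\sum_{j\in\alpha}h(j)$, which exceeds $h(\alpha)$ in general, so no cut on a type~1 gadget can cap $H(V_\alpha)$ at $h(\alpha)$.

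Second, and more seriously, the step you defer as ``an analogous cut-plus-decodability argument'' on the type~2 gadget is the heart of the converse, and it is not a cut argument. The cut into the lower receiver has capacity $2h(\alpha)$ while it only needs to carry $\sRV_{[\alpha]}$ of entropy $h(\alpha)$, so nothing saturates by counting alone. The actual mechanism is: (i) because the upper receiver must recover $\sRV_{[\N]}$ from $W,\sRV_{[\alpha]},W'$ and the capacities are exactly tight, one shows $W\indep\sRV_{[\alpha]}$ and $H(W)=h(\alpha)$ --- $W$ is forced to act as a one-time pad --- whence (under the induction hypothesis $H(V_\alpha)=h(\alpha)$) $V_\alpha$ is a function of $(W,\sRV_{[\alpha]})$; (ii) combining this with $I(W;V_\alpha,W^*)=0$ and decodability of $\sRV_{[\alpha]}$ at the lower receiver yields $H(W^*|V_\alpha)=H(V_\alpha|W^*)=0$, i.e.\ the capacity-$h(\alpha)$ edge $W^*$ is informationally \emph{equivalent} to $V_\alpha$; (iii) only then does the bottleneck pay off: $H(V_\alpha,V_i)=H(W^*,V_i)\le H(W'',V_i)\le h(i)+\bigl(h(\alpha,i)-h(i)\bigr)=h(\alpha,i)$, using that $W^*$ is a function of $(W'',V_i)$. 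Without steps (i)--(ii) there is no route from the $W''$ capacity to a bound on $H(V_\alpha,V_i)$. Relatedly, your closing support bound $|\supp(V_\beta)|\le 2^{h(\beta)}$ does not follow from ``saturated edge capacities'' in general; it requires identifying the single capacity-$h(\beta)$ edge $W'$ in the type~1 gadget and showing $H(V_\beta|W')=0$, so that $V_\beta$ inherits quasi-uniformity from $W'$ being uniform on an alphabet of size $2^{h(\beta)}$. That last step is in fact short once $H(V_\beta)=h(\beta)$ is known --- the real difficulty sits in the type~2 induction you labeled secondary.
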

We begin with a proof of the only-if statement, i.e. starting with the
assumption of admissibility, we must demonstrate that the function is
quasi-uniform.  By Definition \ref{df:admissible}, admissibility of
$\multicastProblem(h)$ on $\graph^\dagger,
\multicastRequirement^\dagger$ requires existence of a zero-error
network code $\networkCoding$ with source messages $\sRV_{[\alpha]}$,
$\emptyset \neq \alpha\subseteq \N$ and a subset of its coded messages
$V_\N$ satisfying
\begin{align}
  H\left(\sRV_{[\alpha]}\right) &\ge h(\alpha), \quad \alpha \subseteq
  \N \label{thm:one:1}\\
  H\left(\sRV_{[\alpha]} : \alpha \subseteq \N\right) &= \sum_{\alpha
    \subseteq \N} H(\sRV_{[\alpha]}) \label{thm:one:2} \\
  H\left(V_i\right) &\le h(i), \quad i\in\N. \label{thm:one:3}
\end{align}
The remaining goal is to prove $H(V_\alpha) = h(\alpha)$ for every
$\alpha\subseteq\N$. To this end, we prove the following series of
Lemmas \ref{claim:0}--\ref{subclaim43}, each predicated on
admissibility of $\multicastProblem(h)$ on $\graph^\dagger,
\multicastRequirement^\dagger$.
\begin{lemma}\label{claim:0}
  $H\left(\sRV_{[\alpha]}\right) = h(\alpha)$ for all
  $\emptyset\neq\alpha\subseteq\N$.
\end{lemma}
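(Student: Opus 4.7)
The plan is to sandwich $H(S_{[\alpha]})$ between $h(\alpha)$ on both sides. The lower bound $H(S_{[\alpha]}) \ge h(\alpha)$ is simply the source-rate condition \eqref{thm:one:1}, which is immediate from the assumed admissibility of $\multicastProblem(h)$ on $(\graph^\dagger, \multicastRequirement^\dagger)$.

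For the reverse bound $H(S_{[\alpha]}) \le h(\alpha)$, I would read off the structure of the type 0 subnetwork indexed by $\alpha$ (Figure~\ref{fig:type0}). By construction, this subnetwork is a single directed path from the source node carrying $S_{[\alpha]}$ to a destination node that must reconstruct $S_{[\alpha]}$, and the path passes through a bottleneck edge of capacity $h(\alpha)$. Let $W$ denote the random variable on that bottleneck edge. Because $\networkCoding$ is zero-error, the receiver recovers $S_{[\alpha]}$ as a deterministic function of $W$ (together with any other incident edge symbols, of which there are none in a type 0 subnetwork), so $H(S_{[\alpha]} \mid W) = 0$. Combining this with $H(W) \le \log|\edgeAlphabet_W| \le h(\alpha)$ from Definition~\ref{df:admissible} gives
\begin{equation*}
H(S_{[\alpha]}) \;=\; H(S_{[\alpha]}) - H(S_{[\alpha]} \mid W) \;\le\; H(W) \;\le\; h(\alpha).
\end{equation*}

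Together with the lower bound, this yields $H(S_{[\alpha]}) = h(\alpha)$ for every nonempty $\alpha \subseteq \N$. The only subtlety is making sure the type 0 subnetwork indeed provides no side information to the receiver other than the capacity-$h(\alpha)$ edge, so that the data-processing step is valid; this is the main thing to verify from the figure, but it is immediate from the construction since type 0 subnetworks are disjoint from the rest of the network except through their declared source and sink.
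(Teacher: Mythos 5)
Your proof is correct and is essentially identical to the paper's: the lower bound comes from the source-rate condition \eqref{thm:one:1}, and the upper bound comes from the type 0 subnetwork, where zero-error reconstruction forces $H(\sRV_{[\alpha]}) \le H(W) \le h(\alpha)$. The paper states this more tersely but the argument is the same.
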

\begin{proof}
  Consider the type 0 subnetworks of Figure
  \ref{fig:type0}. Admissibility implies that each receiver can
  correctly reconstruct its required source message. This is not
  possible unless $H(\sRV_{[\alpha]}) \le H(W) \le h(\alpha)$, which together
  with (\ref{thm:one:1}) proves the lemma.
\end{proof}

\begin{lemma}\label{claim:1}
  $ h(\alpha) \le H(V_\alpha)$ for all
  $\emptyset\neq\alpha\subseteq\N$.
\end{lemma}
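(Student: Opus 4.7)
The plan is to exploit the Type 1 subnetwork indexed by $\alpha$ shown in Figure \ref{fig:lowerbd}. That subnetwork contains one sink that must reconstruct $\sRV_{[\N]}$, and this sink is fed by exactly two incoming edges: a direct edge from the source $\sRV_{[\N]}$ of capacity $h(\N)-h(\alpha)$, and an edge of capacity $h(\alpha)$ coming from an intermediate node whose only incoming information is $V_\alpha$.

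First I would name the two edge random variables crossing the cut at the sink, say $W$ on the direct link and $W'$ on the link from the intermediate node. The capacity assignments translate into the pseudo-entropy bounds $H(W)\le h(\N)-h(\alpha)$ and $H(W')\le h(\alpha)$. More importantly, since the intermediate node's outgoing symbol $W'$ is a deterministic function of its incoming information $V_\alpha$ (the network code is zero-error), we have $H(W'\mid V_\alpha)=0$, and hence $H(W')\le H(V_\alpha)$, which is the bound that will actually be used.

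Next I would invoke the zero-error reconstruction requirement at the sink: since the sink decodes $\sRV_{[\N]}$ from $(W,W')$, we have $H(\sRV_{[\N]}\mid W,W')=0$, and therefore
\begin{equation*}
H(\sRV_{[\N]})\;\le\;H(W,W')\;\le\;H(W)+H(W')\;\le\;(h(\N)-h(\alpha))+H(V_\alpha).
\end{equation*}
By Lemma \ref{claim:0} the left-hand side equals $h(\N)$, and rearranging yields $h(\alpha)\le H(V_\alpha)$, as desired.

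There is essentially no obstacle here; the Type 1 subnetwork was designed precisely to force this inequality, and the only subtlety is being careful to bound $H(W')$ by $H(V_\alpha)$ rather than by its capacity $h(\alpha)$ (using the capacity bound alone would only reproduce the tautology $h(\N)\le h(\N)$). The complementary inequality $H(V_\alpha)\le h(\alpha)$, and therefore the equality $H(V_\alpha)=h(\alpha)$ needed to conclude quasi-uniformity, will have to come from a separate argument, presumably using the Type 2 subnetworks.
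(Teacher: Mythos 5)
Your proof is correct and follows essentially the same route as the paper: both arguments use the type 1 subnetwork cut at the sink requiring $\sRV_{[\N]}$, the capacity bound $H(W)\le h(\N)-h(\alpha)$ on the direct edge, the fact that the other incoming edge symbol is a function of $V_\alpha$, and the lower bound $H(\sRV_{[\N]})\ge h(\N)$. The only difference is that you spell out the intermediate step $H(\sRV_{[\N]})\le H(W)+H(W')\le H(W)+H(V_\alpha)$, which the paper simply asserts as ``it must be true that $H(V_\alpha)+H(W)\ge H(\sRV_{[\N]})$.''
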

\begin{proof}
  Consider type 1 subnetworks in Figure \ref{fig:lowerbd}. In order
  for the receiver to correctly determine the requested source message
  $\sRV_{[{\cal N}]}$, it must be true that $H(V_\alpha) + H(W) \ge
  H(\sRV_{[\N]})$. Furthermore, $H(W)\leq h(\N)-h(\alpha)$. Hence,
  \begin{align*}
    H(V_\alpha) + h({\cal N}) - h(\alpha) &\ge H(V_\alpha) + H(W) \\
    &\ge H(\sRV_{[{\cal N}]}) \\
    &\ge  h({\cal N}),
  \end{align*}
  where the last line follows from (\ref{thm:one:1}).  As a result,
  $H(V_\alpha)\ge h(\alpha)$.
\end{proof}

\begin{lemma}\label{claim:2}
  $H(V_j) = h(j)$ for all $j \in \N$.
\end{lemma}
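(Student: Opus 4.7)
The plan is simple: I would deduce Lemma \ref{claim:2} by sandwiching $H(V_j)$ between two matching bounds that have already been established. The upper bound comes for free from the capacity constraint on the edges carrying the $V_j$'s, namely inequality (\ref{thm:one:3}), which gives $H(V_j) \le h(j)$ for every $j \in \N$. The lower bound is obtained by specializing the previous lemma: applying Lemma \ref{claim:1} to the singleton $\alpha = \{j\}$ yields $h(j) \le H(V_j)$.

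Putting the two inequalities together gives $H(V_j) = h(j)$, which is the claim. There is no serious obstacle here; the lemma is essentially a one-line corollary of Lemma \ref{claim:1} together with (\ref{thm:one:3}), both of which are already available under the standing assumption that $\multicastProblem(h)$ is admissible on $(\graph^\dagger,\multicastRequirement^\dagger)$. The only thing worth emphasizing in the write-up is that this matching of bounds pins down the entropy of each single $V_j$ exactly, and that this exact value will serve as the base case for the subsequent lemmas that aim to establish $H(V_\alpha) = h(\alpha)$ for larger subsets $\alpha\subseteq\N$ (where the type 2 subnetworks, via their butterfly-style constraints, will supply the matching upper bound that Lemma \ref{claim:1} alone cannot provide).
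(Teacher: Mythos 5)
Your proof is correct and is exactly the paper's argument: the paper also dismisses this lemma as a direct consequence of Lemma \ref{claim:1} (applied to singletons) and the capacity constraint (\ref{thm:one:3}). Nothing is missing.
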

\begin{proof}
  A direct consequence of Lemma \ref{claim:1} and (\ref{thm:one:3}).
\end{proof}

By Lemma \ref{claim:2} we have taken a small step towards our goal,
establishing $H(V_\alpha)=h(\alpha)$ for $|\alpha|=1$. Extension to
all $\alpha$ will be achieved by induction on $|\alpha|$. To this end,
the remaining lemmas take the hypothesis $H(V_\alpha)=h(\alpha)$ for
$|\alpha|=k<N$, and are proved in the context of type 2 subnetworks
indexed by $\alpha$ and an element $i\in\N$, $i\not\in\alpha$, as
shown in Figure \ref{fig:butterfly}.

\begin{lemma}
\label{subclaim41}
  In type 2 subnetworks, $W \indep \sRV_{[\alpha]}$. Furthermore, if
  $V_\alpha=h(\alpha)$, then $H(V_\alpha |  W, \sRV_{[\alpha]})=0$.
\end{lemma}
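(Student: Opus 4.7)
The plan is to extract both parts of the lemma from the bottleneck structure of the type~2 subnetwork in Figure~\ref{fig:butterfly}. Admissibility constrains the bottleneck edge carrying $W$ via its capacity label, while the two receivers (respectively requiring $\sRV_{[\alpha]}$ and $\sRV_{[\N]}$) impose two zero-error decoding relations with side information drawn from the Part~1 edges carrying $V_\alpha$ and the two copies of $V_i$. Together with Lemma~\ref{claim:0} ($H(\sRV_{[\alpha]})=h(\alpha)$ and $H(\sRV_{[\N]})=h(\N)$), Lemma~\ref{claim:2} ($H(V_i)=h(i)$), and the source independence~\eqref{thm:one:2}, these pieces will sandwich the relevant joint entropies tightly.

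For the independence $W\indep\sRV_{[\alpha]}$, I would write
\[
h(\alpha)+h(\N)=H(\sRV_{[\alpha]},\sRV_{[\N]}) \le H(W,\sRV_{[\alpha]},V_i) \le H(W)+H(\sRV_{[\alpha]})+H(V_i),
\]
where the first inequality substitutes the two decoding relations (so that the pair $(\sRV_{[\alpha]},\sRV_{[\N]})$ is a function of $(W,\sRV_{[\alpha]},V_i)$) and the second is subadditivity. The capacity upper bound on $W$ read off from Figure~\ref{fig:butterfly}, combined with $H(V_i)=h(i)$ and $H(\sRV_{[\alpha]})=h(\alpha)$, matches the left-hand side and forces equality at every step; the equality case of the subadditive step, $H(W,\sRV_{[\alpha]})=H(W)+H(\sRV_{[\alpha]})$, is precisely the desired $W\indep\sRV_{[\alpha]}$.

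For the conditional $H(V_\alpha\mid W,\sRV_{[\alpha]})=0$, I would rerun the same chain under the inductive hypothesis $H(V_\alpha)=h(\alpha)$, this time keeping $V_\alpha$ explicit on the right-hand side. Because $V_\alpha$ is delivered by Part~1 as a side-information input into the subnetwork and the receiver requiring $\sRV_{[\N]}$ must decode from a set that includes $W$ and (functions of) the sources, the decoding identity combined with the induction hypothesis forces $V_\alpha$ to be a function of $(W,\sRV_{[\alpha]})$ once every slack in the independence chain has already been eliminated. The tightness obtained in the previous paragraph is exactly what is needed, and the residual conditional entropy collapses to zero.

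The main obstacle is the careful bookkeeping in matching the edge labels in Figure~\ref{fig:butterfly}---which copy of $V_i$ feeds which receiver, where $V_\alpha$ enters, and which specific edge is $W$---to the decoding identities so that the sandwich has no slack. Once that routing is pinned down, the proof amounts to a single application of the equality case of subadditivity, together with the previously established Lemmas~\ref{claim:0} and~\ref{claim:2} and the source independence~\eqref{thm:one:2}; no non-Shannon inequality is required.
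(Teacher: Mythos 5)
Your overall strategy (a sandwich of entropies forced to be tight, then reading independence off the equality case of subadditivity) is the same as the paper's, but the concrete chain you propose for the first assertion does not work, and the problem is not mere bookkeeping. In the type 2 subnetwork the receiver demanding $\sRV_{[\N]}$ decodes from $W$, $\sRV_{[\alpha]}$ and the direct edge $W'$ of capacity $h(\N)-h(\alpha)$; it does not see $V_i$, and, more importantly, $(W,\sRV_{[\alpha]},V_i)$ cannot determine $\sRV_{[\N]}$ even information-theoretically: $W$ is computed from $(V_\alpha,\sRV_{[\alpha]})$, so the triple carries at most $V_{\alpha\cup\{i\}}$ worth of information about $\sRV_{[\N]}$, which falls short whenever $h(\alpha,i)<h(\N)$. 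Even granting your decoding claim, the numbers do not close: your upper bound is $H(W)+H(\sRV_{[\alpha]})+H(V_i)\le 2h(\alpha)+h(i)$, which is not bounded by the left-hand side $h(\alpha)+h(\N)$ in general, so the chain retains slack and no equality (hence no independence) can be extracted. The paper's chain instead runs $H(\sRV_{[\alpha]})+H(\sRV_{[\N]})=H(\sRV_{[\alpha]},\sRV_{[\N]})\le H(W,\sRV_{[\alpha]},W')\le H(W,\sRV_{[\alpha]})+H(W')\le H(W)+H(\sRV_{[\alpha]})+H(W')\le h(\alpha)+H(\sRV_{[\alpha]})+\bigl(h(\N)-h(\alpha)\bigr)$, which returns exactly to the starting point; the capacity $h(\N)-h(\alpha)$ of $W'$ is precisely what makes the sandwich tight, and $V_i$ plays no role in this lemma (it enters only in Lemmas \ref{subclaim42} and \ref{subclaim43}).

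Your treatment of the second assertion is also too loose: ``rerunning the chain with $V_\alpha$ explicit'' does not by itself force $H(V_\alpha\mid W,\sRV_{[\alpha]})=0$. The missing ingredient is an encoding (not decoding) fact: the edge carrying $W$ is computed at a node whose inputs are $V_\alpha$ and $\sRV_{[\alpha]}$, so $H(W\mid V_\alpha,\sRV_{[\alpha]})=0$. With this, $H(V_\alpha\mid W,\sRV_{[\alpha]})=H(V_\alpha,W,\sRV_{[\alpha]})-H(W,\sRV_{[\alpha]})=H(V_\alpha,\sRV_{[\alpha]})-H(W,\sRV_{[\alpha]})\le H(V_\alpha)+H(\sRV_{[\alpha]})-2h(\alpha)=0$, using the tightness consequences of the first part ($H(W)=h(\alpha)$ and $H(W,\sRV_{[\alpha]})=2h(\alpha)$), Lemma \ref{claim:0}, and the hypothesis $H(V_\alpha)=h(\alpha)$. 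Without isolating $H(W\mid V_\alpha,\sRV_{[\alpha]})=0$ and these tightness facts, the conclusion does not follow from what you have written.
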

\begin{proof}
By (\ref{thm:one:2}), $\sRV_{[\alpha]}\indep\sRV_{[{\cal N}]}$ and hence
\begin{align*}
  H\left(\sRV_{[\alpha]}\right) + H\left(\sRV_{[\N]}\right) &=
  H\left(\sRV_{[\alpha]}, \sRV_{[\N]}\right) \\
  &\leq H\left(\sRV_{[\alpha]}, \sRV_{[{\cal N}]}, W, W'\right) \\
  &\nle{(i)}  H(W, \sRV_{[\alpha]}, W') \\
  &= H(W,\sRV_{[\alpha]}) + H(W'\mid W,\sRV_{[\alpha]}) \\
  &\nle{(ii)} H(W,\sRV_{[\alpha]}) + H(W')\\
  &\le H(W) + H(\sRV_{[\alpha]}) + H(W')\\
  &\nle{(iii)} h(\alpha) + H(\sRV_{[\alpha]}) +   H(W')  \\
  &\nle{(iv)} h(\alpha) + H(\sRV_{[\alpha]}) +  h({\cal N}) - h(\alpha)  \\
  &\nequal{(v)}  H(\sRV_{[\alpha]}) + H(\sRV_{[{\cal N}]}).
\end{align*}
The inequality $(i)$ follows from the fact that $\sRV_{[{\cal N}]}$ is
determined from $W, \sRV_{[\alpha]}, W^\prime$ at the upper receiver
in Figure \ref{fig:butterfly}. Inequality $(ii)$ is by discarding
conditioning (note that both $W$ and $W'$ depend on $\sRV_{[\N]}$, so
this is indeed only an inequality). Inequalities $(iii)$ and $(iv)$ follow from
the type 2 subnetwork capacity constraints,
\begin{align}
  H(W)&\leq h(\alpha) \label{eq:W}\\
  H(W')&\leq h(\N)-h(\alpha) \label{eq:Wprime}
\end{align}
and from Lemma \ref{claim:0}. Finally, $(v)$ is by Lemma
\ref{claim:0}.  Thus the series of inequalities is actually a series
of identities, and as a result,
\begin{align}
  H(W)&=h(\alpha) \label{eq:Walpha}\\
  H(W, \sRV_{[\alpha]}) &= H(W) +
  H(\sRV_{[\alpha]}) = 2h(\alpha) \label{eq:WS}
\end{align}
which proves $W\indep \sRV_{[\alpha]}$. Now consider
\begin{align*}
  H(V_\alpha | W, \sRV_{[\alpha]}) & = H(V_\alpha, W,
  \sRV_{[\alpha]}) - H( W, \sRV_{[\alpha]})\\
  &\nequal{(i)} H(V_\alpha,  \sRV_{[\alpha]}) - H( W, \sRV_{[\alpha]})\\
  &\leq H(V_\alpha) + H(\sRV_{[\alpha]}) - H( W, \sRV_{[\alpha]}) \\
  &\nequal{(ii)} H(V_\alpha) - h(\alpha) \\
  &=0\ \text{if}\ H(V_\alpha)=h(\alpha)
\end{align*}
where $(i)$ holds since $W$ is a function of $V_\alpha,
\sRV_{[\alpha]}$ and $(ii)$ is by (\ref{eq:Walpha}) and (\ref{eq:WS}).
\end{proof}

\begin{lemma}\label{subclaim42}
  In type 2 subnetworks, $H(W|V_\alpha, W^*) = H(W|W^*) = H(W)$, or
  equivalently, $I(W;V_\alpha, W^*)=0$.
\end{lemma}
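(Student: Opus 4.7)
The plan is to mirror the chain-of-inequalities argument used in the proof of Lemma~\ref{subclaim41}. That proof worked by sandwiching the sum $H(\sRV_{[\alpha]}) + H(\sRV_{[\N]}) = h(\alpha)+h(\N)$ between two equal quantities, forcing every intermediate inequality (functional-recovery, subadditivity, capacity bound) to be an equality, and then reading off $W\indep\sRV_{[\alpha]}$ from the tight subadditivity step. I would do exactly the same here, but with $(V_\alpha, W^*)$ playing the role that $\sRV_{[\alpha]}$ played before, so that the tight subadditivity step yields $I(W;V_\alpha,W^*)=0$.

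Concretely, I would start from $h(\alpha)+h(\N)=H(\sRV_{[\alpha]},\sRV_{[\N]})$, using Lemma~\ref{claim:0} and the source independence~\eqref{thm:one:2}. At the upper receiver of the type~2 subnetwork in Figure~\ref{fig:butterfly}, $\sRV_{[\N]}$ is a function of the edge messages incident to it (which include $W$, $W^*$, and $V_i$), giving the bound
\[
H(\sRV_{[\alpha]},\sRV_{[\N]}) \le H(\sRV_{[\alpha]}, W, W^*, V_i).
\]
Since $V_\alpha$ is itself a function of $\sRV_{[\N]}$, I can introduce it for free on the right, and then split the joint entropy as $H(W)+H(V_\alpha, W^*)+H(\sRV_{[\alpha]}\mid W,V_\alpha,W^*)+H(V_i\mid\cdots)$ using subadditivity. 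Bounding each piece by its capacity budget (using $H(W)\le h(\alpha)$ from~\eqref{eq:W}, $H(\sRV_{[\alpha]})\le h(\alpha)$ from Lemma~\ref{claim:0}, the capacity of $W^*$, and $H(V_i)\le h(i)$ from Lemma~\ref{claim:2}), I expect the total to collapse back to exactly $h(\alpha)+h(\N)$. Every inequality must then be an equality; in particular the subadditivity split of $\{W\}$ against $\{V_\alpha,W^*\}$ is tight, which gives $I(W;V_\alpha,W^*)=0$, and the weaker statement $H(W\mid W^*)=H(W)$ follows immediately.

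The main obstacle will be bookkeeping: I must pick exactly the right grouping of edges and the right set of capacity constraints so that the resulting upper bound telescopes to $h(\alpha)+h(\N)$ rather than something strictly larger. The type~2 subnetwork has been designed so that the capacity budgets on $W$, $W^*$, $W'$, $W^{**}$ together with the three ``free'' carriers $V_\alpha$, $V_i$, $\sRV_{[\alpha]}$ exactly match the total source rate $h(\alpha)+h(\N)$; the work is in verifying this matching in the correct order so that each subadditivity split one wishes to be tight ends up next to the corresponding capacity bound. I expect no use of non-Shannon inequalities—only the polymatroid axioms plus Lemmas~\ref{claim:0}--\ref{subclaim41} and the capacity labels in Figure~\ref{fig:butterfly}.
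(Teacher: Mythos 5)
Your overall instinct (force a chain of Shannon inequalities to collapse into equalities) is the same as the paper's, but the specific sandwich you propose has two concrete defects. First, the decoding step is wrong: in the type~2 subnetwork the receiver that must reconstruct $\sRV_{[\N]}$ is the upper one, and its incident edges are $W$, $W'$ and the edge carrying $\sRV_{[\alpha]}$ --- not $W^*$ and $V_i$, which feed the lower receiver (the one that reconstructs $\sRV_{[\alpha]}$ from $W$ and $W^*$). Hence the inequality $H(\sRV_{[\alpha]},\sRV_{[\N]})\le H(\sRV_{[\alpha]},W,W^*,V_i)$ is unjustified; in fact $W$ is a function of $(V_\alpha,\sRV_{[\alpha]})$ and $W^*$ is a function of $(V_\alpha,V_i)$, so the right-hand side is a function of $(\sRV_{[\alpha]},V_\alpha,V_i)$ and generically cannot determine $\sRV_{[\N]}$. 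Second, the budget cannot close: your target $h(\alpha)+h(\N)$ involves $h(\N)$, but the only capacity in the subnetwork involving $h(\N)$ is that of $W'$ (see \eqref{eq:Wprime}), which your decomposition omits; moreover $V_\alpha$ rides on uncapacitated edges, and Lemma~\ref{subclaim42} is stated and proved \emph{without} the induction hypothesis $H(V_\alpha)=h(\alpha)$ (only Lemmas~\ref{subclaim41} and~\ref{subclaim43} assume it), so there is no upper bound on $H(V_\alpha)$ with which your telescoping sum could come back to $h(\alpha)+h(\N)$.

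The correct argument is anchored at the lower receiver and needs neither $h(\N)$ nor any bound on $H(V_\alpha)$. Since $W^*$ is a function of $(V_\alpha,V_i)$ (via $W''$) and the lower receiver recovers $\sRV_{[\alpha]}$ from $(W,W^*)$, we have $H(\sRV_{[\alpha]}\mid W,V_\alpha,V_i)=0$; also $\sRV_{[\alpha]}\indep (V_\alpha,V_i)$ by \eqref{thm:one:2}, because every $V_j$ is a function of $\sRV_{[\N]}$. If you want a sandwich in your style, the right one is
\begin{align*}
H(\sRV_{[\alpha]})+H(V_\alpha,V_i) &= H(\sRV_{[\alpha]},V_\alpha,V_i)\le H(\sRV_{[\alpha]},V_\alpha,V_i,W)= H(V_\alpha,V_i,W)\\
&\le H(V_\alpha,V_i)+H(W)\le H(V_\alpha,V_i)+h(\alpha)= H(V_\alpha,V_i)+H(\sRV_{[\alpha]}),
\end{align*}
using \eqref{eq:W} and Lemma~\ref{claim:0}; equality throughout gives $I(W;V_\alpha,V_i)=0$, hence $I(W;V_\alpha,W^*)=0$ because $W^*$ is a function of $(V_\alpha,V_i)$, and $H(W\mid V_\alpha,W^*)=H(W\mid W^*)=H(W)$ follows. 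This is exactly the paper's proof, written there as a circular chain beginning and ending at $H(W\mid V_\alpha,W^*)$; so your plan can be salvaged, but only after replacing the receiver, the side-information set, and the capacity budget.
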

\begin{proof}
  Recalling that $i\not\in\alpha\subset\N$,
  \begin{align*}
    H(W|V_\alpha, W^*) &\ge H(W | V_\alpha, W^*, V_i) \\
    &\nequal{(i)} H(W | V_\alpha, V_i) \\
    &\nequal{(ii)} H(W | V_\alpha, V_i) +
    H(\sRV_{[\alpha]} |V_\alpha, V_i, W ) \\
    &= H(W, \sRV_{[\alpha]} | V_\alpha, V_i) \\
    &\ge H( \sRV_{[\alpha]} | V_\alpha, V_i) \\
    &\nequal{(iii)} H(\sRV_{[\alpha]}) \\
    &\nequal{(iv)} h(\alpha) \\
    &\nge{(v)} H(W) \\
    &\ge  H(W|W^*) \\
    &\ge H(W|V_\alpha, W^*)
  \end{align*}
  where $(i)$ follows from the fact that $W^*$ is a function of
  $V_\alpha, V_i$, $(ii)$ follows from that $\sRV_{[\alpha]}$ can be
  reconstructed at the lower receiver, and $(iii)$ follows from
  independence of $\sRV_{[\alpha]}$ and $(V_\alpha, V_i )$, since by
  (\ref{thm:one:2}) $\sRV_{[\alpha]} \indep \sRV_{[\N]}$ and all the
  $V_j,j\in\N$ depend only on $\sRV_{[\N]}$. Finally, $(iv)$ is by
  Lemma \ref{claim:0}, $(v)$ is by the capacity constraint
  (\ref{eq:W}) and the remaining inequalities simply add extra
  conditioning. Thus the chain of inequalities is actually a chain of
  identities, the last three proving the lemma.
\end{proof}

\begin{lemma}\label{subclaim43}
  In type 2 subnetworks, assuming $H(V_\alpha)=h(\alpha)$,
  $H(W^*|V_\alpha) = H(V_\alpha | W^*) = 0$.
\end{lemma}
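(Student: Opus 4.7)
The plan is to prove both conditional entropies vanish by a chain of (in)equalities parallel to those of Lemmas~\ref{subclaim41} and~\ref{subclaim42}. The basic strategy is to force the joint entropy $H(V_\alpha, W^*)$ to equal both $H(V_\alpha)$ and $H(W^*)$, each of which will in turn be shown to equal $h(\alpha)$. Once $H(V_\alpha, W^*) = H(V_\alpha) = H(W^*) = h(\alpha)$, the two conditional entropies $H(W^*|V_\alpha)$ and $H(V_\alpha|W^*)$ are zero by definition.

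First I would read off from Figure~\ref{fig:butterfly} the edge capacity governing $W^*$, which the butterfly structure dictates is $h(\alpha)$, so $H(W^*) \le h(\alpha)$. Combined with the inductive hypothesis $H(V_\alpha)=h(\alpha)$, it then suffices to obtain the matching lower bound $H(W^*) \ge h(\alpha)$ and the equality $H(V_\alpha, W^*) = h(\alpha)$. I would do this by mimicking the chain of Lemma~\ref{subclaim42}, starting from a decoding constraint at the receiver that has $W^*$ as an incident input and must reconstruct one of the source messages.

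The chain will use (a) the independence $\sRV_{[\alpha]}\indep\sRV_{[\N]}$ from \eqref{thm:one:2}; (b) the fact that $W^*$ is a function of $(V_\alpha, V_i)$; (c) the identity $H(W) = h(\alpha)$ together with $I(W; V_\alpha, W^*) = 0$ from Lemma~\ref{subclaim42}; (d) the capacity bounds \eqref{eq:W}--\eqref{eq:Wprime} and the edge capacity on $W^*$; and (e) Lemma~\ref{claim:0} together with the hypothesis $H(V_\alpha) = h(\alpha)$. Expanding $H(V_\alpha, W^*, \cdot)$ in two different ways (once via chain rule separating $W^*$ from $V_\alpha$, and once via the functional dependence $W^* = f(V_\alpha, V_i)$), and sandwiching between a source-independence lower bound and an edge-capacity upper bound, I expect the entire chain to collapse to equalities, giving the desired identities simultaneously.

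The main obstacle will be choosing the right receiver-decoding step to seed the chain, since unlike Lemma~\ref{subclaim42} (which produced a mutual-independence statement) we now need two functional-dependence statements at once. The cleanest route is probably to exploit the butterfly's symmetry: the edge carrying $W^*$ plays, for $V_\alpha$, the same bottleneck role that $W$ plays for $\sRV_{[\alpha]}$, so the same tightness argument used in Lemmas~\ref{subclaim41} and~\ref{subclaim42} should apply with the roles of $\sRV_{[\alpha]}$ and $V_\alpha$ interchanged. Once the chain is set up, the forced equalities will deliver $H(W^* | V_\alpha) = H(V_\alpha | W^*) = 0$ and complete the induction step needed to establish $H(V_\alpha) = h(\alpha)$ for all $\alpha$.
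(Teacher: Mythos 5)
Your reduction of the lemma to the three equalities $H(V_\alpha,W^*)=H(V_\alpha)=H(W^*)=h(\alpha)$ is sound, and the outer pieces are within reach ($H(W^*)\le h(\alpha)$ is the capacity constraint, and the matching lower bound follows from the lower receiver's decoding together with $W\indep \sRV_{[\alpha]}$). The genuine gap is the step that carries all the content, namely $H(V_\alpha\mid W^*)=0$ (equivalently $H(V_\alpha,W^*)\le h(\alpha)$): the ingredients you explicitly list --- the independence \eqref{thm:one:2}, the fact that $W^*$ is a function of $(V_\alpha,V_i)$, Lemma~\ref{subclaim42}, the capacity bounds, and Lemma~\ref{claim:0} with the hypothesis --- provably do not imply it. Take $h(\alpha)=1$, let $\sRV_{[\alpha]}$ and $V_\alpha$ be independent uniform bits, and set $W=\sRV_{[\alpha]}$ and $W^*$ constant: then $H(W)=h(\alpha)$, $I(W;V_\alpha,W^*)=0$, $H(\sRV_{[\alpha]}\mid W,W^*)=0$, all capacity and independence constraints you invoke hold, yet $H(V_\alpha\mid W^*)=1$. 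What excludes this in the actual network is the \emph{upper} receiver, through Lemma~\ref{subclaim41}: $W\indep\sRV_{[\alpha]}$ and, once the hypothesis $H(V_\alpha)=h(\alpha)$ is injected, $H(V_\alpha\mid W,\sRV_{[\alpha]})=0$. Neither fact appears in your list, and the appeal to ``the same tightness argument with the roles of $\sRV_{[\alpha]}$ and $V_\alpha$ interchanged'' does not produce them: the roles are not symmetric ($\sRV_{[\alpha]}$ is an independent source available uncoded at the upper receiver, while $V_\alpha$ reaches the lower receiver only through the capacitated edges $W''$ and $W^*$), and if you seed the Lemma~\ref{subclaim41}-style sandwich at the lower receiver, $H(\sRV_{[\alpha]})+H(V_\alpha)=H(\sRV_{[\alpha]},V_\alpha)\le H(W,W^*,V_\alpha)\le H(W)+H(W^*)+H(V_\alpha)$, the chain retains a slack of $h(\alpha)$ and does not collapse.

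Your other suggested expansion, via the functional dependence $W^*=f(V_\alpha,V_i)$, bounds $H(V_\alpha,W^*)$ by $H(V_\alpha,V_i)$ and is circular at this point of the induction: bounding $H(V_\alpha,V_i)$ by $h(\alpha,i)$ is exactly what the induction step is trying to accomplish \emph{using} this lemma. The argument that works is short but requires assembling three facts you never combine: by Lemma~\ref{subclaim42}, $I(V_\alpha;W\mid W^*)=0$, so $H(V_\alpha\mid W^*)=H(V_\alpha\mid W,W^*)$; then $H(V_\alpha\mid W,W^*)\le H(\sRV_{[\alpha]}\mid W,W^*)+H(V_\alpha\mid W,W^*,\sRV_{[\alpha]})\le 0+H(V_\alpha\mid W,\sRV_{[\alpha]})=0$, using the lower receiver's decoding for the first term and Lemma~\ref{subclaim41} (where the hypothesis enters) for the second. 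After that, $H(W^*\mid V_\alpha)=H(W^*)+H(V_\alpha\mid W^*)-H(V_\alpha)\le h(\alpha)-h(\alpha)=0$ delivers your remaining equality without needing a separate lower bound on $H(W^*)$.
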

\begin{proof}
  \begin{align*}
    H(V_\alpha| W^*) &= H(V_\alpha| W^*, W) + I(V_\alpha ; W | W^*) \\
    &\nequal{(i)} H(V_\alpha| W^*, W)   \\
    &\le H(V_\alpha , \sRV_{[\alpha]}| W^*, W) \\
    &= H(V_\alpha | W^*, W, \sRV_{[\alpha]})  + H(\sRV_{[\alpha]}| W^*, W) \\
    &\nequal{(ii)} H(V_\alpha | W^*, W, \sRV_{[\alpha]}) \\
    &\leq H(V_\alpha |  W, \sRV_{[\alpha]})\\
    &\nequal{(iii)} 0.
  \end{align*}
  where $(i)$ follows from Lemma \ref{subclaim42}, $(ii)$ is
  because $\sRV_{[\alpha]}$ can be reconstructed at the lower
  receiver, and $(iii)$ is by Lemma \ref{subclaim41}, assuming
  $H(V_\alpha)=h(\alpha)$. Since conditional entropies are
  non-negative
  \begin{equation}\label{eq:alphaWstar}
    H(V_\alpha| W^*)=0.
  \end{equation}
  On the other hand,
  \begin{align*}
    H(W^*|V_\alpha) &= H(W^*, V_\alpha) - H(V_\alpha) \\
    &= H(W^*)+  H(V_\alpha | W^*) - H(V_\alpha) \\
    &\leq h(\alpha) - h(\alpha) = 0
  \end{align*}
  where the last inequality uses \eqref{eq:alphaWstar}, the type 2 subnetwork
  capacity bound $H(W^*)\leq h(\alpha)$ and the assumption
  $H(V_\alpha)=h(\alpha)$. Non-negativity of conditional entropy
  yields $H(W^*|V_\alpha)=0$.
\end{proof}

We are now ready to assemble the preceding lemmas into a proof for
the only-if part of Theorem \ref{thm:firstDuality}.
\begin{proof}[Proof: only-if part of Theorem \ref{thm:firstDuality}]
  The goal is to prove $H(V_\alpha) = h(\alpha)$ for all non-empty
  subsets $\alpha\subseteq\N$. This was already shown for $|\alpha| =
  1$ in Lemma \ref{claim:2}. Extension to all $\alpha$ will be
  achieved using induction. First, assume the hypothesis is true for
  all $\alpha\subset\N$ with $1\leq|\alpha| \le k < N$. For any $i\in\N$
  and $\alpha\subset\N$ such that $i\not\in\alpha$ and $|\alpha|=k$,
  consider the type 2 subnetwork of Figure \ref{fig:butterfly}. We
  must show that $H(V_\alpha, V_i) = h(\alpha \cup \{i\})\defined
  h(\alpha, i)$.  By Lemma \ref{claim:1} we already know that
  $H(V_\alpha, V_i) \geq h(\alpha, i)$. Therefore it remains only to
  prove $H(V_\alpha, V_i) \le h(\alpha, i)$. Now
  \begin{align*}
    H(V_i, V_\alpha) &\le H(V_i, V_\alpha, W^*) \\
    &\nequal{(i)} H(V_i,  W^*)\\
    &\leq H(V_i, W^*, W'') \\
    &\nequal{(ii)} H(V_i, W'') \\
    &\le H(V_i) + H(W^{\prime\prime})\\
    &\nle{(iii)} H(V_i) + h(\alpha,i) - h(i) \\
    &\nequal{(iv)} h(i)  + h(\alpha,i) - h(i) \\
    &= h(\alpha,i)
  \end{align*}
  where $(i)$ follows from Lemma \ref{subclaim43} (which holds under
  the induction hypothesis), $(ii)$ is due to the fact that $W^*$ is a
  function of $W'',V_i$ and $(iii)$ is from the subnetwork 2 capacity
  bound $H(W'')\leq h(\alpha,i) - h(i)$. Finally, $(iv)$ is by Lemma
  \ref{claim:2}.

  Up to this point, we have proved that $h$ is the entropy function of
  a set of random variables $\{V_1,\dots, V_N\}$. To show that $h$ is
  indeed quasi-uniform, it suffices to prove that for any subset
  $\alpha$ of $\N$, the set of random variables $V_\alpha$ is
  quasi-uniform.  Since we have just showed that
  $H(V_\alpha)=h(\alpha)$, if the receiver in the type 1 subnetwork
  can decode $\sRV_{[\N]}$, then $H(V_\alpha|W^\prime) =
  H(W^\prime|V_\alpha)=0$. Hence, $H(W^\prime)=h(\alpha)$. Now
  according to the link capacity constraint, $W^\prime$ is defined on
  an alphabet set of size $2^{h(\alpha)}$, and $W^\prime$ (and hence
  $V_\alpha$) must be quasi-uniform.
\end{proof}

It remains to prove the ``if'' statement in the theorem, i.e. to
show that quasi-uniform random variables imply admissibility.
\begin{proof}[Proof: if part of Theorem \ref{thm:firstDuality}]
  It suffices to show that one can construct a network code (defined
  by input variables, and message variables) meeting the connection
  requirement subject to the individual capacity constraint on each
  link.

  The construction for the input variables is simple. For any
  $\emptyset\neq\alpha\subseteq\N$, define $\sRV_{[\alpha]}$ to be a
  quasi-uniform random variable with entropy $h(\alpha)$. These input
  variables are also assumed to be independent. It remains to show
  that we can construct edge variables satisfying the capacity
  constraints, and which allow each receiver to reconstruct the
  requested messages perfectly.

  By the quasi-uniformity of $\sRV_{[\alpha]}$, it is clear that all
  receivers in type 0 subnetworks can reconstruct their requested
  message simply by having the source transmit the uncoded message,
  $W=\sRV_{[\alpha]}$.

  Let $\{V_j : j\in\N\}$ be a set of quasi-uniform random variables
  whose entropy function is $h$. Since $H(V_\N)=H(\sRV_{[\N]})$, there
  is a one-to-one mapping between $\supp(V_\N)$ and
  $\supp(\sRV_{[\N]})$. As they are both quasi-uniform, $\sRV_{[\N]}$
  and $(V_j : j\in\N)$ can be regarded as the same.

  For type 1 networks, by quasi-uniformity of $V_\alpha$, one can send
  $V_\alpha$ unencoded as $W^\prime$. Then the receivers see
  $V_\alpha$ and an auxiliary message $W$ defined on a sample space of
  size at most $2^{h(\N)-h(\alpha)}$. Reconstructing $\sRV_{[\N]}$ at
  the receiver is equivalent to reconstructing $V_{\N \backslash \alpha}$
  at the receiver.

  By the quasi-uniformity of $\sRV_{[\alpha]}$ and Lemma
  \ref{lemm:qucoding}, $V_{\N \backslash \alpha}$ can be compressed to
  a symbol $W$ of size $2^{h(\N)-h(\alpha)}$ such that $V_{\N
    \backslash \alpha}$ can be losslessly reconstructed from $W$ and
  $V_\alpha$.

  It remains to verify that receivers in type 2 subnetworks can
  reconstruct all requested messages.  Recall that both
  $\sRV_{[\alpha]}$ and $V_\alpha$ are quasi-uniform. Assume without
  loss of generality that their supports are $\{0, 1, 2, \dots,
  2^{h(\alpha)} -1 \}$. Then we can define $W\defined V_\alpha +
  \sRV_{[\alpha]} \mod 2^{h(\alpha)}$. It is easy to verify the
  following properties:
  \begin{align}
    H\left(W\mid V_\alpha, \sRV_{[\alpha]}\right) &=
    H\left(\sRV_{[\alpha]} \mid W, V_\alpha\right) =  H\left(V_\alpha
      \mid W,\sRV_{[\alpha]}\right) = 0, \label{eq:wprop1}\\
    \log|\supp(W)| &= h(\alpha).
  \end{align}
  By \eqref{eq:wprop1}, the upper receiver can correctly reconstruct
  $V_\alpha$ from $\sRV_{[\alpha]}$ and $W$. Using a similar
  compression scheme as used in type 1 subnetworks, source
  $\sRV_{[\N]}$ is compressed to $h(\N)-h(\alpha)$ bits, allowing
  lossless reconstruction of $\sRV_{[\N]}$ at the upper receiver.

  On the other hand, it is easy to see that $\{V_\alpha, V_i\}$ is
  quasi-uniform. Hence $V_\alpha$ can be compressed into
  $W^{\prime\prime}$ with a support of size $ |\supp(W^{\prime\prime})
  | = 2^{h(\alpha, i) - h(i)}$ such that $V_\alpha$ can be
  reconstructed by using $W^{\prime\prime}$ and $V_i$. As a result,
  $W^*$ may be transmitted as $V_\alpha$ without any encoding. The
  lower receiver can then recover $\sRV_{[\alpha]}$ from $V_\alpha$
  and $W$.

  Since all receivers can reconstruct their requested source messages
  with properly constructed message random variables satisfying the
  capacity constraints, the rate-capacity tuple $\multicastProblem(h)$
  is admissible.
\end{proof}

\begin{definition}
  A polymatroid $h$ is called \emph{almost entropic} if there exists a
  sequence of entropic pseudo-entropy functions $h^{(k)}$ and positive
  constants $r(k)$ such that $\lim_{k\to\infty} h^{(k)}/r(k) = h$.
\end{definition}

As $\bar{\Gamma}^*$ is a closed and convex cone
\cite{Yeung97framework}, the set of all almost entropic functions is
$\bar{\Gamma}^*$.  Theorem \ref{thm:firstDuality} establishes a
duality, or equivalence between the quasi-uniformity of $h$ and
admissibility of $\multicastProblem(h)$. The following theorem extends
this result to a duality between almost entropic $h$ and
asymptotically admissible (and achievable) $\multicastProblem(h)$.

\begin{theorem}\label{thm:FirstDualityExtension}
  Let $h\in\espace[\N]$ for $\N=\{1,2,\dots, N\}$ and let
  $\multicastProblem(h)$ be an induced rate-capacity tuple. Then we
  have,
  \begin{equation*}
    h \in \bar{\Gamma}^* \iff \multicastProblem(h) \in \Upsilon^\infty
    \iff  \multicastProblem(h) \in \Upsilon^\epsilon .
  \end{equation*}
  In other words, the rate-capacity tuple $\multicastProblem(h)$ is
  asymptotically admissible (or achievable) on the network
  $\graph^\dagger$ and connection requirement
  $\multicastRequirement^\dagger$ if and only if $h$ is almost
  entropic.
\end{theorem}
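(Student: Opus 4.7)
The plan is to prove the chain $h \in \bar{\Gamma}^* \Rightarrow \multicastProblem(h) \in \Upsilon^\infty \Rightarrow \multicastProblem(h) \in \Upsilon^\epsilon \Rightarrow h \in \bar{\Gamma}^*$. The middle implication is immediate from property \ref{structure3} of Section~\ref{sec:tradeoff}. The two outer implications will lift Theorem~\ref{thm:firstDuality} to the asymptotic and lossy regimes, respectively.

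For the forward direction, I would first invoke the fact that quasi-uniform entropy functions are dense in $\bar{\Gamma}^*$ after scaling: given any entropic $g$ realised by random variables $X_1,\dots,X_N$, taking $n$ i.i.d.\ copies and restricting to the jointly typical set produces quasi-uniform random variables whose entropy function is $n g + o(n)$. Combined with a diagonal argument over the defining sequence of $h \in \bar{\Gamma}^*$, this gives $h^{(k)} \in \Gamma^*_Q$ and scalars $r(k) > 0$ with $h^{(k)}/r(k) \to h$. Theorem~\ref{thm:firstDuality} places each $\multicastProblem(h^{(k)})$ in $\Upsilon^0$, and since $\multicastProblem$ is linear in $h$, $\multicastProblem(h^{(k)})/r(k) = \multicastProblem(h^{(k)}/r(k)) \to \multicastProblem(h)$. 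Property \ref{structure2}, namely $\con(\Upsilon^0) = \Upsilon^\infty$, then puts $\multicastProblem(h)$ in $\Upsilon^\infty$.

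For the converse, suppose $\multicastProblem(h) \in \Upsilon^\epsilon$ and let $\{\networkCoding^{(n)}\}$ with normalising constants $r(n)$ be a witnessing sequence of codes whose error probability tends to zero. Define the normalised pseudo-entropies $h^{(n)}(\alpha) \defined H(V_\alpha^{(n)})/r(n)$ on the distinguished link variables $V_1^{(n)},\dots,V_N^{(n)}$ of Figure~\ref{fig:part1}. I would then reprove Lemmas~\ref{claim:0}--\ref{subclaim43} with Fano corrections: every zero-error identity $H(X \mid Y) = 0$ used in those lemmas becomes $H(X \mid Y) = o(r(n))$ via Fano's inequality applied to the vanishing decoding error and the capacity-bounded alphabets. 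The sandwiching chains of inequalities still close, now within additive slack $o(r(n))$, so dividing by $r(n)$ and taking $n \to \infty$ yields $h^{(n)}(\alpha) \to h(\alpha)$ for every $\alpha \subseteq \N$. Since each $h^{(n)}$ is (a positive scaling of) an entropic function, the limit $h$ lies in $\bar{\Gamma}^*$ by definition.

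The main obstacle is the book-keeping in the converse. Each chain of inequalities in the proof of Theorem~\ref{thm:firstDuality} is forced to collapse to equalities by tight sandwiching between capacity bounds and source-rate bounds; in the achievability setting these collapse to equalities only up to $o(r(n))$, and I would need to verify uniformly in $\alpha$ that every auxiliary identity (for example $H(W) = h(\alpha)$, $W \indep \sRV_{[\alpha]}$, and $H(V_\alpha \mid W^*) = 0$) survives the passage to the limit. The quasi-uniformisation step in the forward direction is standard, but I would need to check that the scalings align so that linearity of $\multicastProblem$ interacts cleanly with normalisation.
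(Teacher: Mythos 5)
Your proposal is correct and follows essentially the same route as the paper: quasi-uniform approximation of almost entropic functions (the typical-set construction you sketch is exactly what the cited references provide) plus Theorem~\ref{thm:firstDuality}, linearity of $\multicastProblem$ and property \ref{structure2} for the forward direction, property \ref{structure3} for the middle implication, and a Fano-corrected rerun of Lemmas~\ref{claim:0}--\ref{subclaim43} with $o(r(n))$ slack for the converse. The paper states this converse bookkeeping only as ``following a similar procedure as in the proof of Theorem~\ref{thm:firstDuality}'' (with a Bolzano--Weierstrass footnote to guarantee the relevant limits exist), so your explicit accounting of the vanishing slack is just a more detailed rendering of the same argument.
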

\begin{proof}
  Suppose that $h$ is almost entropic. We will first show that
  $\multicastProblem(h) \in \Upsilon^\infty$.  By
  \cite{Chan.Yeung02relation, Chan1998}, one can construct a sequence
  of quasi-uniform entropic functions $h^{(n)}$ and normalizing
  constants $r(n)$ that $\lim_{n\to\infty} h^{(n)}(\alpha)/{r(n)} =
  h(\alpha)$. By Theorem \ref{thm:firstDuality}, each
  $\multicastProblem(h^{(n)})$ is admissible. By property
  \ref{structure2}, the set $\Upsilon^\infty$ of asymptotically
  admissible rate-capacity tuples is a closed and convex cone and
  hence $\multicastProblem(h) \in \Upsilon^\infty$.

  Clearly, $\multicastProblem(h) \in \Upsilon^\infty$ implies that $\multicastProblem(h) \in \Upsilon^\epsilon$.
  It remains to show that $\multicastProblem(h)$ is achievable implying  that $h$ is almost entropic.
  Suppose that $\multicastProblem(h) \in \Upsilon^\epsilon$.
  According to
  Definition \ref{df:achievable}, one can construct a sequence of
  normalizing constants $r(n)$ and network codes
  $\networkCoding^{(n)}$ with source messages
  $\{\sRV_{[\alpha]}^{(k)}, \alpha\subseteq \N \}$ and edge messages
  $V_N^{(k)}$ such that\footnote{By the Bolzano-Wierstrass Theorem
    which says that any sequence in a closed and bounded interval has
    a convergent subsequence, we can safely assume that
    $\lim_{k\to\infty} \frac{1}{r(k)} H( \sRV_{[\alpha]}^{(k)} ,
    V_\beta^{(k)})$ exists for any nonempty subsets $\alpha, \beta$ of
    $\N$.}
  \begin{align}
   \lim_{k\to\infty} \frac{1}{r(n)} H\left(\sRV_{[\alpha]}^{(n)}\right) &\ge
    h(\alpha) \\
   \lim_{k\to\infty} \frac{1}{r(n)} H\left(V_i^{(n)}\right) &\le h(i)\\
   \lim_{n\to\infty} P_e\left(\networkCoding^{(n)}\right) &= 0.
  \end{align}
  For each value of the sequence index $n$, consider the network
  $\graph^\dagger$ and connection requirement
  $\multicastRequirement^\dagger$ of Figure \ref{fig:thenetwork} with
  sources $\sessions=\left\{\sRV^{(n)}_{[\alpha]},
    \emptyset\neq\alpha\in 2^\N \right\}$ and edge messages
  $V_\N^{(n)}$.  By the Fano inequality, the entropy of any source
  $s\in\sessions$ conditioned on the edge variables incident to any
  node in $\destinationLocation(s)$ can be made as small as desired by
  increasing $n$.  Following a similar procedure as in the proof for
  Theorem \ref{thm:firstDuality}, it can be proved that for any
  non-empty subset
  $\emptyset\neq\alpha\subseteq\N$, $$\lim_{k\to\infty} \frac{1}{r(n)}
  H\left(V_\alpha^{(k)}\right) = h(\alpha).$$ In other words, $h$ is
  almost entropic.
 \end{proof}

 \subsection{Second Duality: Linear group characterizable functions
   and linear network codes}
\label{sec:secondDuality}
 The first duality shows that $h$ is quasi-uniform (almost entropic)
 if and only if $\multicastProblem(h)$ is admissible (achievable). We
 will now prove a similar result, restricting the network codes to be
 linear.
\begin{theorem}\label{thm:SecondDuality}
  Let $h\in\espace[\N]$ for  $\N=\{1,2,\dots, N\}$. The
  induced rate-capacity tuple $\multicastProblem(h)$ is admissible
  using linear network codes on the network $\graph^\dagger$ and
  connection requirement $\multicastRequirement^\dagger$, if and only
  if $h$ is linear group characterizable, i.e.,
   \begin{equation*}
     h\in \Gamma^*_{L(q)} \iff \multicastProblem(h) \in
     \Upsilon^0_{L(q)}
   \end{equation*}
\end{theorem}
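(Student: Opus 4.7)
The plan is to parallel the proof of Theorem~\ref{thm:firstDuality}, with linearity built into both directions, using Proposition~\ref{prop:linearCodeAndLinearChar} as the bridge between linear network codes and linear group characterizations.

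For the only-if direction, I start from an admissible linear network code $\networkCoding$ for $\multicastProblem(h)$ on $(\graph^\dagger, \multicastRequirement^\dagger)$. Every entropy manipulation in Lemmas~\ref{claim:0}--\ref{subclaim43} and the induction used only zero-error decodability and the generic hypotheses \eqref{thm:one:1}--\eqref{thm:one:3}; none invokes linearity. Hence the conclusion $H(V_\alpha) = h(\alpha)$ for all $\emptyset \neq \alpha \subseteq \N$ transfers verbatim. Proposition~\ref{prop:linearCodeAndLinearChar} then supplies a linear group characterization of the whole family $\{\edgeRV_f, f\in\sessionsedges\}$ via a vector space $G$ and subspaces $\{G_f\}$ over $\field$. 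Restricting to the subfamily $\{V_j, j\in\N\}$ with associated subspaces $\{G_{V_j}\}$ gives a linear group characterization whose entropy function is precisely $h$, so $h\in\Gamma^*_{L(q)}$.

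For the if direction, take a vector space $G$ over $\field$ together with subspaces $G_1,\ldots,G_N$ realizing $h$, and let $V_1,\ldots,V_N$ be the induced quasi-uniform variables. All dimensions $h(\alpha)/\log q = \dim G - \dim\bigl(\bigcap_{j\in\alpha}G_j\bigr)$ are non-negative integers, so every source $\sRV_{[\alpha]}$ can be chosen as an independent uniform random vector of the appropriate dimension over $\field$, and $\sRV_{[\N]}$ identified (as a vector space) with $V_\N$. I then re-instantiate the construction used in the if-direction of Theorem~\ref{thm:firstDuality} entirely with $\field$-linear maps: type~$0$ subnetworks transmit sources uncoded; for type~$1$, send $V_\alpha$ uncoded and let $W$ be the image of $V_\N$ under a linear projection onto a complement of the $V_\alpha$-subspace of dimension $h(\N)-h(\alpha)$; for type~$2$, take $W \defined V_\alpha + \sRV_{[\alpha]}$ (vector addition in $\field$), $W^* \defined V_\alpha$, and $W^{\prime\prime}$ as a linear projection of $V_\alpha$ onto a complement of the $V_i$-subspace of dimension $h(\alpha,i)-h(i)$. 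Each decoder inverts the relevant linear map, and the zero-error property follows from the same conditional-independence identities established in the proof of Theorem~\ref{thm:firstDuality}.

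The main obstacle is verifying that the ad hoc compression steps from the quasi-uniform proof (the modular sum $W\defined V_\alpha+\sRV_{[\alpha]}\bmod 2^{h(\alpha)}$ and the Slepian--Wolf-style coding of Lemma~\ref{lemm:qucoding}) can be replaced by genuinely $\field$-linear maps. This is possible because, under the linear group characterization, every variable is a coset in a quotient vector space and the capacity labels in Figure~\ref{fig:thenetwork} are exactly differences of subspace dimensions; consequently, linear complement projections with precisely the required dimensions always exist, and the combinatorial counting arguments of Theorem~\ref{thm:firstDuality} reduce to routine linear algebra. Assembling the two directions gives $h\in\Gamma^*_{L(q)}\iff\multicastProblem(h)\in\Upsilon^0_{L(q)}$, as claimed.
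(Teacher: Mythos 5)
Your proposal is correct and follows essentially the same route as the paper: the only-if direction re-runs the Theorem~\ref{thm:firstDuality} lemmas and then invokes Proposition~\ref{prop:linearCodeAndLinearChar} to conclude $h\in\Gamma^*_{L(q)}$, and the if direction re-implements the achievability construction with $\field$-linear maps, where your ``complement projection'' step is exactly the content of the paper's Lemma~\ref{lemma:sideinfonetwork} and its vector-addition construction for type~2 subnetworks. No substantive gap; only the explicit statement and proof of that linear side-information coding lemma is left as ``routine linear algebra'' in your write-up, which the paper spells out.
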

\begin{proof}[Proof: only-if part of Theorem \ref{thm:SecondDuality}]
  The proof of the only-if part is very similar to the one given in
  Theorem \ref{thm:firstDuality}. Suppose that $\multicastProblem(h)
  \in \Upsilon^0_{L(q)}$, i.e., it is admissible using a
  linear network code $\networkCoding $ on the network
  $\graph^\dagger$ and connection requirement
  $\multicastRequirement^\dagger$.  By Proposition
  \ref{prop:linearCodeAndLinearChar}, the set of induced source and
  link random variables by $\networkCoding $ is linear group
  characterizable. Using the same argument as in the proof for Theorem
  \ref{thm:firstDuality}, $h$ is the entropy function of a subset of
  these linear group characterizable random variables. Hence, $h$ is
  linear group characterizable.

  In fact, using the same argument, we can show that if the induced
  rate-capacity tuple $\multicastProblem(h)$ is admissible using
  abelian network codes on the network $\graph^\dagger$ and connection
  requirement $\multicastRequirement^\dagger$, then $h$ is abelian
  group characterizable.
\end{proof}

Before we prove the if part of Theorem \ref{thm:SecondDuality}, we
need the following lemma which serves a similar role as Lemma
\ref{lemm:qucoding} in the proof of Theorem \ref{thm:firstDuality} by
justifying the feasibility of certain ``compression'' scheme.
\begin{lemma}\label{lemma:sideinfonetwork}
  Consider a special case of the network depicted in Figure
  \ref{fig:sideinfo} where the left node receives $T_1(a)$ and $T_2(a)$
  as inputs, where $T_1$ and $T_2$ are two linear functions defined on
  a vector space ${\bf A}$ over $\field$.  Let the kernels of $T_1$
  and $T_2$ be respectively ${\bf B}_1$ and ${\bf B}_2$. Then, there
  exists a linear function $W$ of $T_1(a)$ and $T_2(a)$ such that (1)
  $T_1(a)$ is uniquely determined from $W$ and $T_2(a)$, and (2) $W$
  takes at most $q^{\dim {\bf B}_2 - \dim {\bf B}_1 \cap {\bf B}_2}$
  different values.
\end{lemma}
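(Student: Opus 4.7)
The plan is to construct $W$ explicitly as a linear function of $T_1(a)$ and $T_2(a)$ whose image is contained in the subspace $T_1({\bf B}_2)$ of the codomain of $T_1$. The target dimension follows immediately from the rank--nullity theorem applied to $T_1|_{{\bf B}_2}\colon {\bf B}_2 \to T_1({\bf A})$, whose kernel is ${\bf B}_1 \cap {\bf B}_2$: this gives $\dim T_1({\bf B}_2) = \dim {\bf B}_2 - \dim({\bf B}_1 \cap {\bf B}_2)$, which already matches the bound on $|W|$.

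First I would choose a vector-space complement $C$ of ${\bf B}_2$ in ${\bf A}$, so that ${\bf A} = {\bf B}_2 \oplus C$. Since $\kernel T_2 = {\bf B}_2$, the restriction $T_2|_C$ is injective and hence an isomorphism onto $T_2({\bf A})$; let $\sigma$ denote its inverse, which is a linear map from $T_2({\bf A})$ back into $C$. I would then set
\[ W \defined T_1(a) - T_1\bigl(\sigma(T_2(a))\bigr). \]
By construction $W$ is a linear function of $T_1(a)$ and $T_2(a)$, and the identity $T_1(a) = W + T_1(\sigma(T_2(a)))$ shows that $T_1(a)$ is linearly recoverable from the pair $(W,T_2(a))$, establishing property~(1).

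For the size bound in property~(2), I would decompose any $a \in {\bf A}$ uniquely as $a = b + c$ with $b \in {\bf B}_2$ and $c \in C$. Then $T_2(a) = T_2(c)$ forces $\sigma(T_2(a)) = c$, so $W = T_1(a) - T_1(c) = T_1(b)$. Hence the image of $W$ is contained in $T_1({\bf B}_2)$, which has cardinality $q^{\dim {\bf B}_2 - \dim({\bf B}_1 \cap {\bf B}_2)}$, as required.

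The argument is essentially routine linear algebra, so there is no serious obstacle. The main conceptual step is recognising that Slepian--Wolf-style compression with side information admits a clean linear analogue: one subtracts from $T_1(a)$ the unique component that is already determined by $T_2(a)$ via $\sigma$, leaving only the contribution coming from the ${\bf B}_2$-part of $a$. The only subtlety to watch for is that $W$ must be expressible as a linear function of $(T_1(a),T_2(a))$ rather than of $a$ itself, and the complement decomposition handles this automatically because $T_1(\sigma(T_2(a)))$ depends only on $T_2(a)$.
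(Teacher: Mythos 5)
Your construction is correct: $W = T_1(a) - T_1(\sigma(T_2(a)))$ is linear in the pair $(T_1(a),T_2(a))$, recovery of $T_1(a)$ from $(W,T_2(a))$ is immediate, and the observation that $W=T_1(b)$ for the ${\bf B}_2$-component $b$ of $a$ places the image of $W$ inside $T_1({\bf B}_2)$, whose dimension is $\dim {\bf B}_2 - \dim({\bf B}_1\cap{\bf B}_2)$ by rank--nullity applied to $T_1|_{{\bf B}_2}$. This differs in route from the paper's proof, which builds a four-way decomposition ${\bf A} = {\bf W}_0\oplus{\bf W}_1\oplus{\bf W}_2\oplus({\bf B}_1\cap{\bf B}_2)$ refining both kernels (with ${\bf B}_i$ spanned by ${\bf W}_i$ and ${\bf B}_1\cap{\bf B}_2$), writes $a=a_0+a_1+a_2+b$, and transmits the coordinate block $W=a_2\in{\bf W}_2$; there the receiver reconstructs the remaining information it needs from $T_2(a)$, and the rate bound comes from explicitly counting $\dim{\bf W}_2$. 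Your ``innovation'' construction (subtract from $T_1(a)$ its prediction through a section $\sigma$ of $T_2$) needs only a complement of ${\bf B}_2$ and gets the dimension bound for free from rank--nullity, so it is arguably cleaner and avoids the auxiliary maps $T_1^*,T_2^*$ of the paper; the paper's version, in exchange, exhibits $W$ as a coordinate projection of $a$ (in fact a function of $T_1(a)$ alone), which makes the compression interpretation very concrete. Both yield the same rate and both serve the role the lemma plays in the proof of Theorem \ref{thm:SecondDuality}.
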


\begin{proof}
  From ${\bf B}_1$ and ${\bf B}_2$, we can construct three subspaces
  ${\bf W}_1$, ${\bf W}_2$ and ${\bf W}_0$ such that $$\dim {\bf W}_0
  + \dim {\bf W}_1 + \dim {\bf W}_2 + \dim{\bf B}_1 \cap {\bf B}_2
  =\dim {\bf A}$$ and that for each $i=1,2$, the subspace ${\bf B}_i$
  is equal to the linear span of ${\bf W}_i$ and $ {\bf B}_1 \cap {\bf
    B}_2 $. Hence any $a\in {\bf A}$ can be written uniquely as
  $a=a_0+a_1+a_2+b$ where $a_i\in {\bf W}_i$ for $i=1,2,3$ and $b\in
  {\bf B}_1 \cap {\bf B}_2$.

Since $\kernel(T_1)={\bf B}_1$, we have $T_1(a_0+a_1+a_2+b) =
T_1(a_2)+T_1(b)$. Furthermore, one can easily construct a linear
function $T_1^*$ such that $T_1^*(T_1(a)) = (a_2, b)$. Similarly,
there exists a linear function $T_2^*$ such that $T_2^*(T_2(a)) =
(a_1, b)$.

To compute $T_1(a)$ at node 2, it suffices to compute $a_2$ as $b$ can
be computed directly from $T_2(a)$. A simple counting argument shows
that $a_2$ lies in a vector subspace of dimension $\dim {\bf B}_2 -
\dim {\bf B}_1 \cap {\bf B}_2$. Therefore, we can set
$W=a_2$ over the network and it takes at most
$q^{\dim {\bf B}_2 - \dim {\bf B}_1 \cap {\bf B}_2}$ different values.
\end{proof}

Now we may continue our proof for Theorem \ref{thm:SecondDuality}.
\begin{proof}[Proof: if part of Theorem \ref{thm:SecondDuality}]
  To prove the direct part of Theorem \ref{thm:SecondDuality}, we need
  to show that if $h$ is linear group characterizable, then one can
  construct a linear network code (defined by the induced source and
  link random variables) meeting the connection requirement subject to
  the individual capacity constraint on each link.

  Suppose that $h$ is linear group characterizable by a vector space
  ${\bf V}$ and its subspaces ${\bf V}_1,\dots, {\bf V}_N$, defined
  over a field $\field$. Assume without loss of generality that the
  subspaces intersect only at the zero vector, $\bigcap_{j=1}^N{\bf
    V}_j= \{{\bf 0}\}$. As such, $h(\N) = \log q \cdot (\dim {\bf V})$
  and for any $\alpha \subseteq \N$, we have $h(\alpha) =\log q \cdot
  (\dim {\bf V} - \dim \bigcap_{j\in\alpha} {\bf V}_j)$.

  For $j=1,\dots, N$, construct linear functions $f_j$ over ${\bf V}$
  such that $\kernel(f_j)={\bf V}_j$. The source random variable
  $\sRV_{[\N]}$ is uniformly distributed over ${\bf V}$ such that the
  link symbols transmitted in Figure \ref{fig:part1} are
  $V_j=f_j(\sRV_{[\N]})$. For any other
  $\emptyset\neq\alpha\subset\N$, define $\sRV_{[\alpha]}$ to be a
  random variable, uniformly distributed over a vector space of
  dimension $\log_q 2 \cdot h(\alpha)$ (hence,
  $H(\sRV_{[\alpha]})=h(\alpha)$). All these source random variables
  are assumed to be independent.

  Up to this point, we have described how source and link random
  variables are defined in Figure \ref{fig:part1}. It remains to show
  that we can construct a linear network code, consisting of a set of
  link random variables which are linear functions of the incident
  source/link random variables, satisfying the capacity constraints,
  and which allow each receiver to reconstruct the requested messages
  perfectly.

 For type 0 subnetworks, all receivers can reconstruct
their requested message simply by having the source transmit the
uncoded message, $W=\sRV_{[\alpha]}$.  Clearly, the associated link
random variables in these subnetworks are linear functions of the
incident ones and meet the capacity constraint.

For type 1 subnetworks, let $W^\prime = (V_i\where i\in\alpha) =
(f_i(\sRV_{[\N]}) \where i\in\alpha)$, which depends linearly on
$\sRV_{[\N]}$. Note that $(f_i(a) : i\in\alpha) = {\bf 0} $ if and
only if $ f_i(a)= {\bf 0}$ for all $i \in\alpha$, or equivalently,
when $a\in \bigcap_{i\in\alpha} {\bf V}_i$. By the rank-nullity
theorem, $W^\prime$ can take at most $|{\bf V}| /
|\bigcap_{i\in\alpha} {\bf V}_i|$ different values. We can thus treat
$W^\prime$ as a vector in space of dimension $\dim {\bf V} - \dim
\bigcap_{i\in\alpha} {\bf V}_i$.

As a result, the subnetwork can now be treated as a special case of
Lemma \ref{lemma:sideinfonetwork} such that $T_1(a) = a$ and $T_2(a) =
(f_i(a) : i \in \alpha)$. The dimensions of the kernels of $T_1$ and
$T_2$ are respectively $0$ and $\dim \bigcap_{i\in\alpha} {\bf
  V}_i$. By Lemma \ref{lemma:sideinfonetwork}, the required rate is
thus $\log q \cdot ( \dim \bigcap_{i\in\alpha} {\bf V}_i) =
h(\N)-h(\alpha)$.

Similarly, for type 2 subnetworks, let $W^{**} = (f_i(\sRV_{[\N]}) :
i\in\alpha)$.  As before, we can treat $W^{**}$ as a vector of length
$\dim {\bf V} - \dim \bigcap_{i\in\alpha} {\bf V}_i $. Similarly,
$\sRV_{[\alpha]}$ can also be regarded as a vector of the same length.
We can therefore define $W$ by vector addition,
$W=\sRV_{[\alpha]}+W^{**}$.  Consequently, the receiver in the upper
branch can reconstruct $V_\alpha$ by subtracting $\sRV_{[\alpha]}$
from $W$. As before, one can find $W^\prime$ as a linear function of
$\sRV_{[\N]}$ and this function allows $\sRV_{[\N]}$ to be
reconstructed from $W^\prime$ and $V_\alpha$.

For the lower branch, we can identify a special case of Figure
\ref{fig:sideinfo} with $T_1(a) = V_\alpha$ and $T_2(a) = V_i$. One can
construct $W^{\prime\prime}$ such that (1) $W^{\prime\prime}$ is a
linear function of $T_1(a) $ and $T_2(a)$, (2) the kernel
$\kernel(T_1) = \bigcap_{j\in\alpha} {\bf V}_j $ and
$\kernel(T_2)={\bf V}_i$, and (3) the rate required is $\dim
\bigcap_{j\in\alpha} {\bf V}_j - \dim {\bf V}_i \bigcap_{j\in\alpha}
{\bf V}_j$.  Therefore, we can reconstruct $V_\alpha$ from $W^{\prime\prime}$
and $T_2(a)$ where $T_1(a) = V_\alpha$. Again, treating $V_\alpha$ as
a vector of length $\dim {\bf V} - \dim \bigcap_{i\in\alpha} {\bf V}_i
$, the receiver at the lower branch can reconstruct $\sRV_{[\alpha]}$
by subtracting $V_\alpha$ from $W$.
\end{proof}

So far, we have proved that $h$ is linear group characterizable if and
only if the rate-capacity tuple $\multicastProblem(h)$ is admissible
with a linear network code.  As before, we can further generalize the
result to include the case when $h$ is almost linear group
characterizable according to the following definition.
\begin{definition}\label{def:almostlineargroupchar}
  A polymatroid $h$ is called \emph{almost linear group
    characterizable} if there exists a sequence of linear group
  characterizable entropy functions $h^{(k)}$ and positive constants
  $r(k)$ such that $\lim_{k\to\infty} h^{(k)}/r(k) = h$.
\end{definition}
It is easy to prove that the set of all almost linear group
characterizable polymatroids is $\con(\Gamma^*_{L(q)})$, the
minimal closed and convex cone containing $\Gamma^*_{L(q)}$.

\begin{theorem}\label{thm:SecondDualityExtension}
  Let $h\in\espace[\N]$ for $\N=\{1,2,\dots, N\}$ and let
  $\multicastProblem(h)$ be an induced rate-capacity tuple. Then we
  have
   \begin{equation*}
    h \in \con(\Gamma^*_{L(q)}) \iff \multicastProblem(h) \in \Upsilon^\infty_{L(q)} \iff  \multicastProblem(h) \in \Upsilon^\epsilon_{L(q)} .
  \end{equation*}
  In other words, the rate-capacity tuple $\multicastProblem(h)$ is
  asymptotically admissible (or achievable) by linear network codes on
  the network $\graph^\dagger$ and connection requirement
  $\multicastRequirement^\dagger$ if and only if $h$ is is almost
  linear group characterizable.
  \end{theorem}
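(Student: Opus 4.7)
The plan is to mirror the proof of Theorem~\ref{thm:FirstDualityExtension}, replacing the appeal to Theorem~\ref{thm:firstDuality} with its linear counterpart Theorem~\ref{thm:SecondDuality}, and using the fact that restrictions and limits preserve linear group characterizability in the appropriate sense. For the forward direction ($h\in\con(\Gamma^*_{L(q)})\Rightarrow\multicastProblem(h)\in\Upsilon^\infty_{L(q)}$) I would start from Definition~\ref{def:almostlineargroupchar}: there exist linear group characterizable entropy functions $h^{(k)}$ and normalizing constants $r(k)>0$ with $h^{(k)}/r(k)\to h$. By Theorem~\ref{thm:SecondDuality} each $\multicastProblem(h^{(k)})$ lies in $\Upsilon^0_{L(q)}$, and since $\multicastProblem$ is linear in $h$ and $\Upsilon^\infty_{L(q)}$ is a closed convex cone containing $\Upsilon^0_{L(q)}$ by the linear-code analogs of properties~\ref{structure1}--\ref{structure3}, we get $\multicastProblem(h)=\lim_k \multicastProblem(h^{(k)})/r(k)\in\Upsilon^\infty_{L(q)}$. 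The middle implication $\Upsilon^\infty_{L(q)}\subseteq\Upsilon^\epsilon_{L(q)}$ is property~\ref{structure3} restricted to linear codes.

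For the reverse direction ($\multicastProblem(h)\in\Upsilon^\epsilon_{L(q)}\Rightarrow h\in\con(\Gamma^*_{L(q)})$) I would follow the Fano-inequality strategy of Theorem~\ref{thm:FirstDualityExtension}. Take a sequence of linear network codes $\networkCoding^{(n)}$ over $\field$, with normalizing constants $r(n)$, vanishing error probability, and the required asymptotic constraints on source and edge entropies. By Proposition~\ref{prop:linearCodeAndLinearChar}, the entire set of source and link random variables $\{\sRV^{(n)}_{[\alpha]},\edgeRV^{(n)}_e\}$ induced by each $\networkCoding^{(n)}$ is linear group characterizable over $\field$; moreover, the subfamily $\{V^{(n)}_\alpha:\emptyset\neq\alpha\subseteq\N\}$ of coded messages is itself linear group characterizable, because keeping only a subset of the defining subspaces still provides a linear group characterization. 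The key arithmetic step is to replay Lemmas~\ref{claim:0}--\ref{subclaim43} asymptotically: each of those arguments used only capacity bounds on edges, source independence, and deterministic reconstruction at receivers, which under achievability hold up to an additive Fano term of order $o(r(n))$. Carrying the induction on $|\alpha|$ through in this asymptotic form yields
\[
\lim_{n\to\infty}\frac{1}{r(n)}H\bigl(V^{(n)}_\alpha\bigr)=h(\alpha),\quad\emptyset\neq\alpha\subseteq\N.
\]
Setting $h^{(n)}(\alpha)\triangleq H(V^{(n)}_\alpha)$ gives a sequence of linear group characterizable entropy functions with $h^{(n)}/r(n)\to h$, so $h\in\con(\Gamma^*_{L(q)})$.

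The main obstacle will be the Fano-style bookkeeping in the asymptotic replay of Lemmas~\ref{claim:0}--\ref{subclaim43}. Those lemmas extracted equalities by forcing a chain of inequalities to be simultaneously tight, e.g.\ the derivations that yielded $H(W)=h(\alpha)$, $W\indep\sRV_{[\alpha]}$, and $H(V_\alpha\mid W,\sRV_{[\alpha]})=0$ in a single pass. In the achievable regime each such step picks up a Fano term and one must verify that, after dividing by $r(n)$ and taking $n\to\infty$, every such term vanishes; the cleanest way is to collect all error terms at the outset and propagate them through the induction symbolically, rather than re-deriving each inequality. A secondary but important point, already used above, is the closure observation that any subfamily of a linear group characterizable set of random variables is again linear group characterizable; this keeps us inside $\Gamma^*_{L(q)}$ along the approximating sequence, which is what distinguishes the linear duality from the general one.
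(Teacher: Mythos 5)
Your proposal matches the paper's proof: the forward direction uses Definition~\ref{def:almostlineargroupchar}, Theorem~\ref{thm:SecondDuality} and the closed-convex-cone property~\ref{structure2} exactly as the paper does, and the reverse direction repeats the Fano-based argument of Theorem~\ref{thm:FirstDualityExtension} with the added observation that the coded messages $(V_j^{(n)}, j\in\N)$ of a linear code are linear group characterizable. The paper states this last step more tersely (citing the analogy with Theorem~\ref{thm:FirstDualityExtension} and Proposition~\ref{prop:linearCodeAndLinearChar}), but the substance is the same.
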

\begin{proof}
  Suppose that $h \in \con(\Gamma^*_{L(q)})$.
  By Definition \ref{def:almostlineargroupchar}, one can
  construct a sequence of linear group characterizable entropy
  functions $h^{(k)}$ and positive constants $r(k)$ such that
  $\lim_{k\to\infty} h^{(k)}/r(k) = h$.  By Theorem
  \ref{thm:SecondDuality}, each $\multicastProblem(h^{(n)})$ is
  admissible by linear network codes. By property \ref{structure2},
  the set $\Upsilon^\infty_{L(q)}$ of asymptotically
  admissible rate-capacity tuples is a closed and convex cone and
  hence $ \multicastProblem(h) \in \Upsilon^\infty_{L(q)}$.

  Clearly, $ \multicastProblem(h) \in \Upsilon^\infty_{L(q)}$ implies that
  $\multicastProblem(h) \in \Upsilon^\epsilon_{L(q)}$. It remains to prove that
  $\multicastProblem(h) \in \Upsilon^\epsilon_{L(q)}$ implies $h \in \con(\Gamma^*_{L(q)})$.

  Suppose that $\multicastProblem(h)$ is achievable by linear network
  codes.  Then one can construct a sequence of normalizing constants
  $r(n)$ and linear network codes $\networkCoding^{(n)}$ with source
  messages $(\sRV_{[\alpha]}^{(k)}, \alpha\subseteq {\cal N} )$ and
  edge messages $(V_j^{(k)}, j \in {\cal N})$ such that
  \begin{align}
   \lim_{k\to\infty} \frac{1}{r(n)} H\left(\sRV_{[\alpha]}^{(n)}\right) &\ge
    h(\alpha) \\
   \lim_{k\to\infty} \frac{1}{r(n)} H\left(V_j^{(n)}\right) &\le h(j)\\
   \lim_{n\to\infty} P_e\left(\networkCoding^{(n)}\right) &= 0.
  \end{align}
  Similar to the proof given in Theorem   \ref{thm:FirstDualityExtension}, it can be proved that for any
  non-empty subset $\emptyset\neq\alpha\subseteq\N$,
  $\lim_{k\to\infty} \frac{1}{r(n)} H\left(V_\alpha^{(k)}\right) =
  h(\alpha).$ In addition, as $(V_j^{(k)}, j \in {\cal N})$ is linear
  group characterizable, $h$ is almost linear group characterizable.
\end{proof}

\subsection{Third Duality: Polymatroids and the LP bound}
\label{sec:thirdDuality}
Theorem \ref{thm:FirstDualityExtension} provides a duality between
entropy functions and network codes, namely that a function
$h\in\espace[\N]$ is almost entropic if and only if
$\multicastProblem(h)$ is achievable on $\graph^\dagger$,
$\multicastRequirement^\dagger$. As the set of almost entropic
functions $\bar{\Gamma}^*$ has no explicit characterization for four
or more variables, the sets of admissible or achievable rate-capacity
tuples are unknown. Therefore computable bounds such as the linear
programming bound are of great interest.

Let $\Gamma$ be the set of all polymatroids. Definition \ref{def:LP}
writes the LP bound in terms of constraints on pseudo-variables. The
following theorem provides a direct generalization of the ideas of the
previous sections to pseudo-variables.
\begin{theorem}\label{thm:LPbd}
  Suppose $h\in\espace[\N]$. A rate-capacity tuple $(\inputRate(h),
  \edgeRate(h))$ satisfies the LP bound if and only if $h$ is a
  polymatroid,
  \begin{equation*}
    h\in\Gamma \iff \multicastProblem(h)\in\Upsilon_{LP}.
  \end{equation*}
\end{theorem}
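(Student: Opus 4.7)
The plan is to prove Theorem~\ref{thm:LPbd} by following the two-direction strategy of Theorem~\ref{thm:firstDuality}, but with random variables replaced by pseudo-variables throughout and with the role of quasi-uniformity replaced by the polymatroid axioms themselves.

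For the only-if direction, suppose $\multicastProblem(h)\in\Upsilon_{LP}$. By Definition~\ref{def:LP}, there exists a collection of pseudo-variables $\{\sRV_{[\alpha]}, \edgeRV_e\}$ satisfying the connection constraints~\eqref{eqn:outerbd} and the rate/capacity bounds. The crucial observation is that every inequality used in Lemmas~\ref{claim:0}--\ref{subclaim43} and in the induction that concludes the only-if part of Theorem~\ref{thm:firstDuality} relies only on the basic (Shannon) inequalities, together with the structural functional-dependence, independence, and capacity constraints. Since polymatroid axioms are equivalent to the basic inequalities, and pseudo-entropy functions are polymatroids by definition, the entire chain of arguments replays verbatim with $H(\cdot)$ interpreted as pseudo-entropy. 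This yields $H(V_\alpha)=h(\alpha)$ for every $\alpha\subseteq\N$ as a pseudo-entropy identity, and hence $h$, being the restriction of a pseudo-entropy function (a polymatroid) to the subsets of $\{V_1,\dots,V_N\}$, is itself a polymatroid.

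For the if direction, suppose $h\in\Gamma$. We must exhibit pseudo-variables $\{\sRV_{[\alpha]}, \edgeRV_e\}$ satisfying~\eqref{eqn:outerbd} and the rate/capacity bounds induced by $\multicastProblem(h)$. I would build these incrementally, one subnetwork at a time, by extending the polymatroid $h$ via the extension and adhesion tools developed in Appendix~\ref{app:LPbd}. First, take $\{V_1,\dots,V_N\}$ to be pseudo-variables whose pseudo-entropy function is exactly $h$. Next, for each nonempty $\alpha\subseteq\N$, introduce an independent source pseudo-variable $\sRV_{[\alpha]}$ with $H(\sRV_{[\alpha]})=h(\alpha)$ (independent of one another and of $V_\N$); this step is simply direct-sum extension, which preserves the polymatroid property. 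Then, for each subnetwork of the construction in Section~\ref{sec:construction}, add pseudo-variables for the internal edges in such a way that the required functional-dependence relations hold locally and the capacity bounds are met: in type 0 subnetworks set $W=\sRV_{[\alpha]}$; in type 1 subnetworks let $W'$ be a pseudo-variable adhering $V_\alpha$ to $\sRV_{[\N]}$ with $H(W')=h(\N)-h(\alpha)$ and the appropriate functional-dependence relations; in type 2 subnetworks, adhere a pseudo-variable $W$ to the triple $(V_\alpha,\sRV_{[\alpha]},\sRV_{[\N]})$ so that $W$ is a function of $V_\alpha$ and $\sRV_{[\alpha]}$, $V_\alpha$ is a function of $W$ and $\sRV_{[\alpha]}$, and $\sRV_{[\alpha]}$ is a function of $W$ and $V_\alpha$, with $H(W)=h(\alpha)$, and likewise introduce $W^{\prime\prime},W^{*},W^{**}$ with the appropriate capacities. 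Adhesion then glues these local polymatroids into a single polymatroid on the full ground set, which by construction satisfies the LP bound.

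The main obstacle is the if direction, specifically showing that all the local adhesion steps can be carried out and then composed into one global polymatroid. Each individual extension step is a standard adhesion of two polymatroids along a common restriction, but one must verify that the various functional-dependence and capacity requirements imposed by different subnetworks are mutually compatible, so that the sequence of extensions never produces a violation of submodularity. Here the fact that each subnetwork shares ground elements only through the pseudo-variables $V_\alpha$, $\sRV_{[\alpha]}$, and $\sRV_{[\N]}$ (whose pseudo-entropies are fixed to match $h$) is what makes the composition work: the adhesions can be performed sequentially along well-separated boundaries. Once this is established, the capacity constraint $H(\edgeRV_e)\le h(\alpha)$ on each internal edge matches the assignment in $\multicastProblem(h)$, and the rate constraint $H(\sRV_{[\alpha]})\ge h(\alpha)$ holds with equality, completing the argument.
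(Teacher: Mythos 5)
Your proposal is correct and follows essentially the same route as the paper: the only-if direction replays the Shannon-inequality arguments of Theorem~\ref{thm:firstDuality} verbatim with pseudo-entropies, and the if direction builds the edge pseudo-variables by functional, sum, and Slepian--Wolf-type extensions glued together by adhesion exactly as in Appendix~\ref{app:LPbd}. The compatibility issue you flag is precisely what the paper resolves via Mat\'{u}\v{s}'s adhesivity criterion (Corollary~\ref{cor:thmadhere} and Theorem~\ref{thm:adhere}), and your observation that subnetworks interact only through $V_\alpha$, $\sRV_{[\alpha]}$ and $\sRV_{[\N]}$ is the right reason the sequential adhesions compose.
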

\begin{proof}
  The ``only if'' part of the proof is a direct generalization of the
  proof of Theorem \ref{thm:firstDuality}.  Suppose $(\inputRate(h) ,
  \edgeRate(h) )$ satisfies the LP bound. By Definition \ref{def:LP}
  there exists a set of pseudo-variables satisfying the set of
  (in)equalities in (\ref{eqn:outerbd}). In particular, there are
  pseudo-variables $\{\sRV_{[\alpha]}, \emptyset \neq \alpha\subseteq
  \N\}$ and $V_\N$ such that
  \begin{align}
    H(\sRV_{[\alpha]}) &\ge h(\alpha),  \quad \alpha\subseteq\N, \\
    H(\sRV_{[\alpha]} : \alpha\subseteq\N) &= \sum_{\alpha \subseteq
      \N} H(\sRV_{[\alpha]}) \\
    H(V_i) &\le h(i).
  \end{align}
  Following the same steps as in the proof for Theorem \ref{thm:firstDuality}
  (translating random variables to pseudo-variables), shows that $h$ is
  the pseudo-entropy function of $V_\N$. Hence, $h$ is a polymatroid.

  To prove the direct part, suppose $h$ is a polymatroid over the
  ground set $\groundset = \{V_1, V_2, \dots, V_N\}$ (i.e. $h$ is the
  pseudo-entropy function of $V_\N$). We must exhibit a set of
  pseudo-variables satisfying the set of (in)equalities
  (\ref{eqn:outerbd}).  Whereas the proof for Theorem
  \ref{thm:firstDuality} constructs auxiliary random variables via
  data compression, we need to show how to analogously adhere
  auxiliary pseudo-variables $W, W''$ etc. to the set of
  pseudo-variables $V_\N$. In contrast to random variables, we cannot
  rely on coding theorems, or other probabilistic constructions that
  assume the existence of an underlying probability
  distribution. Nevertheless, it is possible to adhere
  pseudo-variables. This is accomplished in Appendix \ref{app:LPbd},
  where proof of the direct part is also completed.
\end{proof}

\subsection{Fourth Duality: Ingleton polymatroids and the LP bound for
  linear codes?}
\label{sec:fourthDuality}
Finally, we can consider rate-capacity tuples which satisfy the
LP-Ingleton bound of Definition \ref{def:LPIngleton}. The following
theorem establishes a relation to Ingleton polymatroids (i.e., a
polymatroid satisfying Ingleton inequalities). This is shown in one
direction only. Let $\Gamma_{LP,I}$ be the set of all Ingleton polymatroids.
\begin{theorem}\label{thm:LPIngletonbd}
  Suppose $h\in\espace[\N]$. If a rate-capacity tuple $(\inputRate(h),
  \edgeRate(h))$ satisfies the LP bound for linear codes, then $h$ is
  an Ingleton polymatroid, i.e.,
  \begin{equation*}
     \multicastProblem(h)\in\Upsilon_{LP,I} \Rightarrow h\in\Gamma_{LP,I}.
  \end{equation*}
\end{theorem}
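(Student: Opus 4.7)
The plan is to reduce Theorem \ref{thm:LPIngletonbd} to Theorem \ref{thm:LPbd} together with an elementary restriction argument. Suppose that $\multicastProblem(h) \in \Upsilon_{LP,I}$. By Definition \ref{def:LPIngleton}, there exists a set of pseudo-variables $\{\sRV_{[\alpha]} : \emptyset \neq \alpha \subseteq \N\} \cup \{\edgeRV_e : e \in \edges\}$ defined on the constructed network $\graph^\dagger$ with connection requirement $\multicastRequirement^\dagger$ which satisfies simultaneously the connection constraint (\ref{eqn:outerbd}) and the Ingleton inequality (\ref{eq:ingleton}) on every four-element subset of pseudo-variables. In particular, this collection satisfies (\ref{eqn:outerbd}), so the pseudo-variables include the ``coded'' pseudo-variables $V_1, \dots, V_N$ appearing in Figure \ref{fig:part1}.

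First I would invoke the only-if direction of Theorem \ref{thm:LPbd}: since the LP-Ingleton bound is strictly stronger than the LP bound, $\multicastProblem(h) \in \Upsilon_{LP,I} \subseteq \Upsilon_{LP}$ and therefore $h$ is a polymatroid. More precisely, the proof of the only-if part of Theorem \ref{thm:LPbd} shows that for the chosen pseudo-variables one has $H(V_\alpha) = h(\alpha)$ for every $\alpha \subseteq \N$. Thus $h$ is exactly the pseudo-entropy function of the set $V_\N = \{V_1, \dots, V_N\}$ regarded as a subset of the underlying pseudo-variable system.

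Next, I would observe that Ingleton is a \emph{hereditary} constraint: if the pseudo-entropy function on the full collection satisfies the Ingleton inequality on every four-element subcollection, then in particular it satisfies Ingleton on every choice $\{V_i, V_j, V_k, V_\ell\} \subseteq V_\N$. Since the pseudo-entropies of the $V_\alpha$ coincide with $h(\alpha)$, this immediately translates into the inequality
\begin{align*}
h(i,j) + h(i,k) + h(i,\ell) + h(j,k) + h(j,\ell) &\ge h(i) + h(j) + h(k,\ell) \\
&\quad + h(i,j,k) + h(i,j,\ell)
\end{align*}
for every choice of four distinct indices $i, j, k, \ell \in \N$. Combined with polymatroidality, this means $h \in \Gamma_{LP,I}$, which is the desired conclusion.

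There is no real obstacle here beyond carefully invoking the earlier theorem: the work was already done in establishing $H(V_\alpha) = h(\alpha)$ in the proof of Theorem \ref{thm:LPbd}, and the Ingleton constraint on $h$ is simply the restriction of the Ingleton constraints assumed on the enlarged pseudo-variable system. The one-directional nature of the statement reflects the fact that the converse would require constructing a pseudo-variable extension of $h$ that preserves Ingleton on every four-tuple of the larger system, which (unlike the polymatroid case handled via adhesion in Appendix \ref{app:LPbd}) is not known to be always possible.
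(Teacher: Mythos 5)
Your proposal is correct and follows essentially the same route as the paper: extract the Ingleton pseudo-variables guaranteed by Definition~\ref{def:LPIngleton}, rerun the only-if argument of Theorem~\ref{thm:LPbd} (i.e.\ the pseudo-variable translation of Theorem~\ref{thm:firstDuality}) to get $H(V_\alpha)=h(\alpha)$, and then inherit the Ingleton inequalities by restriction to $V_\N$. You are in fact slightly more careful than the paper in two places it leaves implicit: that the argument must be applied to the \emph{same} pseudo-variable system that carries the Ingleton constraints, and that Ingleton is hereditary under restriction to the subcollection $V_\N$.
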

\begin{proof}
  Suppose $(\inputRate(h), \edgeRate(h) )$ satisfies the LP-Ingleton
  bound. By Definition \ref{def:LPIngleton} there exists a set of Ingleton
  pseudo-variables satisfying the set of (in)equalities in
  (\ref{eqn:outerbd}). In particular, there are pseudo-variables
  $\{\sRV_{[\alpha]}, \emptyset \neq \alpha\subseteq \N\}$ and $V_\N$
  such that
  \begin{align}
    H(\sRV_{[\alpha]}) &\ge h(\alpha),  \quad \alpha\subseteq\N, \\
    H(\sRV_{[\alpha]} : \alpha\subseteq\N) &= \sum_{\alpha \subseteq
      \N} H(\sRV_{[\alpha]}) \\
    H(V_i) &\le h(i).
  \end{align}
  Following the same steps as in the proof for Theorem \ref{thm:firstDuality}
  (translating random variables to pseudo-variables), shows that $h$
  is the pseudo-entropy function of $V_\N$. Hence, $h$ is an Ingleton
  polymatroid.
\end{proof}

We conjecture that the converse of the fourth duality should also
hold. In fact, it can be proved that if the converse fails to hold,
then there exists a polymatroid satisfying Ingleton inequalities but
which is not almost linear group characterizable. Therefore
determination of whether the converse of the fourth duality holds is a
very interesting open question.

\section{Implications}\label{sec:applicationA}
The results of Section \ref{sec:result} while interesting in their own
right, have several consequential applications. First, in Section
\ref{sec:multicast} we consider implications to the determination of
the network coding capacity region (in the absence of any restriction
on the class of network codes). Secondly, we discuss the sub-optimality
of linear network codes in Section \ref{sec:implications}.

\subsection{The capacity region}\label{sec:multicast}
\begin{implication}[Hardness of a multicast problem]
  Determination of the set of achievable source rate-link capacity
  tuples $\Upsilon^\epsilon$ is at least as hard as the problem of
  determining the set of all almost entropic functions.

  Similarly, determination of the set of source rate-link capacity
  tuples achieved by linear network codes
  $\Upsilon^\epsilon_{L(q)}$ is at least as hard as the
  problem of determining the set of all almost linear group
  characterizable entropy functions.
\end{implication}
\begin{proof}
  By Theorem \ref{thm:FirstDualityExtension}, a polymatroid $h$ is
  almost entropic (and almost linear group characterizable) if and
  only if the induced rate-capacity tuple $(\inputRate(h),
  \edgeRate(h) )$ is achievable (with linear network codes). In other
  words, the problem of determining the set of all almost entropic
  (and almost linear group characterizable) functions can be reduced
  to the solubility of a corresponding multicast problem.
\end{proof}

In \cite{Dougherty.Freiling.ea07matroids}, a network, called the V\'amos
network, was constructed from the V\'amos matroid. This was later used to
prove that the LP bound is not tight and the bound can be
tightened by applying a non-Shannon information inequality proved
in \cite{Zhang.Yeung98characterization}.

In the following, we will use the duality results obtained in Section
\ref{sec:result} to provide another proof for the looseness of LP
bound.
\begin{implication}[Looseness of LP bound]
  The LP outer bound can be tightened by any non-Shannon information
  inequality.
\end{implication}
\begin{proof}
  Theorem \ref{thm:LPbd} shows that the rate-capacity tuple
  $(\inputRate(h), \edgeRate(h))$ is in the LP bound if $h$ is a
  polymatroid. Yet, Theorem \ref{thm:FirstDualityExtension} proves
  that $(\inputRate(h), \edgeRate(h) )$ is achievable if and only if
  $h$ is almost entropic.  Consider the function $h$ defined as
  follows \cite{Zhang.Yeung98characterization}:
  \begin{align*}
    h(1) &= h(2) = h(3) = (4)  = 2a > 0 \\
    h(1, 2) &= 3a \\
    h(3, 4) &= 4a  \\
    h(1, 3) &=   h(1, 4) =   h(2, 3) =  h(2, 4) = 3a  \\
    h(i, j, k) &= 4a = h(1, 2, 3, 4), \mbox{ $\forall$ distinct $i,j,k$}.
  \end{align*}
  It can be verified directly that $h\in\Gamma_4$. However, the
  non-Shannon information inequality obtained in
  \cite{Zhang.Yeung98characterization} shows that
  $h\not\in\bar{\Gamma}_4^*$.  While the rate-capacity tuple
  $\multicastProblem(h)$ satisfies the LP bound, it is not achievable,
  as it is not almost entropic.
\end{proof}
Using the same argument, any non-Shannon information inequality
\cite{Zhang.Yeung98characterization,Dougherty.Freiling.ea06six,Matus07infinitely}
will remove some polymatroids which are not almost entropic. The
corresponding tuples in the LP bound will not be achievable. In other
words, \emph{any set of non-Shannon information inequalities can be
  used to tighten the LP bound}.

In fact, together with the fact that $\bar\Gamma^*$ is not a
polyhedron when the number of random variables is at least
four~\cite{Matus07infinitely}, our duality results lead to very
interesting consequences.

First, we show that the set of achievable rate-capacity tuples is not
a polyhedron in general. Second, the LP bound is not only loose, but
it remains loose even when tightened via application of any finite
number of linear non-Shannon information inequalities.

\begin{proposition}\label{prop:matus}
  The set of almost entropic functions is not a polytope.
\end{proposition}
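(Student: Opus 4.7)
The plan is to derive this proposition directly from the infinite family of non-Shannon inequalities established in \cite{Matus07infinitely}. The essential observation is that a (closed convex) polyhedral cone in a finite-dimensional Euclidean space is defined by finitely many supporting half-spaces, so if $\bar{\Gamma}^*$ were polyhedral it would admit a finite characterization by linear information inequalities.

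First I would recall the setting: $\bar{\Gamma}^*\subseteq\espace[\N]$ is a closed convex cone (as established in \cite{Yeung97framework}), hence if it were a polytope, or more precisely a polyhedral cone, it would equal the intersection of finitely many closed half-spaces $\{h\in\espace[\N]\where\ell_i(h)\geq 0\}$, $i=1,\ldots,m$, where each $\ell_i$ is a linear functional corresponding to a valid information inequality. Every valid linear information inequality would then be expressible as a non-negative combination of these finitely many $\ell_i$.

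Next I would invoke the main result of \cite{Matus07infinitely}, which exhibits, for $N=4$ random variables, a countably infinite family of linear information inequalities that are mutually independent in the sense that no finite subfamily, together with the Shannon (polymatroid) inequalities, implies all of them. Equivalently, this family cuts out infinitely many distinct facets, or produces a strictly curved portion of the boundary of $\bar{\Gamma}^*$, so that $\bar{\Gamma}^*$ for $N=4$ is not polyhedral. Embedding this four-variable example into $\espace[\N]$ for larger $N$ (by treating the remaining variables as trivial) shows the same conclusion for every $N\geq 4$.

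The main obstacle, and the step where I would be most careful, is the logical passage from existence of infinitely many non-Shannon inequalities to non-polyhedrality: one must argue that no finite system of Shannon plus non-Shannon inequalities can carve out $\bar{\Gamma}^*$. The cleanest way is to quote the statement of \cite{Matus07infinitely} already phrased as ``$\bar{\Gamma}^*$ is not polyhedral for $N\geq 4$'', since Mat\'u\v{s} proves precisely this by constructing a smooth curve on the boundary of the entropy cone whose supporting hyperplanes all differ. With that in hand, the proposition follows: the set of almost entropic functions cannot be a polytope (polyhedral cone), since polyhedrality would force only finitely many distinct supporting hyperplanes, contradicting \cite{Matus07infinitely}.
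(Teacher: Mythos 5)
Your argument is correct, and in substance it rests on the same foundation as the paper's: both are ultimately appeals to Mat\'u\v{s} \cite{Matus07infinitely}. The difference is in which part of that work is invoked. The paper's proof is a sketch of Mat\'u\v{s}'s actual mechanism: a convergent sequence of entropic functions $g_t\to g_0$ with one-sided tangent $\dot{g}_{0+}=\lim_{t\to 0^+}(g_t-g_0)/t$, together with the observation that a polyhedral cone would have to contain $g_0+\epsilon\dot{g}_{0+}$ for some $\epsilon>0$, which is ruled out because that point violates the new inequalities. You instead start from ``polyhedral cone $\Rightarrow$ finitely many defining half-spaces'' and try to contradict this with the infinite family of inequalities; you correctly identify the weak point of that framing --- infinitude of valid inequalities alone proves nothing unless no finite subsystem (with the Shannon inequalities) implies the rest --- and you resolve it by quoting Mat\'u\v{s}'s conclusion in its non-polyhedrality form, which is legitimate but makes your proof purely a proof by citation, whereas the paper at least records the tangent/boundary-curve argument that does the real work (and which the paper then reuses directly in the proof that $\Upsilon^\infty$ and $\Upsilon^\epsilon$ are not polytopes, via linearity of $\multicastProblem(\cdot)$). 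Your embedding step for $N>4$ (adjoining trivial variables, i.e.\ intersecting $\bar{\Gamma}^*$ with the hyperplanes forcing the extra variables to carry no entropy, which preserves polyhedrality) is a sound addition that the paper leaves implicit. One minor point of hygiene you already handle: the proposition says ``polytope'' where ``polyhedral cone'' is meant, since $\bar{\Gamma}^*$ is an unbounded cone; stating this explicitly, as you do, is the right reading.
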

\begin{proof}[Proof sketch] The following is a sketch of the proof
  given by Mat\'{u}\v{s} \cite{Matus07infinitely}.  Mat\'{u}\v{s}
  constructed a convergent sequence of entropic functions $g_t\to g_0$
  with one-side tangent $\dot{g}_{0+}\defined\lim_{t\to 0^+} (g_t -
  g_0)/t$.  Clearly, if $\bar{\Gamma}^*_n$ is polyhedral, there exists
  $\epsilon>0$ such that $g_0 + \epsilon
  \dot{g}_{0+}\in\bar{\Gamma}^*_n$. This was shown not to be the case,
  since $g_0 + \epsilon \dot{g}_{0+} $ violates some of the
  information inequalities proved in
  \cite{Matus07infinitely}. Therefore, $\bar{\Gamma}^*_n$ is not
  polyhedral. Furthermore, there are infinitely many information
  inequalities.
  \end{proof}

\begin{implication}[Set of achievable rate-capacity tuples]\label{thm:capacitynotpolytope}
  The sets of achievable rate-capacity tuples $\Upsilon^\infty$ and
  $\Upsilon^\epsilon$ for the network $ \graph^\dagger$ and connection
  requirement $\multicastRequirement^\dagger $ are not polytopes (when
  $N\ge 4$).
\end{implication}
\begin{proof}
  Consider the sequence $g_t\to g_0$ from the proof of Proposition
  \ref{prop:matus}.  By Theorem \ref{thm:FirstDualityExtension},
  $\multicastProblem(g_t)$ and $\multicastProblem(g_0)$ are
  asymptotically admissible. As $\multicastProblem(h)$ is a linear
  function of $h$, we have
\begin{align}
 \dot{T} \triangleq \lim_{t\to 0^+} (\multicastProblem(g_t) - \multicastProblem(g_0))/t = \multicastProblem(\dot{g}_{0+}).
\end{align}
For any $\epsilon>0$,
\begin{align}
\multicastProblem(g_0) + \epsilon \dot{T} = \multicastProblem(g_0  + \epsilon \dot{g}_{0+} ).
\end{align}
As $ g_0 + \epsilon \dot{g}_{0+} $ is not almost entropic,
$\multicastProblem(g_0) + \epsilon \dot{T}$ is not achievable. In
other words, $\Upsilon^\infty$ and $\Upsilon^\epsilon$ are not
polytope.
\end{proof}

Now the LP bound is a polytope, while the capacity region is
not. Furthermore, the introduction of any finite number of additional
linear inequalities in the LP bound simply results in another
polytope. Hence
\begin{implication}[Looseness of polyhedral bounds]
  The LP bound is not tight. Furthermore, any finite number of linear
  information inequalities cannot tighten the LP bound $\Upsilon_{LP}$
  to the set of achievable rate-capacity tuples $\Upsilon^\epsilon$.
  In fact, any polyhedral outer bound for $\Upsilon^\epsilon$ is not
  tight.
\end{implication}
\begin{proof}
  A direct consequence of Theorem \ref{thm:capacitynotpolytope} and
  Proposition \ref{prop:matus}.
\end{proof}

\subsection{Suboptimality of linear network codes}\label{sec:implications}
As discussed in Section \ref{sec:algebraic}, it may be practically
desirable to use network codes with nice algebraic properties that
simplify encoding and decoding operations. Most algebraic network
codes considered in the literature are linear, and these were shown in
\cite{Li.Yeung.ea03linear} to be optimal for single session multicast.

Since the appearance of \cite{Li.Yeung.ea03linear}, it has been an open
question as to whether linear network codes are in general
optimal. This question was recently answered in the negative by
Dougherty et. al \cite{Dougherty.Freiling.ea05insufficiency}. Their
proof constructs a special network containing two subnetworks such
that the base fields required for optimality by each of the
subnetworks have different characteristics, establishing a
contradiction.

The following provides an alternative proof using a completely
different approach, making use of the duality between entropy
functions and achievability established in Section \ref{sec:result}.
The proof is an immediate consequence of the duality results and that
some entropic functions are not almost linear group characterizable.

\begin{implication}[Suboptimality of linear network codes]
  There is a network and a connection requirement such that the use of
  abelian network codes is suboptimal, including linear network codes,
  $R$--module codes, and time-sharing of such.
\end{implication}
\begin{proof}
  Consider a set of four random variables $U_1, U_2,
  U_3, U_4$ constructed using the projective plane described in
  \cite{Zhang.Yeung98characterization}. The entropy function of these
  random variables is
  \begin{align*}
     h(1) &= h(2) = h(3) = (4)  = \log 13  \\
     h(1, 2) &= \log 6 + \log 13 \\
     h(3, 4) &= \log 13 + \log 12  \\
     h(1, 3) &=   h(1, 4) =   h(2, 3) =  h(2, 4) = \log 13 + \log 4  \\
     h(i, j, k) &= \log 13 + \log 12  = h(1, 2, 3, 4), \mbox{ $\forall$ distinct $i,j,k$}.
  \end{align*}
  Since $h$ is the entropy function of a set of random variables,
  $\multicastProblem(h)$ is achievable, by Theorem
  \ref{thm:FirstDualityExtension}. Since $h$ does not satisfy the
  Ingleton inequality
  \begin{multline}
      h(1 , 2) + h(1, 3) + h(1 , 4) +  h(2 , 3) + h(2 , 4)
    \ge  \\
      h(1 ) + h( 2) + h(3 , 4) +   h(1 , 2 , 3)
      + h(1 , 2 ,  4),
  \end{multline}
  $h$ is not almost linear group characterizable. By Theorem
  \ref{thm:SecondDualityExtension}, $\multicastProblem(h)$ is not
  achievable by linear network codes.
\end{proof}

\begin{implication}[Suboptimality of abelian group network codes]
  There is a network and a multicast requirement for which
  abelian  codes are (asymptotically) suboptimal.
\end{implication}
\begin{proof}
  All abelian group characterizable entropy function must satisfy the
  Ingleton inequality. The corollary then follows.
\end{proof}

\section{Conclusion}\label{sec:conclusion}
Entropy functions and network coding are already closely connected,
through the network coding capacity region which is expressed in terms
of $\Gamma^*$. The main results of this paper, summarized in
Figure~\ref{fig:summary}, further strengthens this
connection. Figure~\ref{fig:summary} shows the inclusion
relationships of the various sets of interest, as well as the
implications between set membership of $h$ and
$\multicastProblem(h)$ established by the theorems. Each arrow is
labeled by the Theorem number which establishes the relation. Note
that the relation of $\con(\Gamma^*_{L(q)})$ to sets other than
$\Gamma^*_{L(q)}$ shown in Figure~\ref{fig:summary1} is unknown, hence
the linear code relationships are shown separately in
Figure~\ref{fig:summary2}.
\begin{figure}[htbp]
  \centering
  \subfigure[\label{fig:summary1}]{\includegraphics[scale=0.8]{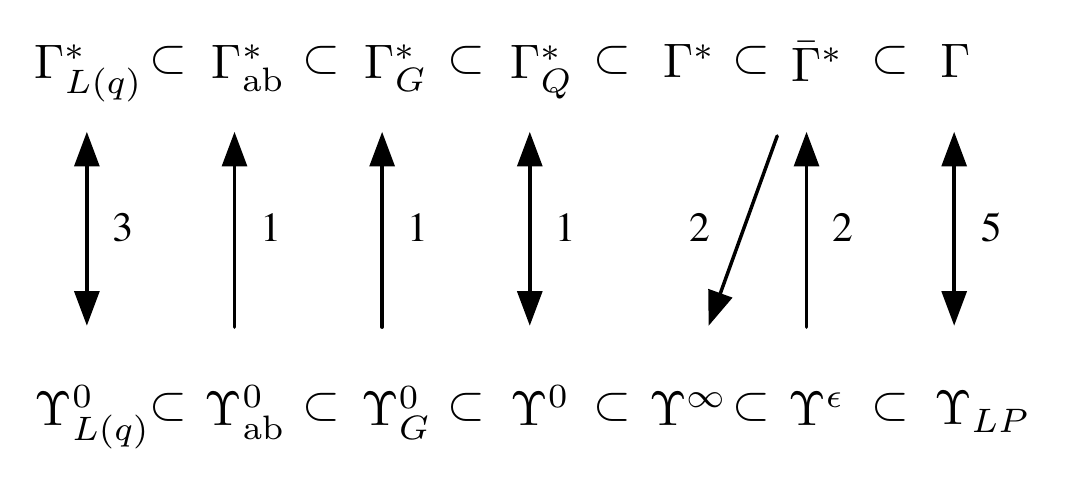}}
\hspace{5mm}
  \subfigure[Linear codes.\label{fig:summary2}]{\includegraphics[scale=0.8]{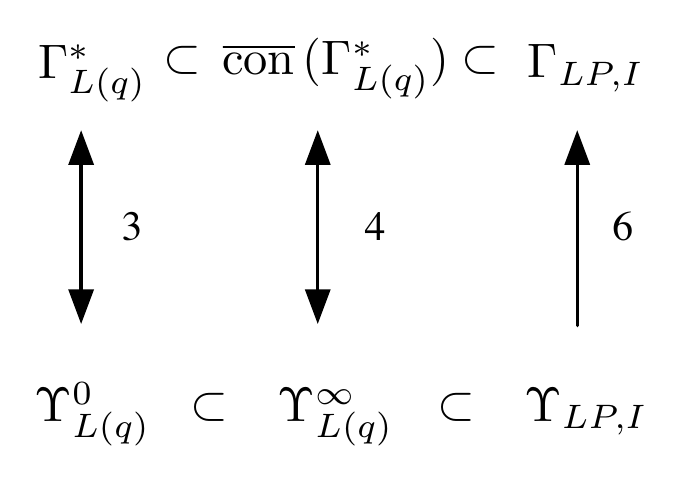}}
  \caption{Summary of the duality results.}
  \label{fig:summary}
\end{figure}

Given a non-negative real function $g$ whose domain
consists of all non-empty subsets of $N$ random variables, we have
provided a construction for a network and a connection requirement
such that a rate-capacity tuple is achievable if and only if $g$ is
almost entropic (i.e. satisfies every information inequality). The
network topology depends only on the number of random variables, and
not on the function $g$, which affects the construction only through
the assignment of source rates and link capacities.

An extension of this result shows that a rate-capacity tuple for the
constructed multicast problem is achievable by linear network codes if
and only if the entropy function $g$ is almost linear group
characterizable.  A further extension shows that the induced
rate-capacity tuple satisfies the linear programming bound if and only
if the function $g$ is a polymatroid (i.e. satisfies all Shannon-type
inequalities). This extension is obtained using the concept of
pseudo-variables, which replace random variables in the domain of
$g$. These pseudo-variables are abstract objects that do not take any
values, and are not associated with any probability distribution. The
key is that polymatroids defined over set of pseudo-variables behave
very similar to entropy functions, except that they lie in $\Gamma$
rather than $\Gamma^*$. This definition of pseudo-variables is not
just a matter of terminology. It is a non-trivial matter to generalize
notions of extension and adhesion of random variables (which rely on
the existence of a probability distribution) to pseudo-variables. We
provided some examples of such extensions and adhesions, which leaves
the proof of the main theorem intact under a substitution of
pseudo-variables for random variables. We anticipate that this concept
of pseudo-variables, and their differences from random variables, may
yet bear more fruit in uncovering the structure of $\Gamma^*$

The seemingly simple duality between entropy vectors and network codes
has a number of powerful implications. It renders the problems of
network code solubility is at least as hard as determination of
$\bar{\Gamma}^*$.  We also obtain alternate proofs that the LP bound
is not tight, and that non-Shannon inequalities such as the
Zhang-Yeung inequality indeed tighten the LP bound. However no
additional finite number of inequalities can improve the LP bound to
the capacity region. Finally, we have proved the suboptimality of
abelian network codes, including linear codes, $R$-module codes and
any scheme that time-shares between such codes. The duality result
also provides a tool to compare different classes of network
codes. Rather than comparing the codes directly, one can now compare
the sets of entropy functions induced by the codes.

\section*{Acknowledgement}
This work was supported in part by the Australian Government under ARC
grant DP0557310, and by the Defence Science and Technology
Organisation under contracts 4500485167 and 4500550654.

\appendices

\section{Proof for Converse of Theorem \ref{thm:LPbd}}
\label{app:LPbd}

Before we prove the direct part of Theorem \ref{thm:LPbd}, we will
prove some intermediate results which show how to \emph{extend} sets
of pseudo-variables (build new pseudo-variables from old ones), and
how to \emph{adhere} additional pseudo-variables to a given set of
pseudo-variables (consistently join two sets of
pseudo-variables). These results are provided in Section
\ref{sec:adherence}. The proof of Theorem \ref{thm:LPbd} follows in
Section \ref{sec:proofLPbd}.

\subsection{Adhesion and extension for pseudo-variables}
\label{sec:adherence}
For random variables, adhesion or extension is facilitated by the
existence of an underlying probability distribution. For example,
consider two sets of random variables $ \groundset = \{X,U\}$ and
$\groundset^* = \{X,W\}$ with respective underlying distributions
$P_{XU}$ and $P^*_{XW}$. Suppose that the marginals over $X$ coincide,
$P_X=P^*_X$.  We can then easily adhere $P_{XU}$ and $P^*_{XW}$ to
obtain a new distribution $Q_{XUW}$ such that its marginals over
$\groundset$ and $\groundset^*$ coincide, $Q_{XU}=P_{XU}$ and
$Q_{XW}=P^*_{XW}$. One possibility is $Q_{XUV}=P_{XU} P^*_{XW}/P_X$.
In general, for any sets of random variables $ \groundset$ and
$\groundset^*$ with respective distributions $P$ and $P^*$ coinciding
on $ \groundset \cap \groundset^*$, we can construct a new
distribution over $\groundset \cup \groundset^*$ such that its
marginals over $\groundset$ and $\groundset^*$ are $P$ and $P^*$.
Clearly, the entropy function for $\groundset \cup \groundset^*$ is an
extension of those belonging to $\groundset$ and $ \groundset^*$.

Consider another simple example. Let $\myalpha\subset\groundset$ be a
subset of the random variables $\groundset$. Then we can define a new
random variable $W \triangleq \myalpha$. By doing so, we have
constructed a new variable, and extended both the distribution and
entropy function. Clearly there are various ways to adhere or extend
sets of random variables. Doing this for pseudo-variables is not so
straightforward. The following results provide several adhesion and
extension methods for pseudo-variables.

\begin{lemma}[Functional extension]\label{lemm:functionadhesion}
  Let $\groundset$ be a set of pseudo-variables. For any given
  $\myalpha\subseteq\groundset$, one can adhere a new pseudo-variable
  $Y$ to $\groundset$ such that $H(Y|\myalpha) = H(\myalpha|Y) =0$.
  In other words, there exists a polymatroid $g$ over $\groundset \cup
  \{Y\}$ satisfying
  \begin{align}
    g(\mybeta) &= H(\mybeta) \quad\forall \mybeta\subseteq\groundset \label{eq:fa:1}\\
    g(Y) &= g(\myalpha) = g(\{Y\} \cup \myalpha).\label{eq:fa:2}
  \end{align}
\end{lemma}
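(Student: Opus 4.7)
The plan is to define $g$ explicitly by copying the values of $H$, using $\myalpha$ as a proxy whenever $Y$ appears. Concretely, for every $\mybeta\subseteq\groundset$, I would set
\begin{equation*}
g(\mybeta)\defined H(\mybeta),\qquad g(\mybeta\cup\{Y\})\defined H(\mybeta\cup\myalpha).
\end{equation*}
This definition immediately gives \eqref{eq:fa:1}, and taking $\mybeta=\emptyset$, $\mybeta=\myalpha$ yields $g(Y)=H(\myalpha)=g(\myalpha)=g(\{Y\}\cup\myalpha)$, which is \eqref{eq:fa:2}. So the whole job reduces to checking that this $g$ is a polymatroid on $\groundset\cup\{Y\}$.

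Next I would verify the three polymatroid axioms. Non-negativity and $g(\emptyset)=0$ are immediate from the same properties of $H$. For monotonicity, take $\mybeta_1\subseteq\mybeta_2$: if neither or both contain $Y$, the inequality $g(\mybeta_1)\le g(\mybeta_2)$ is a direct restatement of the monotonicity of $H$ applied to the corresponding subsets (with $\myalpha$ appended on both sides when $Y\in\mybeta_1$). The only slightly non-trivial case is $\mybeta_1\subseteq\groundset$ and $\mybeta_2=\mybeta_2'\cup\{Y\}$ with $\mybeta_1\subseteq\mybeta_2'$; here $g(\mybeta_1)=H(\mybeta_1)\le H(\mybeta_2'\cup\myalpha)=g(\mybeta_2)$ because $\mybeta_1\subseteq\mybeta_2'\cup\myalpha$.

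Submodularity is the main technical step, and I would argue by cases on which of $\mybeta_1,\mybeta_2$ contain $Y$. When $Y\notin\mybeta_1\cup\mybeta_2$, submodularity of $g$ is literally submodularity of $H$. When $Y\in\mybeta_1\cap\mybeta_2$, write $\mybeta_i=\mybeta_i'\cup\{Y\}$; then both sides of the submodular inequality become the submodular inequality for $H$ on the pair $\mybeta_1'\cup\myalpha,\mybeta_2'\cup\myalpha$, using the identities $(\mybeta_1'\cup\myalpha)\cup(\mybeta_2'\cup\myalpha)=\mybeta_1'\cup\mybeta_2'\cup\myalpha$ and $(\mybeta_1'\cup\myalpha)\cap(\mybeta_2'\cup\myalpha)=(\mybeta_1'\cap\mybeta_2')\cup\myalpha$. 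The mixed case, say $Y\in\mybeta_1$ and $Y\notin\mybeta_2$ with $\mybeta_1=\mybeta_1'\cup\{Y\}$, is the only place where submodularity of $H$ alone is not enough: applying it to $\mybeta_1'\cup\myalpha$ and $\mybeta_2$ produces $H((\mybeta_1'\cup\myalpha)\cap\mybeta_2)$ on the right-hand side, and I then use monotonicity of $H$ together with $\mybeta_1'\cap\mybeta_2\subseteq(\mybeta_1'\cup\myalpha)\cap\mybeta_2$ to replace it by $H(\mybeta_1'\cap\mybeta_2)=g(\mybeta_1\cap\mybeta_2)$, yielding the desired inequality.

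The main obstacle, such as it is, lies precisely in this mixed case: one must be careful that adding $Y$ on only one side does not violate submodularity, and the fix is the monotonicity step above. Once that is verified, all polymatroid axioms are satisfied and the construction realizes $Y$ as a ``pseudo-copy'' of $\myalpha$ in the sense required by \eqref{eq:fa:1}--\eqref{eq:fa:2}, proving the lemma.
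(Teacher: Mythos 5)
Your construction is exactly the one used in the paper (define $g(\mybeta)=H(\mybeta)$ and $g(\mybeta\cup\{Y\})=H(\mybeta\cup\myalpha)$ for $\mybeta\subseteq\groundset$), and your case-by-case verification of monotonicity and submodularity, including the mixed case handled via monotonicity of $H$, correctly fills in what the paper dismisses as ``straightforward.'' The proposal is correct and follows essentially the same approach.
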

\begin{proof}
  Define $g$ over $\groundset\cup \{Y \} $ such that for all
  $\mybeta\subseteq\groundset$,
  \begin{align}\label{eqn:functionadhesion}
    g(\mybeta)   = H(\mybeta)    \mbox{ and }
    g(\{Y\} \cup\mybeta)  = H(\mybeta \cup \myalpha).
\end{align}
It is straightforward to show  that $g$ is a polymatroid satisfying
\eqref{eq:fa:1} and \eqref{eq:fa:2}.
\end{proof}
In light of Definition \ref{def:function}, we shall refer to
(\ref{eqn:functionadhesion}) as functional extension and denote the
new variable as $J_\myalpha$. Clearly, any subset of pseudo-variables
in $\myalpha$ is a function of $J_\myalpha$.

\begin{lemma}[Sum extension]\label{lem:sumextension}
  Let $\{X,Y\}$ be a set of pseudo-variables such that $H(X) = H(Y)$
  and $X\perp Y$. Then one can adhere a new pseudo-variable $Z$ to
  $\{X,Y\}$ such that $H(Z)=H(X)$ and $H(Z|X,Y)=H(X|Y,Z)= H(Y|X,Z)=
  0$.
\end{lemma}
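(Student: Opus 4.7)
The plan is to exhibit an explicit polymatroid $g$ on the extended ground set $\{X,Y,Z\}$ that agrees with the original pseudo-entropy function on subsets of $\{X,Y\}$ and forces each of the three claimed conditional entropies to vanish. The guiding picture is the random-variable case in which $X$ and $Y$ are independent uniform random variables on a finite abelian group of size $2^h$ and $Z=X+Y$: any two of $X,Y,Z$ determine the third, while each singleton has entropy $h\defined H(X)=H(Y)$. I would transcribe this entropy profile directly into a polymatroid, bypassing any probabilistic construction.

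Concretely, I would define $g$ by
\begin{align*}
g(\emptyset) &= 0,\\
g(\{X\}) = g(\{Y\}) = g(\{Z\}) &= h,\\
g(\{X,Y\}) = g(\{X,Z\}) = g(\{Y,Z\}) = g(\{X,Y,Z\}) &= 2h.
\end{align*}
Since $X\perp Y$ and $H(X)=H(Y)=h$, we have $H(X,Y)=2h$, so $g$ restricted to $2^{\{X,Y\}}$ coincides with the original pseudo-entropy function, meaning $g$ is a genuine extension.

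The main verification is that $g$ satisfies the polymatroid axioms of Definition~\ref{def:polymat}. Nonnegativity and $g(\emptyset)=0$ are immediate, and monotonicity is clear from the table above. For submodularity I would tabulate the pairs $(\myalpha,\mybeta)$; the only substantive cases pair two doubletons, e.g.\ $\{X,Y\}$ and $\{X,Z\}$, where the inequality reads $2h+2h\ge 2h+h$, and the remaining cases reduce to $h+h\ge 2h+0$ or similar trivial estimates. Given $g$, the three target identities drop out as one-line subtractions: $H(Z\mid X,Y)=g(\{X,Y,Z\})-g(\{X,Y\})=0$, and by symmetry $H(X\mid Y,Z)=H(Y\mid X,Z)=0$, while $H(Z)=g(\{Z\})=h=H(X)$.

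The main conceptual obstacle, rather than the finite arithmetic, is recognizing that such a construction can be carried out at all in the absence of a probability distribution. In the random-variable world the existence of a $Z=X+Y$ satisfying these relations requires abelian group structure on the supports of $X$ and $Y$; for pseudo-variables the table above exhibits the required adhesion purely polyhedrally, reflecting the theme of Appendix~\ref{app:LPbd} that polymatroids enjoy strictly more freedom than entropy functions, so no probabilistic gluing is needed.
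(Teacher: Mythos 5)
Your construction is exactly the one the paper uses: setting $g(Z)=g(X)=h$ and giving every set containing at least two of $X,Y,Z$ the value $g(X,Y)=2h$, then checking the polymatroid axioms and reading off the three vanishing conditional pseudo-entropies. The only difference is that you spell out the submodularity verification that the paper leaves as a one-line assertion, so the proposal is correct and follows essentially the same route.
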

\begin{proof}
  Let $g$ be the pseudo-entropy function for $\{X,Y\}$. Extend $g$
  such that $g(Z) = g(X)$ and $g(X,Z) = g(Y,Z) = g(X,Y,Z) =
  g(X,Y)$. The resulting extended $g$ is still a polymatroid.
\end{proof}
Lemma \ref{lem:sumextension} shows that for any independent
pseudo-variables $X$ and $Y$ of equal pseudo-entropies, one can
construct a pseudo-variable $Z$, denoted $Z = X\oplus Y$ such that its
pseudo-entropy is the same as $X$ and $Y$, and any single
pseudo-variable is a function of the two others. Structurally, this
mimics the modulo-2 addition of two i.i.d binary random variables.

\begin{lemma}[SW extension]\label{lemm:SWadhesion}
  Let $\{X,Y\}$ be two pseudo-variables. Then one can adhere a new
  pseudo-variable $Z$ to $\{X,Y\}$ such that
  \begin{align*}
    H(Z) &= H(X|Y), \\
    H(X|Z,Y) &= 0, \\
    H(Z|X) &= 0.
  \end{align*}
\end{lemma}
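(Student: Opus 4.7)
The plan is to exhibit an explicit extended pseudo-entropy function $g'$ on the enlarged ground set $\{X,Y,Z\}$ whose restriction to $\{X,Y\}$ coincides with the given $g$, and which satisfies the three prescribed identities by construction. Once $g'$ is shown to be a polymatroid, the desired adhesion is delivered.

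First I would fix $g'(\alpha)=g(\alpha)$ for every $\alpha\subseteq\{X,Y\}$. The three target conditions dictate the remaining four values: $H(Z|X)=0$ forces $g'(X,Z)=g(X)$; $H(Z)=H(X|Y)$ forces $g'(Z)=g(X,Y)-g(Y)$; and $H(X|Y,Z)=0$ forces $g'(X,Y,Z)=g'(Y,Z)$. The natural minimal choice $g'(X,Y,Z)=g(X,Y)$ (the smallest value consistent with $g'(X,Y,Z)\ge g'(X,Y)=g(X,Y)$) then also fixes $g'(Y,Z)=g(X,Y)$. All eight subset values of $g'$ are thus determined.

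Next I would verify that $g'$ is a polymatroid by reducing each axiom to a known property of $g$. Writing $a=g(X)$, $b=g(Y)$, $c=g(X,Y)$, the four new values are $g'(Z)=c-b$, $g'(X,Z)=a$, $g'(Y,Z)=c$, $g'(X,Y,Z)=c$. Non-negativity of $g'(Z)$ is monotonicity of $g$, and the only nontrivial new monotonicity check, $g'(Z)\le g'(X,Z)$, becomes $c-b\le a$, which is the submodularity $a+b\ge c$ of $g$. All pairwise submodularity checks involving $Z$ collapse, after a short calculation, either to $a\ge 0$, to the monotonicity $c\ge b$ of $g$, or again to $a+b\ge c$; the checks not involving $Z$ are inherited from $g$ verbatim.

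The main obstacle is purely bookkeeping: keeping track of eight subset values and verifying that every monotonicity and submodularity inequality of $g'$ reduces to one already known for $g$, a finite case analysis over at most six non-nested pairs. The three target identities $H(Z)=H(X|Y)$, $H(X|Y,Z)=0$, and $H(Z|X)=0$ then hold automatically since they were used to define $g'$. Conceptually, this follows the same recipe as Lemmas \ref{lemm:functionadhesion} and \ref{lem:sumextension}: prescribe the minimal extension that saturates the desired conditional-entropy equations, and check polymatroid closure by direct inspection.
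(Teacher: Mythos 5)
Your extension is exactly the one the paper uses ($g'(Z)=g(X,Y)-g(Y)$, $g'(X,Z)=g(X)$, $g'(Y,Z)=g'(X,Y,Z)=g(X,Y)$), and your verification that the result is a polymatroid --- which the paper simply asserts --- is correct, with every new monotonicity/submodularity check reducing to monotonicity or submodularity of $g$ on $\{X,Y\}$. This is essentially the same proof, just with the routine case analysis spelled out.
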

\begin{proof}
  Let $g$ be the pseudo-entropy of $\{X,Y\}$ and extend it as follows:
  $g(Z)=g(X,Y) - g(Y)$, $g(Z,Y)=g(X,Y,Z)=g(X,Y)$, and $g(X,Z) =
  g(X)$. The resulting extended $g$ is still a polymatroid.
\end{proof}
Lemma \ref{lemm:SWadhesion} shows that starting with pseudo-variables
$X,Y$, one can construct another pseudo-variable $Z$ with
pseudo-entropy $H(X,Y) - H(Y)$ such that $X$ is a function of $Y,Z$
and $Z$ is a function of $X$.  For simplicity, we use the symbol
$J_{X|Y}$ to denote the new pseudo-variable $Z$.

Lemmas \ref{lemm:functionadhesion}--\ref{lemm:SWadhesion} show that
sets of pseudo-variables can be explicitly extended to obtain new
pseudo-variables. In the following, we study adhesion of existing sets
of pseudo-variables.

\begin{lemma}[Independent adhesion]\label{lemm:indpendentadhesion}
  Let $\groundset$ and $\groundset^*$ be two disjoint sets of
  pseudo-variables. Then they can adhere to each other
  \emph{independently} such that for any $\myalpha \subseteq
  \groundset \cup \groundset^*$,
  \begin{equation}\label{eqn:independentAdhesion}
    H(\myalpha) = H(\myalpha \cap  \groundset ) + H(\myalpha \cap
    \groundset^* ).
  \end{equation}
\end{lemma}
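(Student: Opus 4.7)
The plan is to exhibit the adhesion explicitly by the most natural formula and then verify it satisfies the polymatroid axioms. Let $g_1$ denote the pseudo-entropy function on $\groundset$ and $g_2$ the pseudo-entropy function on $\groundset^*$; by hypothesis both are polymatroids on their respective ground sets. Define a candidate $g$ on $\groundset \cup \groundset^*$ by
\begin{equation*}
  g(\myalpha) \defined g_1(\myalpha \cap \groundset) + g_2(\myalpha \cap \groundset^*),
  \qquad \myalpha \subseteq \groundset \cup \groundset^*.
\end{equation*}
Once $g$ is shown to be a polymatroid, equation \eqref{eqn:independentAdhesion} holds by the very definition of $g$, and restricting to $\myalpha \subseteq \groundset$ (respectively $\myalpha \subseteq \groundset^*$) recovers $g_1$ (respectively $g_2$), so $g$ is a legitimate extension of the two given pseudo-entropy functions.

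The verification of the polymatroid axioms of Definition \ref{def:polymat} is essentially routine, so I would just record the three checks in order. For \eqref{poly:zero}, note $\emptyset \cap \groundset = \emptyset \cap \groundset^* = \emptyset$, so $g(\emptyset) = g_1(\emptyset)+g_2(\emptyset)=0$. For non-decreasing \eqref{poly:nondec}, if $\mybeta\subseteq\myalpha$ then $\mybeta \cap \groundset \subseteq \myalpha \cap \groundset$ and $\mybeta \cap \groundset^* \subseteq \myalpha \cap \groundset^*$, and monotonicity of $g_1$ and $g_2$ gives $g(\mybeta)\le g(\myalpha)$. For submodularity \eqref{poly:submod}, the key elementary observation is that intersection with the fixed set $\groundset$ (and similarly $\groundset^*$) distributes over union and intersection:
\begin{align*}
  (\myalpha \cup \mybeta) \cap \groundset &= (\myalpha \cap \groundset) \cup (\mybeta \cap \groundset), \\
  (\myalpha \cap \mybeta) \cap \groundset &= (\myalpha \cap \groundset) \cap (\mybeta \cap \groundset),
\end{align*}
and analogously for $\groundset^*$. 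Applying the submodularity of $g_1$ and $g_2$ separately to the $\groundset$- and $\groundset^*$-parts and summing the two inequalities yields $g(\myalpha)+g(\mybeta) \ge g(\myalpha\cup\mybeta) + g(\myalpha\cap\mybeta)$.

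There is really no hard step here. The only thing worth being careful about is that $\groundset$ and $\groundset^*$ are assumed disjoint, which is what allows the intersection-with-$\groundset$/intersection-with-$\groundset^*$ decomposition to be a partition of any $\myalpha \subseteq \groundset \cup \groundset^*$ and therefore makes $g_1$ and $g_2$ act on disjoint pieces that can simply be added without double-counting. Once this is noted, the proof reduces to the three short set-theoretic manipulations above.
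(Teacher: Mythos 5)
Your proposal is correct and matches the paper's own proof exactly: both define $g(\myalpha) = g_1(\myalpha \cap \groundset) + g_2(\myalpha \cap \groundset^*)$ and observe that this is a polymatroid. You simply spell out the routine verification of the axioms that the paper leaves to the reader.
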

\begin{proof}
  Let $g$ and $g^*$ be the pseudo-entropies of $\myalpha$ and
  $\myalpha^*$, and for each $\myalpha \subseteq \groundset \cup
  \groundset^*$ set $g(\myalpha) = g(\myalpha \cap \groundset ) +
  g^*(\myalpha \cap \groundset^* )$. It can be verified that
  $g$ is a polymatroid.
\end{proof}
Any subsets $\myalpha \subseteq \groundset$ and $\mybeta \subseteq
\groundset^*$ are independent, $\myalpha\indep\mybeta$ under the
independent adhesion of $\groundset$ and $\groundset^*$ in Lemma
\ref{lemm:indpendentadhesion}. Before we continue with more
complicated adhesions, we need the following proposition from
\cite{Matus07adhesivity}.
\begin{proposition}
  Let $\groundset$ and $\groundset^*$ be two sets of pseudo-variables
  coinciding over $\groundset^\prime \defined \groundset \cap
  \groundset^*$, i.e. for all $\myalpha\subseteq \groundset^\prime $,
  the pseudo-entropy of $\myalpha$ is the same with respect
  $\groundset$ and $\groundset^*$. Further, suppose
  \begin{align}\label{eqn:flat}
    \Delta(\myalpha,\mybeta) \ge \Delta(\groundset^\prime \cap
    \myalpha, \groundset^\prime \cap \mybeta),
  \end{align}
  for all flats\footnote{A subset $\myalpha$ of the ground set
    $\groundset$ is a \emph{flat} if $H(\myalpha^\prime) >
    H(\myalpha)$ for all proper supersets $\myalpha^\prime$ containing
    $\myalpha$.}  $\myalpha,\mybeta$ of $\groundset$ where
  $\Delta(\myalpha,\mybeta) \defined H(\myalpha) + H(\mybeta) -
  H(\myalpha\cup\mybeta) - H(\myalpha\cap\mybeta)$. Then $\groundset$
  and $\groundset^*$ can adhere to each other.
\end{proposition}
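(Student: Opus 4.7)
The plan is to construct an explicit pseudo-entropy function $g$ on $2^{\groundset \cup \groundset^*}$ and verify that it is a polymatroid whose restrictions to $\groundset$ and $\groundset^*$ reproduce the given ones. The natural candidate, familiar from polymatroid amalgamation, is
\begin{equation*}
  g(\myalpha) \defined H(\myalpha \cap \groundset) + H(\myalpha \cap \groundset^*) - H(\myalpha \cap \groundset^\prime),
\end{equation*}
which is unambiguous because the subtracted term refers to a subset of $\groundset^\prime$ on which the two given pseudo-entropy functions agree by hypothesis. Restriction is immediate: if $\myalpha \subseteq \groundset$ then $\myalpha \cap \groundset^* = \myalpha \cap \groundset^\prime$, so the last two terms cancel and $g(\myalpha)$ reproduces the original pseudo-entropy on $\groundset$; the case $\myalpha \subseteq \groundset^*$ is symmetric.

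What remains is to verify that $g$ is a polymatroid. The boundary condition $g(\emptyset)=0$ is immediate, and monotonicity follows from a short calculation using monotonicity and submodularity of the two given pseudo-entropy functions (the submodularity of the common restriction to $\groundset^\prime$ handles the subtracted term). The substantive axiom is submodularity. Writing $\myalpha_1 \defined \myalpha \cap \groundset$, $\myalpha_2 \defined \myalpha \cap \groundset^*$, $\myalpha^\prime \defined \myalpha \cap \groundset^\prime$ and similarly for $\mybeta$, a direct expansion gives
\begin{equation*}
  g(\myalpha) + g(\mybeta) - g(\myalpha \cup \mybeta) - g(\myalpha \cap \mybeta) = \Delta(\myalpha_1,\mybeta_1) + \Delta(\myalpha_2,\mybeta_2) - \Delta(\myalpha^\prime,\mybeta^\prime),
\end{equation*}
so submodularity of $g$ amounts to showing $\Delta(\myalpha_1,\mybeta_1) + \Delta(\myalpha_2,\mybeta_2) \ge \Delta(\myalpha^\prime,\mybeta^\prime)$. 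Since $\Delta(\myalpha_2,\mybeta_2) \ge 0$ by submodularity of the pseudo-entropy on $\groundset^*$, it suffices to establish $\Delta(\myalpha_1,\mybeta_1) \ge \Delta(\myalpha^\prime,\mybeta^\prime)$ for all subsets $\myalpha_1,\mybeta_1 \subseteq \groundset$.

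The main obstacle is that hypothesis (\ref{eqn:flat}) supplies this inequality only when $\myalpha_1,\mybeta_1$ are flats of $\groundset$, whereas the required statement must hold for arbitrary subsets. To bridge the gap I would pass to closures: replacing $\myalpha_1$ by its smallest containing flat in $\groundset$ leaves the pseudo-entropy, and hence the left-hand side $\Delta(\myalpha_1,\mybeta_1)$, unchanged, and the hypothesis then applies directly to the closed versions. The delicate point that will demand the most care is controlling how the closure operation inside $\groundset$ interacts with intersection against $\groundset^\prime$, so that the reduction from the closed sets back to $\Delta(\myalpha^\prime,\mybeta^\prime)$ is monotone in the correct direction. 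This is precisely the place where the coincidence of the two pseudo-entropy functions on $\groundset^\prime$ is genuinely used, and is the crux of the whole argument.
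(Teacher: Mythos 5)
First, a point of orientation: the paper does not actually prove this proposition --- it is quoted as Theorem~1 of Mat\'{u}\v{s}'s ``Adhesivity of polymatroids'' and the proof is a bare citation. Your proposal is therefore an attempt to supply the missing argument from scratch. The skeleton is right: the modular amalgam $g(\myalpha)=H(\myalpha\cap\groundset)+H(\myalpha\cap\groundset^*)-H(\myalpha\cap\groundset')$ is indeed the object one wants (it is also what makes the adhesion ``modular'' in Mat\'{u}\v{s}'s sense), your restriction check is correct, the expansion of the submodularity defect into $\Delta(\myalpha_1,\mybeta_1)+\Delta(\myalpha_2,\mybeta_2)-\Delta(\myalpha',\mybeta')$ is correct, and dropping the non-negative middle term to reduce everything to the single inequality $\Delta(\myalpha_1,\mybeta_1)\ge\Delta(\myalpha_1\cap\groundset',\mybeta_1\cap\groundset')$ for \emph{arbitrary} $\myalpha_1,\mybeta_1\subseteq\groundset$ is the correct reduction given that the hypothesis constrains only $\groundset$.

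The gap is that the step you defer --- passing from flats to arbitrary subsets via closures --- is not a routine verification, and the specific route you sketch does not go through. Writing $\mathrm{cl}(\cdot)$ for the closure in $\groundset$, the first half works: $H(\mathrm{cl}(\myalpha_1))=H(\myalpha_1)$, $H(\mathrm{cl}(\myalpha_1)\cup\mathrm{cl}(\mybeta_1))=H(\myalpha_1\cup\mybeta_1)$ and $H(\mathrm{cl}(\myalpha_1)\cap\mathrm{cl}(\mybeta_1))\ge H(\myalpha_1\cap\mybeta_1)$, so $\Delta(\myalpha_1,\mybeta_1)\ge\Delta(\mathrm{cl}(\myalpha_1),\mathrm{cl}(\mybeta_1))$ and the hypothesis applies to the closures. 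But the return trip requires
$\Delta\bigl(\mathrm{cl}(\myalpha_1)\cap\groundset',\mathrm{cl}(\mybeta_1)\cap\groundset'\bigr)\ge\Delta\bigl(\myalpha_1\cap\groundset',\mybeta_1\cap\groundset'\bigr)$,
and this fails for two independent reasons: $\mathrm{cl}(\myalpha_1)\cap\groundset'$ is in general a strict pseudo-entropy enlargement of $\myalpha_1\cap\groundset'$ (closure in $\groundset$ does not restrict to closure in $\groundset'$ --- an element of $\groundset'$ can lie in $\mathrm{cl}(\myalpha_1)$ while carrying positive entropy given $\myalpha_1\cap\groundset'$), and $\Delta$ is not monotone under enlarging its arguments, so there is no ``correct direction'' to exploit. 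A similar non-triviality infects your parenthetical monotonicity claim: for $x\in\groundset'$ the top marginal of $g$ is $H(x\mid\groundset\setminus\{x\})+H(x\mid\groundset^*\setminus\{x\})-H(x\mid\groundset'\setminus\{x\})$, and submodularity of the two given polymatroids forces each of the first two terms to be \emph{at most} the third, so non-negativity again needs the adhesion hypothesis rather than a short calculation. The extension of \eqref{eqn:flat} from flats to all subsets is precisely the substantive content of Mat\'{u}\v{s}'s theorem; as written, your proposal identifies where that content lives but does not supply it.
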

\begin{proof}
See Theorem 1 in \cite{Matus07adhesivity}.
\end{proof}

\begin{corollary}\label{cor:thmadhere}
  Let $\groundset = \{X,Y,Z\}$ be a set of pseudo-variables, such that
  $Z$ is a function of $X,Y$ and $X$ is a function of $Y,Z$. Let
  $\groundset^*$ be another set of pseudo-variables such that $\groundset$ and $\groundset^*$ coincide over
  $\groundset \bigcap  \groundset^* = \{X,Y\}$. Then $\groundset^*$ and $\groundset$
  can adhere to each other.
\end{corollary}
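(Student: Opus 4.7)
The plan is to apply the Matúš adhesivity proposition stated immediately before the corollary. For that, I must verify the flat inequality
$$\Delta(\myalpha, \mybeta) \geq \Delta(\groundset' \cap \myalpha, \groundset' \cap \mybeta)$$
with $\groundset' = \groundset \cap \groundset^* = \{X,Y\}$, for every pair of flats $\myalpha, \mybeta$ of $\groundset = \{X,Y,Z\}$.

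First I would enumerate the flats of $\groundset$. The hypotheses $H(Z\mid X,Y) = 0$ and $H(X\mid Y,Z) = 0$ immediately yield $H(X,Y) = H(Y,Z) = H(X,Y,Z)$. Consequently $\{X,Y\}$ and $\{Y,Z\}$ are \emph{not} flats, since the proper superset $\{X,Y,Z\}$ has the same pseudo-entropy. The only candidates for flats are therefore $\emptyset$, the singletons $\{X\}$, $\{Y\}$, $\{Z\}$, possibly $\{X,Z\}$ (when $Y$ is not a function of $(X,Z)$), and the full set $\{X,Y,Z\}$.

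Next I would check the flat inequality pairwise on this short list. The cases where one side is $\emptyset$ or $\{X,Y,Z\}$ force both sides of the inequality to be $0$, and the same is true when both $\myalpha$ and $\mybeta$ are subsets of $\groundset'$. The only nontrivial cases are those in which $Z$ enters at least one side. Using $H(X,Y) = H(Y,Z) = H(X,Y,Z)$, these reduce to a handful of elementary Shannon-type inequalities: the pair $(\{X\},\{Z\})$ gives $I(X;Z) \geq 0$; the pair $(\{Y\},\{Z\})$ gives $I(Y;Z) \geq 0$; the pair $(\{Z\},\{X,Z\})$ again gives $0 \geq 0$; and the most delicate case $(\{Y\},\{X,Z\})$ reduces to $I(Y;X,Z) \geq I(X;Y)$, which follows from $I(Y;X,Z) = I(X;Y) + I(Y;Z\mid X)$ and the nonnegativity of conditional mutual information. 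All of these are polymatroid axioms and therefore hold in the pseudo-variable setting.

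I do not expect a substantive obstacle. The only things requiring care are (i) a complete enumeration of the flats, where the functional-dependence hypotheses do most of the work by collapsing $\{X,Y\}$ and $\{Y,Z\}$ to the full set, and (ii) the bookkeeping in the asymmetric case $(\{Y\},\{X,Z\})$ where $Z$ appears on one side but is discarded on intersection with $\groundset'$. Once the pairwise check is complete, the adhesivity proposition delivers a joint polymatroid on $\groundset \cup \groundset^*$ whose restrictions agree with the pseudo-entropies of $\groundset$ and $\groundset^*$, which is exactly the adhesion claimed by the corollary.
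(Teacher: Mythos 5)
Your proposal is correct and takes essentially the same route as the paper: it invokes the Mat\'{u}\v{s} adhesivity proposition, uses the functional dependences to rule out $\{X,Y\}$ and $\{Y,Z\}$ as flats, and checks the flat inequality pairwise, with the only substantive case being $(\{Y\},\{X,Z\})$, which you settle via $I(Y;X,Z) = I(X;Y) + I(Y;Z\mid X) \ge I(X;Y)$ just as the paper does by direct verification. The only omission is the nested pair $(\{X\},\{X,Z\})$, which is harmless since both sides of the flat inequality vanish there.
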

\begin{proof}
  It is easy to verify that $\{X,Y\}$ and $\{Y,Z\}$ cannot be flats of
  $\groundset$. To prove the corollary, it suffices to prove that
  (\ref{eqn:flat}) is satisfied for all flats of $\groundset$.

  Suppose that $\myalpha$ and $\mybeta$ are flats of $\groundset$. If
  either $\myalpha$ or $\mybeta$ is the empty set, $\{Z\}$ or
  $\{X,Y,Z\}$, then either $\groundset^\prime \cap\myalpha
  \subseteq\groundset^\prime \cap\mybeta$ or $\groundset^\prime
  \cap\mybeta\subseteq\groundset^\prime \cap\myalpha $. As a result,
  $\Delta(\groundset^\prime \cap \myalpha, \groundset^\prime \cap
  \mybeta)=0$ and (\ref{eqn:flat}) holds. On the other hand, if both
  $\myalpha$ and $\mybeta$ are subsets of $\{X, Y\}$, then it is
  obvious that (\ref{eqn:flat}) remains true.  Now, suppose $\myalpha
  = \{X,Z\}$. Then (\ref{eqn:flat}) holds for $\mybeta = \{X\}$ or
  $\{X,Z\}$. Finally, when $\myalpha = \{X,Z\}$ and $\mybeta = \{Y\}$,
  by direct verification, (\ref{eqn:flat}) still holds.  Combining all
  the cases, we see that (\ref{eqn:flat}) indeed holds for all flats of
  $\groundset$.
\end{proof}
Corollary \ref{cor:thmadhere} directly leads to the following result.
\begin{theorem}\label{thm:adhere}
  Let $\groundset^*\supseteq \{X,Y\}$.  Then one can adhere the
  pseudo-variable $Z=J_{X|Y}$ to $\groundset^*$.

  If in addition $H(X)=H(Y)$, it is possible adhere a pseudo-variable
  $Z=X\oplus Y$ to $\groundset^*$.
\end{theorem}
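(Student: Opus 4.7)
The plan is to reduce the theorem to Corollary \ref{cor:thmadhere} by first constructing a three-element pseudo-variable set containing $\{X,Y\}$ together with the desired $Z$, and then adhering this small set to $\groundset^{*}$ along the common subset $\{X,Y\}$. Since we already possess the adhesion machinery of Mat\'u\v{s} as restated in the corollary, the only real work is verifying that the small constructions from Lemmas \ref{lem:sumextension} and \ref{lemm:SWadhesion} satisfy the functional-dependence hypothesis required by that corollary.

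First I would apply Lemma \ref{lemm:SWadhesion} (respectively Lemma \ref{lem:sumextension} in the case $H(X)=H(Y)$) to the pseudo-variable pair $\{X,Y\}$ considered as a subset of $\groundset^{*}$. This produces a polymatroid $g$ on the three-element ground set $\groundset \defined \{X,Y,Z\}$ whose restriction to $\{X,Y\}$ agrees with the pseudo-entropy function of $\groundset^{*}$ on $\{X,Y\}$, and in which $Z$ plays the role of $J_{X|Y}$ (respectively $X\oplus Y$). Next I would check the two functional dependencies demanded by Corollary \ref{cor:thmadhere}: that $Z$ is a function of $\{X,Y\}$ and that $X$ is a function of $\{Y,Z\}$. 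For $Z=J_{X|Y}$ these are exactly the relations $H(Z\mid X)=0$ (which is stronger than $H(Z\mid X,Y)=0$) and $H(X\mid Z,Y)=0$ listed in Lemma \ref{lemm:SWadhesion}. For $Z=X\oplus Y$ they are $H(Z\mid X,Y)=0$ and $H(X\mid Y,Z)=0$, both among the conclusions of Lemma \ref{lem:sumextension}.

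With the hypothesis verified, I would invoke Corollary \ref{cor:thmadhere} with this $\groundset=\{X,Y,Z\}$ and with the given $\groundset^{*}$, noting that $\groundset\cap\groundset^{*}=\{X,Y\}$ and that the two pseudo-entropy functions coincide there by construction. The corollary then yields a polymatroid on $\groundset^{*}\cup\{Z\}$ whose marginals on $\groundset^{*}$ and on $\{X,Y,Z\}$ recover the two original pseudo-entropy functions. In particular, the new pseudo-variable $Z$ adhered to $\groundset^{*}$ retains exactly the pseudo-entropy relations with $X$ and $Y$ that characterize $J_{X|Y}$, respectively $X\oplus Y$, which is precisely the content of the theorem.

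I do not expect any serious obstacle: the construction of the auxiliary three-element polymatroid is explicit, and the flat-by-flat inequality \eqref{eqn:flat} needed by Mat\'u\v{s}'s adhesivity theorem has already been checked once and for all inside Corollary \ref{cor:thmadhere} under exactly the two functional-dependence hypotheses that both of our cases verify. The only point requiring mild care is bookkeeping the directions of the dependencies $H(Z\mid X,Y)=0$ and $H(X\mid Y,Z)=0$ so that the roles of $X$ and $Z$ in the corollary are matched to the construction, rather than swapped.
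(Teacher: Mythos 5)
Your proposal is correct and follows essentially the same route the paper intends: the paper gives no separate proof of Theorem \ref{thm:adhere}, stating that it follows directly from Corollary \ref{cor:thmadhere}, and your argument spells out exactly that reduction — build the three-element polymatroid on $\{X,Y,Z\}$ via Lemma \ref{lemm:SWadhesion} (resp.\ Lemma \ref{lem:sumextension}), note it coincides with $\groundset^*$ on $\{X,Y\}$, check the two functional dependencies, and invoke the corollary. The only detail worth flagging is that the $X\oplus Y$ case also needs $X\indep Y$ as in Lemma \ref{lem:sumextension} (an omission in the theorem statement itself, not in your argument).
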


\subsection{Proof for direct part of Theorem \ref{thm:LPbd}}
\label{sec:proofLPbd}
\begin{proof}
  To prove the direct part, we must exhibit a set of
  pseudo-variables satisfying the set of (in)equalities
  (\ref{eqn:outerbd}). Our construction works as follows:

  \begin{itemize}
  \item
  Let $ V_1, \dots, V_N$ be pseudo-variables whose pseudo-entropy function is $h$.

  \item
  By Lemma \ref{lemm:functionadhesion}, we can adhere $\sRV_{[\N]}\defined J_\groundset$ to $\groundset=\{
  V_1, \dots, V_n\}$.

  \item
  For any non-empty subset $\alpha$ of $\N$, let  $\sRV_{[\N]}$ be a pseudo-variable whose pseudo-entropy is $H(V_\alpha)$.

  \item
  By Lemma \ref{lemm:indpendentadhesion}, we adhere independent pseudo-variables $\sRV_{[\alpha]}$ to the current set of pseudo-variables $\{V_1, \dots, V_N, \sRV_{[\N]} \}$.

  \item
  By Theorem \ref{thm:adhere}, we can further adhere
  auxiliary pseudo-variables such as $J_{V_\alpha}$,
  $J_{\sRV_{[\N]}|J_{V_\alpha}}$, $J_{V_\alpha} \oplus \sRV_{[\alpha]}$
  etc.
  \end{itemize}

  Now, we will show how to associate pseudo-variables to edges.  If
  the edge is uncapacitated, then the associated pseudo-variable is
  the join of the set of pseudo-variables incident to that edge. It
  remains to show that for the three subnetworks, we can adhere
  pseudo-variables meeting all the constraints of the LP bound.

  Consider type 0 subnetworks. Let $W = \sRV_{[\alpha]}$. Then, (\ref{eqn:outerbd}) clearly
  holds. In type 1 subnetworks let $W=J_{\sRV_{[\N ]} | J_{V_\alpha}}$ and $W^\prime = J_{V_\alpha}$. Again,
  (\ref{eqn:outerbd}) holds.  Finally, for type 2 subnetworks, let $W
  = \sRV_{[\alpha]} \oplus J_{V_\alpha}$,
    $W^\prime = J_{\sRV_{[\N ]} | J_{V_\alpha}}$,
  $W^{\prime\prime} =   J_{J_{V_\alpha}|V_i}$, and
  $W^* = W^{**} = J_{V_\alpha}$. By direct
  verification, the set of (in)equalities (\ref{eqn:outerbd}) holds.
\end{proof}

\newpage

\begin{thebibliography}{10}
\providecommand{\url}[1]{#1}
\csname url@rmstyle\endcsname
\providecommand{\newblock}{\relax}
\providecommand{\bibinfo}[2]{#2}
\providecommand\BIBentrySTDinterwordspacing{\spaceskip=0pt\relax}
\providecommand\BIBentryALTinterwordstretchfactor{4}
\providecommand\BIBentryALTinterwordspacing{\spaceskip=\fontdimen2\font plus
\BIBentryALTinterwordstretchfactor\fontdimen3\font minus
  \fontdimen4\font\relax}
\providecommand\BIBforeignlanguage[2]{{%
\expandafter\ifx\csname l@#1\endcsname\relax
\typeout{** WARNING: IEEEtran.bst: No hyphenation pattern has been}%
\typeout{** loaded for the language `#1'. Using the pattern for}%
\typeout{** the default language instead.}%
\else
\language=\csname l@#1\endcsname
\fi
#2}}

\bibitem{Yeung02first}
R.~Yeung, \emph{A First Course in Information Theory}.\hskip 1em plus 0.5em
  minus 0.4em\relax Kluwer Academic/Plenum Publishers, 2002.

\bibitem{Zhang.Yeung98characterization}
Z.~Zhang and R.~W. Yeung, ``On the characterization of entropy function via
  information inequalities,'' \emph{IEEE Trans. Inform. Theory}, vol.~44, pp.
  pp. 1440--1452, 1998.

\bibitem{Zhang.Yeung97nonshannon}
------, ``A non-{Shannon}-type conditional information inequality of
  information quantities,'' \emph{IEEE Trans. Inform. Theory}, vol.~43, pp.
  1982--1986, Nov. 1997.

\bibitem{Yeung.Zhang01class}
R.~W. Yeung and Z.~Zhang, ``A class of non-{Shannon}-type information
  inequalities and their applications,'' \emph{Communications in Information
  and Systems}, vol.~1, pp. 87--100, 2001.

\bibitem{Yeung.Zhang01classisit}
------, ``A class of non-{Shannon}-type information inequalities and their
  applications,'' in \emph{IEEE Int. Symp. Inform. Theory}, Washington, DC,
  2001, p. 231.

\bibitem{Sason02identification}
I.~Sason, ``Identification of new classes of non-{S}hannon type constrained
  information inequalities and their relation to finite groups,'' in \emph{IEEE
  Int. Symp. Inform. Theory}, Lausanne, Switzerland, 2002, p. 236.

\bibitem{Matus06piecewise}
F.~Mat\'u\v{s}, ``Piecewise linear conditional information inequality,''
  \emph{IEEE Trans. Inform. Theory}, vol.~52, pp. 236--238, Jan. 2006.

\bibitem{Makarychev02new}
K.~Makarychev, Y.~Makarychev, A.~Romashchenko, and N.~Vereshchagin, ``A new
  class of non-{Shannon}-type inequalities for entropies,''
  \emph{Communications in Information and Systems}, vol.~2, no.~2, pp.
  147--165, Dec. 2002.

\bibitem{Dougherty.Freiling.ea06six}
R.~Dougherty, C.~Freiling, and K.~Zeger, ``Six new non-{S}hannon information
  inequalities,'' in \emph{IEEE Int. Symp. Inform. Theory}, July 2006, pp.
  233--236.

\bibitem{Matus07infinitely}
F.~Mat\'{u}\v{s}, ``Infinitely many information inequalities,'' in \emph{IEEE
  Int. Symp. Inform. Theory}, 2007.

\bibitem{Yeung.Zhang99distributed}
R.~W. Yeung and Z.~Zhang, ``Distributed source coding for satellite
  communications,'' \emph{IEEE Trans. Inform. Theory}, vol.~45, pp. 1111--1120,
  May 1999.

\bibitem{Chan.Yeung02relation}
T.~H. Chan and R.~W. Yeung, ``On a relation between information inequalities
  and group theory,'' \emph{IEEE Trans. Inform. Theory}, vol.~48, pp.
  1992--1995, 2002.

\bibitem{Hammer00inequalities}
D.~Hammer, A.~E. Romashchenko, A.~Shen, and N.~K. Vereshchagin, ``Inequalities
  for {S}hannon entropy and {K}olmogorov complexity,'' \emph{J. Comp. Syst.
  Sci.}, vol.~60, pp. 442--464, 2000.

\bibitem{Romashchenko00combinatorial}
A.~Romashchenko, N.~Vereshchagin, and A.~Shen, ``Combinatorial interpretation
  of {K}olmogorov complexity,'' in \emph{15th Annual IEEE Conf. Computational
  Complexity}, Florence, Italy, July 2000, pp. 131--137.

\bibitem{Ahlswede.Cai.ea00network}
R.~Ahlswede, N.~Cai, S.-Y.~R. Li, and R.~W. Yeung, ``Network information
  flow,'' \emph{IEEE Trans. Inform. Theory}, vol.~46, no.~4, pp. 1204--1216,
  July 2000.

\bibitem{Li.Yeung.ea03linear}
S.-Y.~R. Li, R.~Yeung, and N.~Cai, ``Linear network coding,'' \emph{IEEE Trans.
  Inform. Theory}, vol.~49, no.~2, pp. 371--381, Feb. 2003.

\bibitem{Dana.Gowaikar.ea05capacity}
A.~F. Dana, R.~Gowaikar, R.~Palanki, B.~Hassibi, and M.~Effros, ``Capacity of
  wireless erasure networks,'' \emph{IEEE Trans. Inform. Theory}, vol.~52, pp.
  789--804, March 2006.

\bibitem{Cai.Yeung02secure}
N.~Cai and R.~W. Yeung, ``Secure network coding,'' in \emph{IEEE Int. Symp.
  Inform. Theory}, Lausanne, Switzerland, 2002, p. 323.

\bibitem{Lun.Ratanakar.ea05minimum-cost}
D.~S. Lun, N.~Ratanakar, M.~M\'edard, R.~Koetter, D.~R. Karger, T.~Ho, and
  E.~Ahmed, ``Minimum-cost multicast over coded packet networks,'' \emph{IEEE
  Trans. Inform. Theory}, vol.~52, no.~6, pp. 2608--2623, Jun. 2006.

\bibitem{Dougherty.Freiling.ea05insufficiency}
R.~Dougherty, C.~Freiling, and K.~Zeger, ``Insufficiency of linear coding in
  network information flow,'' \emph{IEEE Trans. Inform. Theory}, vol.~51,
  no.~8, pp. 2745--2759, Aug. 2005.

\bibitem{Song.Yeung.ea03zero-error}
L.~Song, R.~Yeung, and N.~Cai, ``Zero-error network coding for acyclic
  networks,'' \emph{IEEE Trans. Inform. Theory}, vol.~49, no.~12, pp.
  3129--3139, Dec. 2003.

\bibitem{Yeung.Li.ea06network}
R.~W. Yeung, S.-Y.~R. Li, N.~Cai, and Z.~Zhang, \emph{Network Coding Theory},
  ser. Foundations and Trends in Communications and Information Theory.\hskip
  1em plus 0.5em minus 0.4em\relax Now Publishers, 2006.

\bibitem{Yan.Yeung.Zhang07capacity}
X.~Yan, R.~W. Yeung, and Z.~Zhang, ``The capacity region for multi-source
  multi-sink network coding,'' in \emph{IEEE Int. Symp. Inform. Theory}, Nice,
  France, Jun. 2007, pp. 116--120.

\bibitem{Dougherty.Freiling.ea07matroids}
R.~Dougherty, C.~Freiling, and K.~Zeger, ``Matroids, networks, and
  non-{S}hannon information inequalities,'' \emph{IEEE Trans. Inform. Theory},
  2007.

\bibitem{Chan01combinatorial}
T.~H. Chan, ``A combinatorial approach to information inequalities,''
  \emph{Communications in Information and Systems}, vol.~1, pp. 1--14, 2001.

\bibitem{Chan1998}
------, ``Aspects of information inequalities and its applications,'' Master's
  thesis, The Chinese University of Hong Kong, 1998.

\bibitem{Chan07group}
------, ``Group characterizable entropy functions,'' 2007,
  \url{arxiv.org/cs.IT/0702064}.

\bibitem{Chan07capacity}
------, ``Capacity regions for linear and abelian network code,'' in
  \emph{NETCOD}, San Diego, USA, 2007.

\bibitem{Chan05capacity}
------, ``Capacity region of probabilistic network codes,'' in \emph{Canadian
  Workshop Inform. Theory}, Montreal, Canada, June 2005, pp. 167--170.

\bibitem{Yeung97framework}
R.~Yeung, ``A framework for linear information inequalities,'' \emph{IEEE
  Trans. Inform. Theory}, vol.~43, no.~6, pp. 1924--1934, Nov. 1997.

\bibitem{Matus07adhesivity}
F.~Mat\'{u}\v{s}, ``Adhesivity of polymatroids,'' \emph{Discrete Math.}, 2007.

\end{thebibliography}

\end{document}